\newcommand{\mbf}{\mathbf}
\newcommand{\mcl}{\mathcal}
\newcommand{\und}{\underline}
\newtheorem{theorem}{Theorem}
\newtheorem{lemma}{Lemma}
\newtheorem{claim}{Claim}
\theoremstyle{definition}
\newtheorem{definition}{Definition}
\newtheorem{example}{Example}
\newtheorem{remark}{Remark}
\newtheorem{corollary}{Corollary}
\begin{document}

\title{Optimal Deterministic Polynomial-Time Data Exchange for Omniscience}
\author{
\IEEEauthorblockN{Nebojsa Milosavljevic, Sameer Pawar, Salim El Rouayheb, Michael Gastpar,\IEEEauthorrefmark{2}\thanks{\IEEEauthorrefmark{2}Also with the School of Computer and Communication Sciences, EPFL, Lausanne, Switzerland.}\\ and Kannan Ramchandran} \\
\IEEEauthorblockA{Department of Electrical Engineering and Computer Sciences \\
University of California, Berkeley \\
Email: \{nebojsa, spawar, salim, gastpar, kannanr\}@eecs.berkeley.edu}
\thanks{This research was funded by the NSF grants
(CCF-0964018, CCF-0830788), a DTRA grant (HDTRA1-09-1-0032), and in part by an
AFOSR grant (FA9550-09-1-0120).}
}

\maketitle

\begin{abstract}
We study the problem of constructing a deterministic polynomial time algorithm that achieves omniscience, in a
rate-optimal manner, among a set of users that are interested in a common file but each has only partial knowledge
about it as side-information. Assuming that the collective information among all the users is sufficient to allow
the reconstruction of the entire file, the goal is to minimize the (possibly weighted) amount of bits that these users need
to exchange over a noiseless public channel in order for all of them to learn the entire file. Using established connections to the  multi-terminal secrecy problem, our algorithm also implies  a polynomial-time method for constructing a maximum size secret shared key in the presence of an eavesdropper.

We consider the
following types of side-information settings: (i) side information in the form  of uncoded fragments/packets of the
file, where the users' side-information consists of subsets of the file;  (ii) side information in the  form of
linearly correlated packets, where the users have access to linear combinations of the file packets; and (iii) the
general setting where the the users' side-information has an arbitrary (i.i.d.) correlation structure. Building on
results from combinatorial optimization, we provide a polynomial-time algorithm (in the number of users) that, first
finds the optimal rate allocations among these users,  then determines an explicit transmission scheme (\emph{i.e.}, a
description of which user should transmit what information) for cases (i) and (ii).
\end{abstract}

\section{Introduction}\label{sec:intro}
In the recent years cellular systems  have witnessed significant improvements in terms of data rates and are nearly
approaching theoretical limits in terms of the physical layer spectral efficiency. At the same time the rapid growth in
the popularity of data-enabled mobile devices, such as smart phones and tablets, far beyond the early adoption stage,
and correspondingly the increasing  demand for more throughput are challenging our ability to meet this demand
even with the current highly efficient cellular systems. One of the major bottlenecks in scaling the throughput with the
increasing number of mobile devices is the last mile wireless link between the base station and the mobile devices -- a
resource that is shared among many users served within the cell. This motivates investigating new ways where cell phone
devices can possibly cooperate among themselves to get the desired data in a peer-to-peer fashion without solely relying on the base station.

\begin{figure}
\begin{center}
\psset{unit=0.40mm}
\begin{pspicture}(0,0)(170,90)
\small{
\psframe(60,10)(110,30)\rput(85,20){\small{Base Station}}
\rput(85,2){$\{w_1,w_2,w_3,w_4\}$}
\rput(-6,20){$\left\{
                \begin{array}{c}
                  w_2 \\
                  w_3 \\
                  w_4 \\
                \end{array}
              \right\}$}
\psframe(12,15)(37,25)
\rput(25,20){user $1$}
\psframe(133,15)(159,25)
\rput(146,20){user $3$}
\rput(177,20){$\left\{
                 \begin{array}{c}
                   w_1 \\
                   w_2 \\
                   w_4 \\
                 \end{array}
               \right\}
$}
\rput(85,80){$\left\{
                \begin{array}{c}
                  w_1 \\
                  w_3 \\
                \end{array}
              \right\}
$}
\psframe(72,52)(98,62)
\rput(85,57){user $2$}
\psline{->}(60,20)(37,20)
\psline{->}(110,20)(133,20)
\psline{->}(85,30)(85,52)
}
\end{pspicture}
\end{center}
\caption{An example of the data exchange problem. A base station has a file formed of four packets $w_1,\dots, w_4\in
\mathbb{F}_{q}$ and wants to  deliver it to three users over an unreliable wireless channel. The base station
stops transmitting once the users collectively have all the packets, but may individually have only subsets of the
packets. For instance, here the base station stops after user 1, user 2 and user 3 have respectively packets $\{w_2,
w_3, w_4\}, \{w_1, w_3\},$ and $\{w_1, w_2, w_4\}$, which can now be regarded as side information. The users can then
cooperate among themselves to recover their missing packets. Here, the 3 users can reconcile their file with the
following optimal scheme that minimizes the total amount of communicated bits: user~$1$ transmits packet $w_4$, user
$2$ transmits $w_1+w_3$, and user $3$ transmits $w_2$, where the addition is in the  field $\mathbb{F}_{q}$.}
\label{fig:model_raw}
\end{figure}
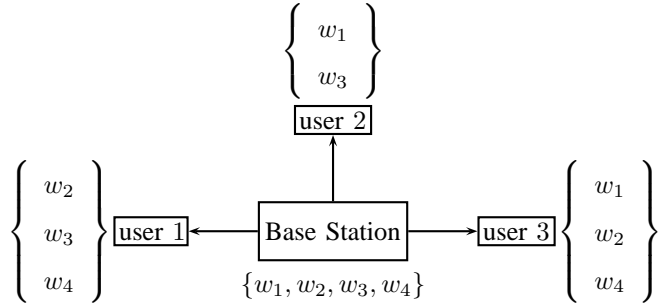

 An example of such a setting is shown in Figure~\ref{fig:model_raw}, where a base station
wants to deliver the same file to multiple  geographically-close users over an unreliable wireless downlink.  Such scenario may occur for instance when co-workers are using their tablets to  share and update files stored in the cloud (e.g., Dropbox), or when users, in the subway or a mall, are interested in watching the same popular video.
 For our example, let us suppose that the file
consists of four equally sized packets $w_1$, $w_2$, $w_3$ and $w_4$ belonging to some finite field $\mathbb{F}_q$.
Also, suppose that after few initial transmission attempts by the base station, the three users individually receive only
parts of the file (see Figure~\ref{fig:model_raw}), but collectively have the entire file. Now, if the mobile users
are in close vicinity  and can communicate with each other, then, it is much more desirable and efficient, in terms
of resource usage, to reconcile the file among users by letting them ``talk'' to each other without involving the  base station.  This cooperation has the following advantages:
\begin{itemize}
\item The connection to the base station is either unavailable after the initial phase of transmission, or it is too weak to meet the
delay requirement.
\item Transmissions within the close group of users is much more reliable than from any user to the base station due to geographical proximity.
\item Local communication among users has a smaller footprint in terms of interference, thus allowing one to use the shared resources (code, time or frequency)
freely without penalizing the base station's resources, \emph{i.e.}, higher resource reuse factor.
\end{itemize}

 The problem of reconciling a file among
multiple  wireless users having parts of it while minimizing  the  cost in terms of the total number of bits exchanged
is known in the literature as the {\em data exchange problem} and was introduced by El Rouayheb {\em et al.} in
\cite{SSS10}. In terms of the example considered here, if the $3$ users transmit $R_1,R_2$ and $R_3$ bits to reconcile
the entire file, the data exchange problem would correspond to minimizing the sum-rate $R_1+R_2+R_3$ such that,  when
the communication is over, all the users can recover the entire file. It can be shown here that the minimum sum-rate
required to reconcile the file is equal to 3 and can be achieved by the following coding scheme: user $1$
transmits packet $w_4$, user $2$ transmits $w_1+w_3$,  and user $3$ transmits $w_2$, where the addition is over the
underlying field $\mathbb{F}_{q}$. This corresponds to the optimal rate allocation $R_1=R_2=R_3=1$ symbol in
$\mathbb{F}_{q}$.

In a subsequent work, Sprinston {\em et al.} \cite{SSBE10} proposed a randomized algorithm
that with {\em high probability} achieves the minimum number of transmissions, given that the
field size $\mathbb{F}_q$ is large enough. Courtade \emph{et al.} \cite{CXW10} and  Tajbakhsh {\em et al.}
\cite{tajbakhshmodel} formulated this problem as a linear program (LP) and showed that the proposed LP under some
additional assumption\footnote{If users are allowed to split the packets into arbitrary number of smaller chunks.}, can
be solved in polynomial time. In a more general setting, one can consider  minimizing a different cost function, a
``weighted sum rate'', \emph{i.e.}, minimizing $\alpha_1R_1+\alpha_2R_2+\alpha_3R_3$, for some non-negative weights
$0 \leq \alpha_i<\infty$, $i=1,2,3$, to accommodate the scenario when transmissions from different users have different
costs. This problem was studied by Ozgul {\em et al.} \cite{ozgul2011algorithm}, where the authors proposed a
randomized algorithm that achieves this goal with {\em high probability} provided that the underlying field size is
large enough.

The results above consider  only the  simple form of the side-information where different users observe partial uncoded
``raw'' packets/fragments of the original file. Typically, content distribution networks use coding, such as Fountain
codes or linear network codes, to improve the system efficiency. In such  scenarios, the side-information representing
the partial knowledge gained by the users would be coded and in the form of linear combinations of the original file packets, rather than the raw packets themselves. The previous two cases of side information (``raw'' and
coded) can be regarded as special cases of the more general problem where the side-information has arbitrary
correlation among the observed data of different users and where the goal is to minimize the weighted total
communication (or exchange) cost to achieve omniscience. In \cite{CN04} Csisz\'ar and Narayan pose a related security
problem referred to as  the  ``multi-terminal key agreement" problem. They show  that achieving omniscience in minimum number of bits exchanged over the public channel is sufficient to maximize the size of the shared secret key.   This result establishes  the connection between the  ÔMulti-party key agreementÕ and the ÔData exchangeÕ problems.  The authors in \cite{CN04} solve the key agreement
problem by formulating it as a linear program (LP) with an exponential number of rate-constraints, corresponding to all
possible cut-sets that need to be satisfied, which has exponential complexity.
%


In this paper, we  make the following contributions. First,  we provide a \emph{deterministic polynomial time}
algorithm\footnote{The complexity of our proposed algorithm is $\mcl{O}(m^2 \cdot SFM(m))$, where $m$ is the number of
users and  $SFM(m)$ is the complexity of submodular function minimization.  To the  best of our knowledge, the fastest   algorithm  for SFM is given  by Orlin in \cite{orlin2009faster}, and has complexity $\mcl{O}(m^5\cdot \gamma+m^6)$, where $\gamma$ is complexity of
computing the submodular function.} for finding an optimal rate allocation, w.r.t. a linear weighted sum-rate cost, that
achieves omniscience among users with arbitrarily correlated side information. For the  data exchange problem,  this
algorithm computes the optimal rate allocation in polynomial time  for the case of  linearly coded side information
(including the ``raw'' packets case) and for the general linear cost functions (including the sum-rate case). Moreover,
for the ``multi-terminal key agreement" security problem of \cite{CN04}, this algorithm computes the secret key
capacity (maximum key length) in polynomial time. Second, for the the data exchange problem, with raw or linearly coded
side-information, we provide efficient methods for constructing linear network codes that can achieve  omniscience
among the users at the optimal rates with finite block lengths and zero-error.

The rest of the paper is organized as follows. In Section~\ref{sec:model}, we describe the model and formulate the
communication problem. Section~\ref{sec:comb} provides the necessary mathematical background  in combinatorial
optimization that will be needed for  constructing our algorithm. In Section~\ref{sec:CO_rates}, we describe the
polynomial time algorithm which finds an optimal rate allocation  that minimizes the sum-rate (non-weighted case). In
Section~\ref{sec:alpha}, we use the results of Section~\ref{sec:CO_rates} as a key building block to construct an
efficient algorithm for an arbitrary linear communication cost function. In Section~\ref{sec:fls}, we propose a
polynomial time code construction for the data exchange problem using results in network coding.
We conclude our work in Section~\ref{sec:conclusion}.

\section{System Model and Preliminaries}\label{sec:model}
In this paper, we consider a set up with $m$ user terminals that are interested in achieving omniscience of a
particular file or a random process. Let $X_1,X_2,\ldots,X_m$, $m \geq 2,$ denote the components of a discrete
memoryless multiple source (DMMS) with a given joint probability mass function. Each user terminal $i \in
\mcl{M}\triangleq \{1,2,\ldots,m\}$ observes $n$ i.i.d. realizations of the corresponding random variable $X_i$. The final goal
is for each terminal in the system to gain access to all other terminals' observations, \emph{i.e.}, to become
omniscient about the file or DMMS. In order to achieve this goal the terminals are allowed to communicate over a
noiseless public broadcast channel in multiple rounds and thus, may use interactive communication, meaning that
transmission by a user terminal at any particular time can be a function of its initial observations as well as the
past communication so far over the public broadcast channel. In \cite{CN04}, Csisz\'ar and Narayan showed that to
achieve the omniscience in a multi-terminal setup with general DMMS {\em interactive communication is not needed}. As a
result, in the sequel WLOG we can assume that the transmission of each terminal is only a function of its own initial
observations. Let $F_i := f_i(X^n_i)$ represent the transmission of the terminal $i \in \mcl{M}$, where $f_i(.)$ is any
desired mapping of the observations $X^n_i$. For each terminal to achieve omniscience, transmissions $F_i$, $i\in
\mcl{M}$, should satisfy,
\begin{align}
\lim_{n \rightarrow \infty} \frac{1}{n} H(X_{\mcl{M}}^n|\mbf{F},X_i^n) = 0,~~~\forall i \in \mcl{M}, \label{eq:decode}
\end{align}
where $X_{\mcl{M}}=(X_1,X_2,\ldots,X_m)$.
\begin{definition}
A rate tuple $\mbf{R}=(R_1,R_2,\ldots,R_m)$ is an {\em achievable communication for omniscience (CO) rate tuple} if there exists a communication scheme with transmitted messages $\mbf{F}=(F_1,F_2,\ldots,F_m)$ that satisfies \eqref{eq:decode},  {\em i.e.,} achieves omniscience, and is such that
\begin{align}
R_i = \lim_{n \rightarrow \infty} \frac{1}{n} H(F_i),~~~\forall i \in \mcl{M}.
\end{align}
\end{definition}


In the omniscience problem every terminal is a potential transmitter as well as a receiver. As a result, any set $\mcl{S}\subset \mcl{M}, \mcl{S} \neq \mcl{M},$ defines a cut corresponding to the partition between two sets $\mcl{S}$ and $\mcl{S}^c = \mcl{M} \setminus \mcl{S}$. It is easy to show using cut-set bounds that all the achievable CO rate tuple's necessarily belong to the following region
\begin{align}
\mcl{R}\triangleq \left\{\mbf{R}: R(\mcl{S})\geq H(X_{\mcl{S}}|X_{\mcl{S}^c}),~\mcl{S}\subset \mcl{M}\right\}, \label{cut_set}
\end{align}
where $R(\mcl{S}) = \sum_{i \in \mcl{S}} R_i$. Also, using a random coding argument,  it can be shown that the rate region $\mcl{R}$ is an achievable rate region \cite{CN04}. In \cite{pradhan03} and \cite{pradhan2005generalized} the authors provide explicit structured codes based on syndrome decoding that achieve the rate region for a Slepian-Wolf distributed source coding problem. This approach was further extended in \cite{stankovic06} to a multiterminal setting.

In this work, we aim to design a polynomial complexity algorithm that achieves omniscience among all the users while simultaneously minimizing an appropriately defined cost function over the rates. In the sequel we focus on the linear cost functions of the rates as an objective of the optimization problem. To that end, let $\und{\alpha} \triangleq (\alpha_1,\cdots,\alpha_m), 0 \leq \alpha < \infty$, be an $m-$dimensional vector of non-negative finite weights. We allow $\alpha_i$'s to be arbitrary non-negative constants, to account for the case when communication
of some group of terminals is more expensive compared to the others, \emph{e.g.}, setting $\alpha_1$ to be a large value compared to the other weights minimizes the rate allocated to the terminal $1$. This goal can be formulated as the following linear program which hereafter we denote by LP$_1(\und{\alpha})$:
\begin{align}
\min \sum_{i=1}^m \alpha_i R_i,~~~~\text{s.t.}~~~\mbf{R} \in \mcl{R}, \label{problem1}
\end{align}
We use $\mcl{R}(\und{\alpha})$ to denote the rate region of all minimizers of the above LP, and $R_{CO}(\und{\alpha})$ to denote the minimal cost.

\subsection*{Data Exchange Problem with linear correlation among users observations}
As mentioned in Section~\ref{sec:intro} efficient content distribution networks use coding such as fountain codes or linear network codes.
This results in users' observations to be in the form of linear combinations of the original packets forming the file, rather
than the raw packets themselves as is the case in conventional `Data Exchange problem'. This linear correlation source model is known in literature as {\em Finite linear source} \cite{CZ10}.

Next, we briefly describe the finite linear source model. Let $q$ be some power of a prime.
Consider the $N$-dimensional random vector $\mathbf{W} \in \mathbb{F}^N_{q^n}$
whose components are independent and uniformly distributed over the elements of $\mathbb{F}_{q^n}.$
Then, in the linear source model, the observation of $i^{th}$ user is simply given by
\begin{align}
\mathbf{X}_{i} = \mathbf{A}_i \mbf{W}, \ i \in \mcl{M}, \label{model:eq1}
\end{align}
where $\mathbf{A}_i \in \mathbb{F}_q^{\ell_i \times N}$ is an observation matrix\footnote{ The entries in the
observation matrix $A_i, \forall i \in {\cal M}$ denote the coefficients of the code, e.g., Fountain code or linear
network code, used by the base station and hence belong to the smaller field $\mathbb{F}_q$ rather than the field
$\mathbb{F}_{q^n}$ to which the data packets belong. This assumption is justified since the coding coefficients are typically stored in the packet in an overhead of size
negligible compared to the packet length.
}
for the user $i$.

It is easy to verify that for the finite linear source model,
\begin{align}
\frac{H(X_i)}{\log q^n} = \text{rank}(\mbf{A}_i). \label{rank_entropy}
\end{align}
Henceforth for the finite linear source model we will use the entropy of the
observations and the rank of the observation matrix interchangeably.

For the sake of brevity we use the following notation
\begin{align}
\text{rank}\left\{\left[
                    \begin{array}{c}
                      \mbf{A} \\
                      \mbf{B} \\
                    \end{array}
                  \right]
  \right\} &\triangleq \text{rank}(\mbf{A},\mbf{B}), \\
\text{rank}(\mbf{A}|\mbf{B}) &\triangleq \text{rank}(\mbf{A},\mbf{B})-\text{rank}(\mbf{B}).
\end{align}

Similar to the general DMMS model, for the finite linear source model an omniscience achievable rate tuple necessarily belongs to
\begin{align}
\mcl{R}_{de}\triangleq \left\{\mbf{R}: R(\mcl{S})\geq \text{rank}(\mbf{A}_{\mcl{S}}|\mbf{A}_{\mcl{S}^c}),~\mcl{S}\subset \mcl{M}\right\}, \label{cut_set_datexc}
\end{align}
where $R(\mcl{S}) = \sum_{i \in \mcl{S}} R_i$, and $\mbf{A}_{\mcl{S}}$ is a matrix obtained by stacking $\mbf{A}_{i}, \forall i \in \mcl{S}$. 
The rate $R_i$, $i\in \mcl{M}$ is the number of symbols in $\mathbb{F}_{q^n}$ user $i$ transmits
over the noiseless broadcast channel.
 \label{sec:model}
\section{Optimization over polyhedrons and Edmond's algorithm}

In this section we review results and techniques from the theory of  combinatorial optimization. These results  will form a key ingredient in finding a polynomial time algorithm for solving the rate minimization problem LP$_1(\und{\alpha})$ which will be described in Sections~\ref{sec:CO_rates} and \ref{sec:alpha}. The idea is to recast the  underlying rate region $\mcl{R}$,
defined by the cut-set constraints in \eqref{cut_set}, as a  polyhedron of some set function whose dual is \emph{intersecting submodular} which can be optimized in polynomial time. Then, we identify conditions under which
the optimization problem over the dual polyhedron and the original problem
have the same optimal solution.

Here, we state
the  definitions, theorems and algorithms that will be needed in the next sections. For a comprehensive exposition of combinatorial optimization,  we refer the interested reader to references \cite{schrijver2003combinatorial, F05}.

\begin{definition}[Polyhedron]
Let $f$ be a real function defined over the set $\mcl{M}=\{1,2,\ldots,m\}$, \emph{i.e.}, $f:2^{\mcl{M}}\rightarrow \mathbb{R}$
such that $f(\emptyset)=0$, where $2^{\mcl{M}}$ is the power set of $\mcl{M}$.
Let us define the \emph{polyhedron} $P(f,\leq)$ and the \emph{base polyhedron} $B(f,\leq)$ of $f$ as follows.
\begin{align}
P(f,\leq) & \triangleq \{\mbf{Z}~|~\mbf{Z} \in \mathbb{R}^m,~~\forall \mcl{S}\subseteq \mcl{M} : Z(\mcl{S})\leq f(\mcl{S}) \}, \label{f:poyh} \\
B(f,\leq) & \triangleq \{\mbf{Z}~|~\mbf{Z} \in P(f,\leq),~~Z(\mcl{M})=f(\mcl{M})\} \label{co_f:base},
\end{align}
where  $Z(\mcl{S})=\sum_{i \in \mcl{S}} Z_i$.
\end{definition}
\begin{example}\label{comb_exp1}
Consider the function $f$ defined over set $\mcl{M}=\{1,2\}$ such that $f(\emptyset)=0$, $f(\{1\})=4$, $f(\{2\})=3$, and
$f(\{1,2\})=6$. The polyhedron $P(f)$ is defined by the region $Z_1\leq 4$, $Z_2\leq 3$, and $Z_1+Z_2\leq 6$ (see Figure \ref{fig:polyh}).
For the base polyhedron there is the additional constraint $Z_1+Z_2=6$.
\begin{figure}[h]
\begin{center}
\includegraphics[scale=0.55]{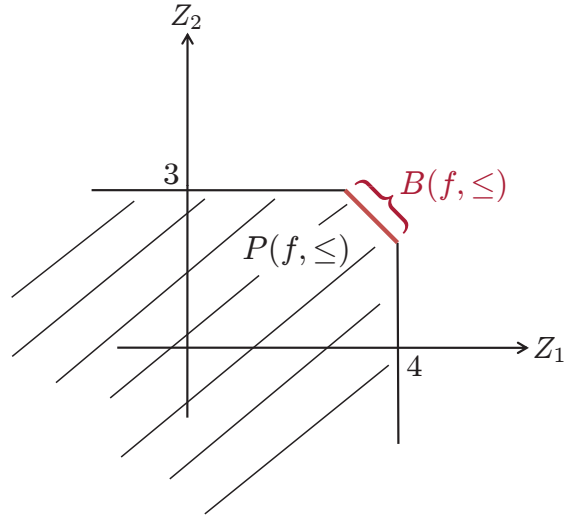}
\end{center}
\vspace{-0.2in}
\caption{Polyhedron $P(f,\leq)$ and the base polyhedron $B(f,\leq)$ for the function $f$ specified in Example \ref{comb_exp1}.}\label{fig:polyh}
\end{figure}
\end{example}
Notice that the base polyhedron $B(f,\leq)$ can be an empty set of vectors in general. For instance, if function $f$  in Example~\ref{comb_exp1}
is such that $f(\{1,2\})=8$ instead of $6$.
\begin{definition}[Dual function] \label{def:dual}
For a set function $f$ let us define its \emph{dual function} $f^{\star}:2^{\mcl{M}}\rightarrow \mathbb{R}$ as follows
\begin{align}
f^{\star}(\mcl{S}^c)=f(\mcl{M})-f(\mcl{S}),~~~\forall \mcl{S}\subseteq \mcl{M},
\end{align}
where $\mcl{S}^c=\mcl{M}\setminus \mcl{S}$.
\end{definition}
With the dual function $f^{\star}$, we associate its polyhedron and base polyhedron as follows
\begin{align}
P(f^{\star},\geq) & \triangleq \{\mbf{R}~|~\mbf{R} \in \mathbb{R}^m,~~\forall \mcl{S}\subseteq \mcl{M} : R(\mcl{S})\geq f^{\star}(\mcl{S}) \}, \\
B(f^{\star},\geq) & \triangleq \{\mbf{R}~|~\mbf{R} \in P(f^{\star},\geq),~~R(\mcl{M})=f^{\star}(\mcl{M})\}, \label{f_star_base}
\end{align}
\begin{lemma} \label{lm:app_opt1}
If $B(f,\leq)\neq \emptyset$ then, $B(f,\leq)=B(f^{\star},\geq)$ and $(f^{\star})^{\star}=f$.
\end{lemma}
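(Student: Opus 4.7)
The plan is to prove both claims by direct manipulation of the definitions, using only the complement identity $Z(\mcl{S}) + Z(\mcl{S}^c) = Z(\mcl{M})$.

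First I would dispose of the involutive identity $(f^{\star})^{\star} = f$. By Definition~\ref{def:dual}, $f^{\star}(\mcl{M}) = f(\mcl{M}) - f(\emptyset) = f(\mcl{M})$, so for any $\mcl{S} \subseteq \mcl{M}$,
\begin{align}
(f^{\star})^{\star}(\mcl{S}^c) = f^{\star}(\mcl{M}) - f^{\star}(\mcl{S}) = f(\mcl{M}) - \bigl(f(\mcl{M}) - f(\mcl{S}^c)\bigr) = f(\mcl{S}^c),
\end{align}
which yields $(f^{\star})^{\star} = f$ immediately. This part is purely mechanical and does not use the nonemptiness hypothesis.

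Next I would show the inclusion $B(f,\leq) \subseteq B(f^{\star},\geq)$. Pick any $\mbf{Z} \in B(f,\leq)$. Then $Z(\mcl{M}) = f(\mcl{M}) = f^{\star}(\mcl{M})$, which already handles the equality defining the base polyhedron on the right. For each $\mcl{S} \subseteq \mcl{M}$, the key step is to rewrite $Z(\mcl{S})$ via the complement:
\begin{align}
Z(\mcl{S}) = Z(\mcl{M}) - Z(\mcl{S}^c) \geq f(\mcl{M}) - f(\mcl{S}^c) = f^{\star}(\mcl{S}),
\end{align}
using $Z(\mcl{S}^c) \leq f(\mcl{S}^c)$ from $\mbf{Z} \in P(f,\leq)$. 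Hence $\mbf{Z} \in B(f^{\star},\geq)$.

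Finally, the reverse inclusion $B(f^{\star},\geq) \subseteq B(f,\leq)$ follows by the same argument applied to $f^{\star}$ in place of $f$, because we now know $(f^{\star})^{\star} = f$; equivalently, one repeats the complement identity in the opposite direction to convert a lower bound on $Z(\mcl{S})$ into an upper bound on $Z(\mcl{S}^c)$. The role of the hypothesis $B(f,\leq) \neq \emptyset$ is simply to certify that both sides are nonempty so the set equality is meaningful (otherwise $B(f,\leq) = B(f^{\star},\geq) = \emptyset$ trivially). I do not anticipate any real obstacle here, since the whole argument rests on a single accounting identity; the only thing to be careful about is to keep track of which set ($\mcl{S}$ or $\mcl{S}^c$) the constraint is stated on when invoking the definition of the dual.
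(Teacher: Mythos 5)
Your proof is correct and follows essentially the same route as the paper: both rely on the complement identity $Z(\mcl{S}) + Z(\mcl{S}^c) = Z(\mcl{M})$ together with $f(\emptyset)=0$ to convert the defining inequalities of $B(f,\leq)$ into those of $B(f^{\star},\geq)$, and both establish $(f^{\star})^{\star}=f$ by direct unfolding of the definition. Your observation that the nonemptiness hypothesis plays no real role is also consistent with the paper's argument, which likewise never invokes it.
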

Proof of Lemma \ref{lm:app_opt1} is provided in Appendix \ref{app:lm:app_opt1}.
For the set function $f$ from Example \ref{comb_exp1}, the polyhedron $P(f^{\star},\geq)$ and the base polyhedron $B(f^{\star},\geq)$ are
presented in Figure \ref{fig:polyh_dual}.
\begin{figure}[h]
\begin{center}
\includegraphics[scale=0.55]{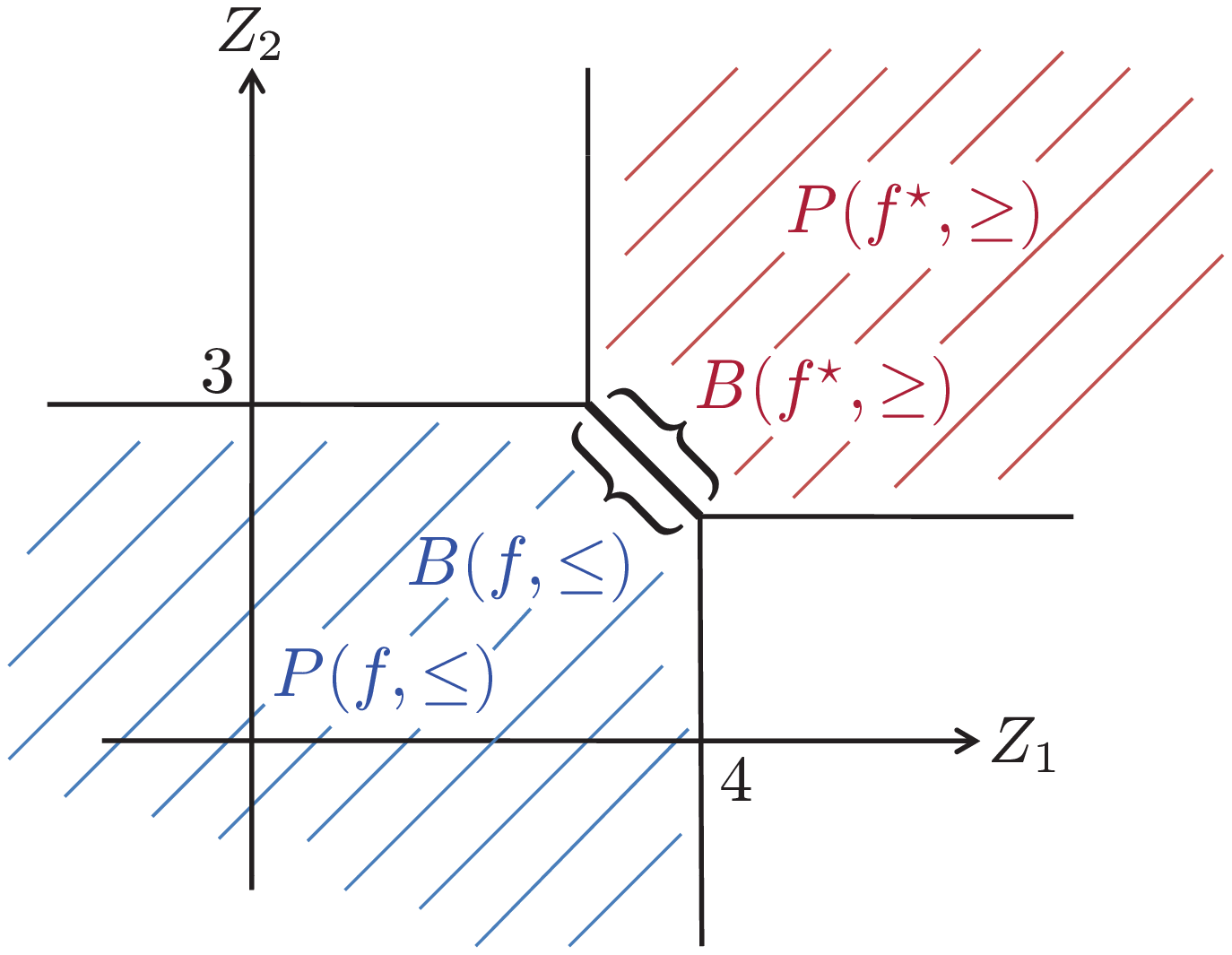}
\end{center}
\vspace{-0.2in}
\caption{Equivalence between $B(f,\leq)$ and $B(f^{\star},\geq)$ illustrated for the function $f$ provided in Example \ref{comb_exp1}.}\label{fig:polyh_dual}
\end{figure}
We say that two optimization problems are equivalent if they have the same optimal value and the same set of optimizers.
\begin{lemma} \label{app_rmk1}
If $B(f)\neq \emptyset$, then the following optimization problems are equivalent
\begin{align}
&\max_{\mbf{Z}} \sum_{i=1}^m Z_i,~~~\text{s.t.}~~\mbf{Z} \in P(f,\leq). \label{app_opt_lp1} \\
&\min_{\mbf{R}} \sum_{i=1}^m R_i,~~~\text{s.t.}~~\mbf{R} \in P(f^{\star},\geq). \label{app_opt_lp11}
\end{align}
\end{lemma}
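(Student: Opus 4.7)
The plan is to observe that maximizing (resp.\ minimizing) the sum $\sum_i Z_i$ (resp.\ $\sum_i R_i$) over the polyhedron (resp.\ dual polyhedron) automatically forces any optimizer to lie on the corresponding base polyhedron, and then invoke Lemma~\ref{lm:app_opt1} to identify the two base polyhedra.

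First, I would analyze the maximization in \eqref{app_opt_lp1}. Note that the constraint set $P(f,\leq)$ includes, in particular, the single inequality obtained by choosing $\mcl{S}=\mcl{M}$, namely $Z(\mcl{M}) = \sum_{i=1}^m Z_i \leq f(\mcl{M})$. Hence the optimal value of \eqref{app_opt_lp1} is at most $f(\mcl{M})$. Since $B(f,\leq) \neq \emptyset$ by hypothesis, there exists $\mbf{Z} \in P(f,\leq)$ with $Z(\mcl{M}) = f(\mcl{M})$, so the optimum is attained and equals exactly $f(\mcl{M})$. Moreover, a feasible point is optimal if and only if $Z(\mcl{M}) = f(\mcl{M})$, which is precisely the defining condition of $B(f,\leq)$. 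Therefore the set of optimizers of \eqref{app_opt_lp1} is exactly $B(f,\leq)$.

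Next I would carry out the symmetric argument for the minimization in \eqref{app_opt_lp11}. The constraint $R(\mcl{M})\geq f^{\star}(\mcl{M})$ is included in $P(f^{\star},\geq)$, and by Definition~\ref{def:dual} we have $f^{\star}(\mcl{M}) = f(\mcl{M}) - f(\emptyset) = f(\mcl{M})$. Thus the optimal value of \eqref{app_opt_lp11} is at least $f(\mcl{M})$. By Lemma~\ref{lm:app_opt1}, $B(f^{\star},\geq) = B(f,\leq) \neq \emptyset$, so this lower bound is attained and equals $f(\mcl{M})$. Again, a feasible point is optimal iff $R(\mcl{M}) = f^{\star}(\mcl{M})$, i.e., iff it lies in $B(f^{\star},\geq)$.

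Finally, I would conclude by combining the two observations: both problems share the same optimal value $f(\mcl{M})$, and their respective optimizer sets are $B(f,\leq)$ and $B(f^{\star},\geq)$, which coincide by Lemma~\ref{lm:app_opt1}. Hence the two problems are equivalent in the sense defined just before the statement. The only subtle point is the use of the hypothesis $B(f,\leq)\neq\emptyset$, which is needed both to guarantee attainment of the maximum in \eqref{app_opt_lp1} and (via Lemma~\ref{lm:app_opt1}) to guarantee that $B(f^{\star},\geq)$ is nonempty so that the minimum in \eqref{app_opt_lp11} is attained; without this assumption the two problems could have different optimal values. No real obstacle arises beyond correctly invoking Lemma~\ref{lm:app_opt1}.
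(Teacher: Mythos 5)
Your proof is correct and follows essentially the same route as the paper's (which also hinges on the $\mathcal{S}=\mathcal{M}$ constraint, the nonemptiness of $B(f,\leq)$, and Lemma~\ref{lm:app_opt1}); you simply spell out the argument in more detail, in particular making explicit that the optimizer \emph{sets} coincide and not merely that a common optimizer exists.
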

Lemma \ref{app_rmk1} can be easily proved from the following argument provided in \cite{F05}.
Since $B(f,\leq)\neq \emptyset$, there exits a vector $\mbf{Z}$ such that $Z(\mcl{M})=f(\mcl{M})=f^{\star}(\mcl{M})$. Moreover,
$\mbf{Z}\in B(f,\leq)=B(f^{\star},\geq)$. Hence, $\mbf{Z}$ is a maximizer of the problem \eqref{app_opt_lp1} and a minimizer of the problem \eqref{app_opt_lp11}.

Next, we define the class of \emph{submodular functions} for which the maximization problem \eqref{app_opt_lp1}
has analytical solution.

\begin{definition}[Submodularity] \label{def:submodular}
A set function $f$ defined on the power set of $\mcl{M}$, $f:2^{\mcl{M}}\rightarrow \mathbb{R}$, where $f(\emptyset)=0$,
is called \emph{submodular} if
\begin{align}
f(\mcl{S})+f(\mcl{T})\geq f(\mcl{S}\cup \mcl{T})+f(\mcl{S}\cap \mcl{T}),
~~~\forall \mcl{S},\mcl{T} \subseteq \mcl{M}. \label{eq:submodular}
\end{align}
\end{definition}
\begin{remark} \label{rmk:base}
When $f$ is submodular, then $B(f,\leq)\neq \emptyset$.
\end{remark}
For a more general version of the problem \eqref{app_opt_lp1}
\begin{align}
\max_{\mbf{Z}} \sum_{i=1}^m \alpha_i Z_i,~~~\text{s.t.}~~\mbf{Z} \in P(f,\leq), \label{app_opt_lp2}
\end{align}
where $\alpha_i\geq 0$, for $i=1,\dots,m$, and $f$ is submodular, an analytical solution can be obtained using Edmond's algorithm.

\begin{theorem}[Edmond's greedy algorithm \cite{E70}]\label{thm:Edm}
When $f$ is  submodular, the maximization problem \eqref{app_opt_lp2} given by $\max_{\mbf{Z}} \sum_{i=1}^m Z_i$, s.t. $\mbf{Z}\in P(f)$,
can be solved analytically as follows.

\begin{align}
Z_{j(i)}&=f(\mcl{A}_i)-f(\mcl{A}_{i-1}), \ i=1,\dots, m, \nonumber
\end{align}
where $j(1),j(2),\ldots, j(m)$ is an ordering of $\{1,2,\ldots,m\}$ such that $\alpha_{j(1)}\geq \alpha_{j(2)}\geq \cdots \geq \alpha_{j(m)}$, and
\begin{align}
\mcl{A}_i&=\emptyset,~~~i=1, \nonumber \\
\mcl{A}_i&=\{j(1),j(2),\ldots,j(i)\},~~~i=2,3,\ldots,m. \nonumber
\end{align}
\end{theorem}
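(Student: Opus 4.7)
The plan is to prove the theorem in two logically separate pieces: first, that the greedy vector $\mathbf{Z}^{*}$ defined by $Z^{*}_{j(i)}=f(\mcl{A}_i)-f(\mcl{A}_{i-1})$ lies in $P(f,\leq)$, and second, that $\mathbf{Z}^{*}$ attains the maximum of $\sum_i \alpha_i Z_i$ on this polyhedron. Both steps rely on the submodularity inequality \eqref{eq:submodular}, but in different ways.

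For feasibility, I would fix an arbitrary $\mcl{S}\subseteq \mcl{M}$ and list its elements in the greedy order: write $\mcl{S}=\{j(i_1),\ldots,j(i_k)\}$ with $i_1<\cdots<i_k$. The key observation is that for each $\ell$, the chain $\mcl{A}_{i_\ell-1}\subseteq \mcl{A}_{i_\ell}$ and the chain $\mcl{S}\cap\mcl{A}_{i_\ell-1}\subseteq \mcl{S}\cap\mcl{A}_{i_\ell}$ differ by the same singleton element $j(i_\ell)$. Submodularity (in its equivalent ``diminishing returns'' form, which follows from \eqref{eq:submodular} by taking $\mcl{T}=\mcl{S}\cap\mcl{A}_{i_\ell-1}\cup\{j(i_\ell)\}$) yields
\begin{align}
Z^{*}_{j(i_\ell)}=f(\mcl{A}_{i_\ell})-f(\mcl{A}_{i_\ell-1})\leq f(\mcl{S}\cap\mcl{A}_{i_\ell})-f(\mcl{S}\cap\mcl{A}_{i_\ell-1}). \nonumber
\end{align}
Summing over $\ell=1,\ldots,k$ gives a telescoping sum whose right-hand side equals $f(\mcl{S})-f(\emptyset)=f(\mcl{S})$, establishing $Z^{*}(\mcl{S})\leq f(\mcl{S})$. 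Taking $\mcl{S}=\mcl{M}$ shows in addition that $\mathbf{Z}^{*}\in B(f,\leq)$.

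For optimality, I would use LP duality. The dual of \eqref{app_opt_lp2} has a variable $y_{\mcl{S}}\geq 0$ for every nonempty $\mcl{S}\subseteq\mcl{M}$, objective $\min \sum_{\mcl{S}} y_{\mcl{S}} f(\mcl{S})$, and constraints $\sum_{\mcl{S}\ni i} y_{\mcl{S}}=\alpha_i$ for each $i$. I would construct an explicit dual solution supported only on the chain $\mcl{A}_1\subsetneq\mcl{A}_2\subsetneq\cdots\subsetneq \mcl{A}_m$: set $y_{\mcl{A}_i}=\alpha_{j(i)}-\alpha_{j(i+1)}$ for $i=1,\ldots,m-1$ and $y_{\mcl{A}_m}=\alpha_{j(m)}$, and $y_{\mcl{S}}=0$ otherwise. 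Non-negativity is immediate from the assumed ordering $\alpha_{j(1)}\geq\cdots\geq\alpha_{j(m)}\geq 0$, and the primal constraint $\sum_{\mcl{S}\ni j(i)}y_{\mcl{S}}=\alpha_{j(i)}$ holds because $j(i)\in\mcl{A}_r$ iff $r\geq i$, so the telescoping sum collapses to $\alpha_{j(i)}$. Evaluating the dual objective with an Abel-type summation by parts yields exactly $\sum_i \alpha_{j(i)}\bigl(f(\mcl{A}_i)-f(\mcl{A}_{i-1})\bigr)=\sum_i \alpha_i Z^{*}_i$, matching the primal value achieved by $\mathbf{Z}^{*}$. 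By weak LP duality, $\mathbf{Z}^{*}$ is a maximizer.

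The main obstacle, and the only place real work occurs, is the feasibility step: one must choose the right application of submodularity so that the marginal contributions $f(\mcl{A}_{i_\ell})-f(\mcl{A}_{i_\ell-1})$ can be bounded by marginals computed \emph{within} $\mcl{S}$, allowing the telescoping cancellation. Once this diminishing-returns bound is in hand, the rest of the argument — construction of the dual, verification of its constraints, and the summation-by-parts matching of objectives — is mechanical.
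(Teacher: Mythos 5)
The paper does not prove this theorem: it states it as a known result and cites Edmonds~\cite{E70}, so there is no ``paper proof'' to compare against. Your argument is the standard, self-contained proof found in combinatorial optimization texts (Fujishige, Schrijver), and it is correct. Both halves check out. For feasibility, the application of submodularity with $\mcl{T}=\mcl{S}\cap\mcl{A}_{i_\ell}$ and $\mcl{U}=\mcl{A}_{i_\ell-1}$ gives exactly the diminishing-returns bound you state, and the sum on the right telescopes because $\mcl{S}\cap\mcl{A}_{i_\ell-1}=\mcl{S}\cap\mcl{A}_{i_{\ell-1}}$ (no element of $\mcl{S}$ is added between indices $i_{\ell-1}$ and $i_\ell-1$), with boundary terms $\mcl{S}\cap\mcl{A}_{i_1-1}=\emptyset$ and $\mcl{S}\cap\mcl{A}_{i_k}=\mcl{S}$. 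For optimality, the chain-supported dual variables are nonnegative by the chosen ordering, the constraint telescopes to $\alpha_{j(i)}$, and the Abel rearrangement matches the dual objective with the primal value $\sum_i\alpha_{j(i)}\bigl(f(\mcl{A}_i)-f(\mcl{A}_{i-1})\bigr)$ exactly (the boundary term vanishes since $f(\emptyset)=0$), so weak duality closes the argument. One remark on indexing: the theorem as printed says $\mcl{A}_i=\emptyset$ for $i=1$, which conflicts with the formula $Z_{j(i)}=f(\mcl{A}_i)-f(\mcl{A}_{i-1})$ at $i=1$; the intended convention (which you implicitly and correctly use) is $\mcl{A}_0=\emptyset$ and $\mcl{A}_i=\{j(1),\ldots,j(i)\}$ for $i\geq 1$.
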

The following statement directly follows from Remark~\ref{rmk:base}.
\begin{remark}\label{rmk:edm_base}
When $f$ is submodular, a maximizer $\mbf{Z}$ of the optimization problem \eqref{app_opt_lp2} satisfies $\sum_{i=1}^m Z_i= f(\mcl{M})$.
\end{remark}
\begin{example}
In this example we illustrate Edmond's greedy algorithm by considering the set function $f$ from Example \ref{comb_exp1} and
the optimization problem
\begin{align}
\max_{\mbf{Z}} 5Z_1+Z_2,~~\text{s.t.}~\mbf{Z}\in P(f,\leq), \label{exp_opt}
\end{align}
where $\mbf{Z}=(Z_1,Z_2)$. Since $\alpha_1=5>\alpha_2=1$, we set $1$, $2$ to be the ordering of $\{1,2\}$, \emph{i.e.}, $j(1)=1$ and $j(2)=2$.
Then, by applying Edmond's algorithm
we obtain $Z_1=4$, $Z_2=2$ to be the maximizer of the problem \eqref{exp_opt}.
\begin{figure}[h]
\begin{center}
\includegraphics[scale=0.55]{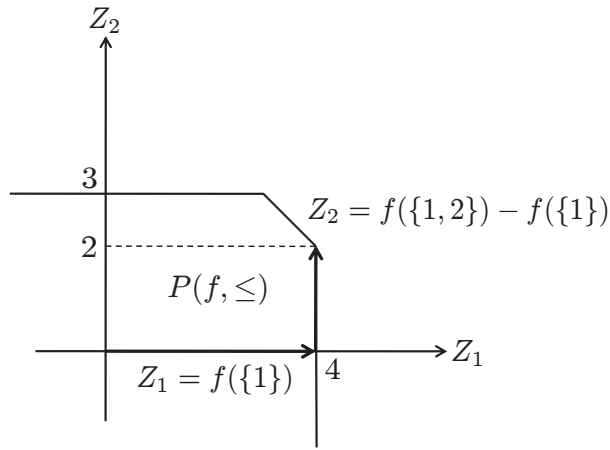}
\end{center}
\vspace{-0.2in}
\caption{Edmond's algorithm applied to the optimization problem \eqref{exp_opt}. Since $\alpha_1>\alpha_2$, the optimal ordering
        of $\{1,2\}$ is $1,2$.}\label{fig:edm}
\end{figure}
\end{example}
Edmond's algorithm is illustrated in Figure~\ref{fig:edm} for the case $\mcl{M}=\{1,2\}$. Notice that each iteration
of the algorithm reaches a boundary of the polyhedron $P(f,\leq)$ until it finally reaches the vertex of the base polyhedron $B(f,\leq)$.

In \cite{FT83}, it was shown that the following optimization problem can also be solved using Edmond's greedy algorithm.
\begin{corollary}\label{app_cor1}
When $f$ is submodular, then the optimization problem
\begin{align}
\min_{\mbf{R}} \sum_{i=1}^m \alpha_i R_i,~~~\text{s.t.}~~\mbf{R} \in B(f,\leq), \label{app_opt_lp3}
\end{align}
can be solved by using Edmond's algorithm where
$j(1),j(2),\ldots, j(m)$ is an ordering of $\mcl{M}$ such that $\alpha_{j(1)}\leq \alpha_{j(2)}\leq \cdots \leq \alpha_{j(m)}$.
\end{corollary}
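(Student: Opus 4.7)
The plan is to reduce the minimization over $B(f,\leq)$ to a maximization problem over $P(f,\leq)$ that can be directly attacked by Edmond's greedy algorithm (Theorem~\ref{thm:Edm}). The trick is the familiar one: exploit the fact that on the base polyhedron the coordinate sum is pinned to a constant, so one can flip the sign of the weight vector without changing the feasible region.

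Concretely, I would proceed as follows. First, let $c \defeq \max_{i} \alpha_i$ and define $\beta_i \defeq c - \alpha_i \geq 0$ for every $i \in \mcl{M}$. For any $\mbf{R} \in B(f,\leq)$ we have $R(\mcl{M}) = f(\mcl{M})$ by definition \eqref{co_f:base}, so
\begin{align}
\sum_{i=1}^{m} \alpha_i R_i \;=\; c\,R(\mcl{M}) - \sum_{i=1}^{m} \beta_i R_i \;=\; c\,f(\mcl{M}) - \sum_{i=1}^{m} \beta_i R_i.
\end{align}
Hence, minimizing $\sum_i \alpha_i R_i$ over $B(f,\leq)$ is equivalent (same optimizers, constant shift of the optimal value) to maximizing $\sum_i \beta_i R_i$ over $B(f,\leq)$.

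Next, I would lift the maximization from $B(f,\leq)$ to the full polyhedron $P(f,\leq)$. Because the weights $\beta_i$ are non-negative and $f$ is submodular, Theorem~\ref{thm:Edm} solves $\max_{\mbf{Z} \in P(f,\leq)} \sum_i \beta_i Z_i$ analytically, and Remark~\ref{rmk:edm_base} guarantees that the greedy solution $\mbf{Z}^{\star}$ satisfies $Z^{\star}(\mcl{M}) = f(\mcl{M})$, i.e.\ $\mbf{Z}^{\star} \in B(f,\leq) \subseteq P(f,\leq)$. Therefore $\mbf{Z}^{\star}$ also maximizes the same objective over the smaller set $B(f,\leq)$, and in turn is a minimizer of the original problem \eqref{app_opt_lp3}.

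Finally, I would translate the ordering. Edmond's algorithm processes indices in order of \emph{decreasing} weight of the maximization problem; since $\beta_{j(1)} \geq \beta_{j(2)} \geq \cdots \geq \beta_{j(m)}$ is equivalent to $\alpha_{j(1)} \leq \alpha_{j(2)} \leq \cdots \leq \alpha_{j(m)}$, the ordering prescribed in the statement is exactly the one Edmond's algorithm uses, and the minimizer is given coordinate-wise by $R_{j(i)} = f(\mcl{A}_i) - f(\mcl{A}_{i-1})$. The only subtle point worth spelling out is the lifting step from $B(f,\leq)$ to $P(f,\leq)$: one must invoke Remark~\ref{rmk:edm_base} to argue that the greedy maximizer lands in the base polyhedron, since otherwise optimizing over the larger polyhedron could produce points strictly outside $B(f,\leq)$ and the reduction would fail.
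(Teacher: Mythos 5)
Your reduction is correct and is the standard argument behind the result (the paper itself only cites \cite{FT83} and gives no proof): since $R(\mcl{M})=f(\mcl{M})$ is fixed on $B(f,\leq)$, minimizing $\sum_i\alpha_i R_i$ is equivalent to maximizing $\sum_i(c-\alpha_i)R_i$ with $c=\max_i\alpha_i$, the transformed weights are non-negative, and the greedy solution over $P(f,\leq)$ lands in $B(f,\leq)$, so it also solves the restricted maximization. One small precision worth flagging: with your choice of $c$, the smallest transformed weight is $\beta_{j(m)}=0$, so it is \emph{not} true that every maximizer of $\sum_i\beta_iZ_i$ over $P(f,\leq)$ lies in $B(f,\leq)$ (one can push $Z_{j(m)}$ below its greedy value without changing the objective); the correct reading of Remark~\ref{rmk:edm_base} --- and the only thing your lifting step needs --- is that the particular Edmond output $Z_{j(i)}=f(\mcl{A}_i)-f(\mcl{A}_{i-1})$ telescopes to $Z(\mcl{M})=f(\mcl{M})$ and hence is a maximizer that does lie in $B(f,\leq)$.
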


Next, we introduce the class of \emph{intersecting submodular} functions which is instrumental to solving our communication for omniscience problem.

\begin{definition}[Intersecting Submodularity] \label{def:inter_submodular}
A function $f$ defined on the power set of $\mcl{M}$, $f:2^{\mcl{M}}\rightarrow \mathbb{R}$ is called an \emph{intersecting submodular} if
\begin{align}
f(\mcl{S})+f(\mcl{T})\geq f(\mcl{S}\cup \mcl{T})+f(\mcl{S}\cap \mcl{T}),
~~~\forall \mcl{S},\mcl{T}~\text{s.t.}~~\mcl{S}\cap \mcl{T}\neq \emptyset. \label{eq:inter_sub}
\end{align}
\end{definition}

Notice that every submodular function is also intersecting submodular. However, in general, Edmond's algorithm cannot be directly applied to solve the
maximization problem \eqref{app_opt_lp2} over the polyhedron of an intersecting submodular function.

In \cite{F05} it is shown that for every intersecting submodular function there exists a submodular
function such that both functions have the same polyhedron. This is formally stated in the following theorem.
\begin{theorem}[Dilworth truncation] \label{thm:dilworth}
For an intersecting submodular function $f:2^{\mcl{M}}\rightarrow \mathbb{R}$ with $f(\emptyset)=0$, there
exists a submodular function $g:2^{\mcl{M}}\rightarrow \mathbb{R}$ such that $g(\emptyset)=0$ and
$P(g,\leq)=P(f,\leq)$. The function $g$ can be expressed as
\begin{align}
g(\mcl{S})=\min_{\mcl{P}}\left\{\sum_{\mcl{V}\in \mcl{P}}f(\mcl{V}) : \text{$\mcl{P}$ is a partition of $\mcl{S}$}\right\}.  \label{dilw}
\end{align}
The function $g$ is called the \emph{Dilworth truncation} of $f$.
\end{theorem}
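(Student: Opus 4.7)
The plan is to verify three properties of the function $g$ defined by \eqref{dilw}: (i) $g(\emptyset)=0$; (ii) $P(g,\leq)=P(f,\leq)$; and (iii) $g$ is submodular. Property (i) is immediate, since the only partition of $\emptyset$ is empty.

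For (ii), both inclusions fall directly out of the partition formula. The trivial partition $\{\mcl{S}\}$ is admissible in \eqref{dilw}, so $g(\mcl{S})\leq f(\mcl{S})$ for every $\mcl{S}$, which gives $P(g,\leq)\subseteq P(f,\leq)$. Conversely, for any $\mbf{Z}\in P(f,\leq)$ and any partition $\mcl{P}$ of $\mcl{S}$, summing the constraints $Z(V)\leq f(V)$ over $V\in\mcl{P}$ gives $Z(\mcl{S})\leq\sum_{V\in\mcl{P}}f(V)$, and minimizing over $\mcl{P}$ yields $Z(\mcl{S})\leq g(\mcl{S})$.

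The main obstacle is (iii). I would prove the submodularity of $g$ by an \emph{uncrossing} argument on optimal partitions. Fix $\mcl{S},\mcl{T}\subseteq\mcl{M}$ and let $\mcl{P}_S, \mcl{P}_T$ attain the minima $g(\mcl{S}), g(\mcl{T})$ (the minima are achieved because there are only finitely many partitions of a finite set). Two candidate partitions appear naturally: the family $\{V\cap W : V\in\mcl{P}_S,\ W\in\mcl{P}_T,\ V\cap W\neq\emptyset\}$ partitions $\mcl{S}\cap\mcl{T}$, because each element of $\mcl{S}\cap\mcl{T}$ lies in exactly one $V$ and one $W$; and the connected components $\mcl{C}_1,\ldots,\mcl{C}_k$ of the bipartite ``crossing'' graph on $\mcl{P}_S\sqcup\mcl{P}_T$ (with an edge whenever $V\cap W\neq\emptyset$) yield sets $X_j=A_j\cup B_j$, where $A_j=\bigcup_{V\in\mcl{C}_j\cap\mcl{P}_S}V$ and $B_j=\bigcup_{W\in\mcl{C}_j\cap\mcl{P}_T}W$, that partition $\mcl{S}\cup\mcl{T}$ (disjointness uses that cross-component pairs cannot intersect).

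Applying \eqref{dilw} to these two partitions reduces the desired inequality $g(\mcl{S})+g(\mcl{T})\geq g(\mcl{S}\cup\mcl{T})+g(\mcl{S}\cap\mcl{T})$ to a component-wise estimate
\[
\sum_{V\in\mcl{C}_j\cap\mcl{P}_S}f(V)+\sum_{W\in\mcl{C}_j\cap\mcl{P}_T}f(W)\;\geq\;f(X_j)+\sum_{\substack{V\in\mcl{C}_j\cap\mcl{P}_S\\W\in\mcl{C}_j\cap\mcl{P}_T\\V\cap W\neq\emptyset}}f(V\cap W),
\]
which I would prove by induction on $|\mcl{C}_j|$, each step peeling off an intersecting pair $V,W$ via \eqref{eq:inter_sub} and recursing on the merged component $(\mcl{C}_j\setminus\{V,W\})\cup\{V\cup W\}$. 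The delicate part is the bookkeeping of the intersection pieces through successive merges: because $\mcl{P}_S$ and $\mcl{P}_T$ are themselves partitions, the intersection of an already-merged union with a newly selected set collapses to a single $V\cap W$ term, which is what makes the extracted intersection terms line up exactly with the right-hand sum at the end.
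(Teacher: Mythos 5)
Parts (i) and (ii) of your argument are correct. The gap is in (iii), and it lies exactly in the step you flag as ``delicate'': the claim that when a new set is added to an already-merged union, its intersection with that union ``collapses to a single $V\cap W$ term.'' This holds only when the bipartite crossing graph on the component $\mcl{C}_j$ is a \emph{tree}. If the component contains a cycle, a newly selected set $V'\in\mcl{P}_S$ can meet two or more of the $W$'s already absorbed into the merged union, so the extracted term is $f\bigl(\bigcup_k (V'\cap W_k)\bigr)$, a union of several pairwise pieces, and intersecting submodularity gives no way to split this into $\sum_k f(V'\cap W_k)$ because those pieces are disjoint. Counting also reveals the mismatch: a connected component with $k$ vertices admits only $k-1$ merges, hence only $k-1$ extracted intersection sets, while the pairwise-intersection partition of $X_j\cap\mcl{S}\cap\mcl{T}$ has one piece per edge of $\mcl{C}_j$, which is strictly more than $k-1$ whenever there is a cycle.

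The claimed component-wise inequality is in fact false. Take $\mcl{M}=\{1,2,3,4\}$, $\mcl{P}_S=\{\{1,2\},\{3,4\}\}$, $\mcl{P}_T=\{\{1,3\},\{2,4\}\}$ (so the crossing graph on the single component is $K_{2,2}$, a $4$-cycle), and let $f(\emptyset)=0$ and $f(\mcl{A})=1$ for every $\mcl{A}\neq\emptyset$; this $f$ is intersecting submodular (for intersecting $\mcl{A},\mcl{B}$ both sides equal $2$) but not submodular. Your proposed right-hand side is $f(\mcl{M})+\sum_{i=1}^4 f(\{i\})=5$, while the left-hand side is $4$. The repair is to \emph{not} prescribe the pairwise-intersection partition of $\mcl{S}\cap\mcl{T}$: run the merges along a spanning structure of each component, and accept whatever intersection sets they emit. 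A multiplicity-counting argument (each element of $\mcl{S}\cap\mcl{T}$ starts with cover count $2$ in $\mcl{P}_S\sqcup\mcl{P}_T$, and each merge conserves the total count while moving one copy to the extracted side) shows these extracted sets always form \emph{some} partition of $\mcl{S}\cap\mcl{T}$, and the merged components $\{X_j\}$ form a partition of $\mcl{S}\cup\mcl{T}$. Since $g$ is a minimum over \emph{all} partitions in \eqref{dilw}, any such partition suffices, and the component-wise estimate then holds by construction rather than by the extra (false) rearrangement you attempt at the end. For reference, the paper does not prove this theorem; it cites it from Fujishige \cite{F05}, so there is no in-paper argument to compare against.
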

\begin{example} \label{exp_dilw}
Let $\mcl{M}=\{1,2\}$, and $f(\{1\})=4$, $f(\{2\})=3$, $f(\{1,2\})=8$. It is easy to verify that the function $f$
is intersecting submodular, but not fully submodular since $f(\{1\})+f(\{2\})<f(\{1,2\})$. Applying Dilworth truncation
to the function $f$, we obtain $g$, where $g(\{1\})=4$, $g(\{2\})=3$, $g(\{1,2\})=7$. Moreover, it can be checked that $P(g,\leq)=P(f,\leq)$.
\end{example}

If the  Dilworth
truncation $g$ of the intersecting submodular function $f$ is given, the optimization problem \eqref{app_opt_lp2} can be efficiently solved using Edmond's greedy algorithm.
However, finding the value of  function $g$, even for a single set $\mcl{S}\subseteq M$, involves a  minimization over a set of exponential
size (see \eqref{dilw}). This can be overcome using the facts that $P(g,\leq)=P(f,\leq)$, and that the maximizer of the problem \eqref{app_opt_lp2}
belongs to the base polyhedron $B(g,\leq)$ by Remark \ref{rmk:base}. The result is a  modified version of  Edmond's algorithm that can  solve the  optimization
problem in polynomial time.

\begin{lemma}[Modified Edmond's algorithm, \cite{F05}, \cite{NKI10}]\label{lm:modedm}
When $f$ is intersecting submodular, the maximization problem \eqref{app_opt_lp2} given by $\max_{\mbf{Z}} \sum_{i=1}^m Z_i$, s.t. $\mbf{Z}\in P(f)$,
can be solved as follows.
\begin{algorithm}
\caption{Modified Edmond's Algorithm}
\label{alg:modedm}
\begin{algorithmic}[1]
\STATE Set $j(1),j(2),\ldots, j(m)$ to be an ordering of $\{1,2,\ldots,m\}$ such that $\alpha_{j(1)}\geq \alpha_{j(2)}\geq \cdots \geq \alpha_{j(m)}$
\STATE Initialize $\mbf{Z}=\mbf{0}$.
\FOR {$i=1$ to $m$}
\STATE $Z_{j(i)}=\min_{\mcl{S}} \{f(\mcl{S})-Z(\mcl{S}) : j(i)\in \mcl{S},~~\mcl{S}\subseteq \mcl{A}_i \}$.
\ENDFOR
\end{algorithmic}
\end{algorithm}
\end{lemma}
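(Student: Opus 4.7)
The plan is to reduce to the submodular case via the Dilworth truncation. By Theorem~\ref{thm:dilworth}, there exists a submodular $g$ with $P(g,\leq)=P(f,\leq)$, and by Theorem~\ref{thm:Edm} the standard Edmond's greedy algorithm applied to $g$ yields the unique maximizer of \eqref{app_opt_lp2} with entries $Z^{(g)}_{j(i)}=g(\mcl{A}_i)-g(\mcl{A}_{i-1})$. It therefore suffices to show that the modified algorithm reproduces exactly this vector without ever having to evaluate $g$. I will proceed by induction on $i$, with inductive hypothesis that $Z^*_{j(k)}=g(\mcl{A}_k)-g(\mcl{A}_{k-1})$ for all $k<i$ (so that $Z^*(\mcl{A}_{i-1})=g(\mcl{A}_{i-1})$), and that the partial vector already satisfies $Z^*(\mcl{T})\leq g(\mcl{T})$ for every $\mcl{T}\subseteq\mcl{A}_{i-1}$ (equivalent, by $P(f,\leq)=P(g,\leq)$, to the $f$-feasibility the algorithm enforces). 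Setting $c:=g(\mcl{A}_i)-g(\mcl{A}_{i-1})$, the inductive step is to prove that the modified update assigns $Z^*_{j(i)}=c$.

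For the direction $Z^*_{j(i)}\geq c$ (that is, $c$ is a feasible candidate), take any admissible $\mcl{T}$ with $j(i)\in\mcl{T}\subseteq\mcl{A}_i$ and apply submodularity of $g$ to the pair $(\mcl{T},\mcl{A}_{i-1})$, noting $\mcl{T}\cup\mcl{A}_{i-1}=\mcl{A}_i$ and $\mcl{T}\cap\mcl{A}_{i-1}=\mcl{T}\setminus\{j(i)\}$:
\begin{align*}
g(\mcl{T})+g(\mcl{A}_{i-1})\;\geq\;g(\mcl{A}_i)+g(\mcl{T}\setminus\{j(i)\}),
\end{align*}
which gives $c\leq g(\mcl{T})-g(\mcl{T}\setminus\{j(i)\})$; combining with $Z^*(\mcl{T}\setminus\{j(i)\})\leq g(\mcl{T}\setminus\{j(i)\})$ from the inductive feasibility yields $c+Z^*(\mcl{T}\setminus\{j(i)\})\leq g(\mcl{T})\leq f(\mcl{T})$, which is exactly the feasibility condition required at step $i$. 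For the matching direction $Z^*_{j(i)}\leq c$, I will take as witness the block $\mcl{V}^*\ni j(i)$ from a Dilworth-optimal partition of $\mcl{A}_i$: optimality of the partition supplies the saturation identity $g(\mcl{A}_i)=f(\mcl{V}^*)+g(\mcl{A}_i\setminus\mcl{V}^*)$, and then the telescoping $Z^*(\mcl{V}^*\setminus\{j(i)\})=g(\mcl{A}_{i-1})-Z^*(\mcl{A}_i\setminus\mcl{V}^*)$ combined with $Z^*(\mcl{A}_i\setminus\mcl{V}^*)\leq g(\mcl{A}_i\setminus\mcl{V}^*)$ yields $f(\mcl{V}^*)-Z^*(\mcl{V}^*\setminus\{j(i)\})\leq c$, so the minimum over admissible $\mcl{S}$ is at most $c$.

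The main obstacle I foresee is reconciling the algorithm's constraint set (written in terms of $f$) with the submodular structure required for greedy optimality (which lives on $g$). The bridge is the identity $P(f,\leq)=P(g,\leq)$ from Theorem~\ref{thm:dilworth}, together with the fact that induction maintains the tightness $Z^*(\mcl{A}_\ell)=g(\mcl{A}_\ell)$ only on the nested chain $\mcl{A}_\ell$ and \emph{not} for arbitrary subsets; accordingly, both the feasibility direction and the witness direction must be argued using only this weaker fact together with the global feasibility $Z^*(\mcl{T})\leq g(\mcl{T})$. Once the two-sided bound is in hand the induction closes, and the modified algorithm's output coincides with Edmond's greedy output on $g$, which by Theorem~\ref{thm:Edm} solves \eqref{app_opt_lp2}.
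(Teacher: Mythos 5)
Your proof is correct, and it fills in a result that the paper cites from \cite{F05} and \cite{NKI10} without reproducing a proof; the paper only offers informal intuition ("at each iteration the algorithm reaches the boundary of $P(f,\leq)$"). Your route---reduce to the Dilworth truncation $g$, invoke Theorem~\ref{thm:Edm} for $g$, and then show by induction that the modified algorithm reconstructs $g(\mcl{A}_i)-g(\mcl{A}_{i-1})$ at step $i$ without ever computing $g$---is the standard proof and is internally consistent. Both halves of the two-sided bound are sound: the lower bound uses submodularity of $g$ on the pair $(\mcl{T},\mcl{A}_{i-1})$ plus the inductive feasibility $Z^*\leq g$ on $2^{\mcl{A}_{i-1}}$ and $g\leq f$; the upper bound correctly exploits the splitting identity $g(\mcl{A}_i)=f(\mcl{V}^*)+g(\mcl{A}_i\setminus\mcl{V}^*)$ for the block $\mcl{V}^*\ni j(i)$ of a Dilworth-optimal partition, which follows because removing $\mcl{V}^*$ from that partition gives a partition of $\mcl{A}_i\setminus\mcl{V}^*$ and conversely appending $\mcl{V}^*$ to any partition of $\mcl{A}_i\setminus\mcl{V}^*$ gives a partition of $\mcl{A}_i$. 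The decomposition $\mcl{A}_{i-1}=(\mcl{A}_i\setminus\mcl{V}^*)\sqcup(\mcl{V}^*\setminus\{j(i)\})$ and the tightness $Z^*(\mcl{A}_{i-1})=g(\mcl{A}_{i-1})$ then close the argument, and the inductive invariant (feasibility on all of $2^{\mcl{A}_i}$, tightness on the chain) propagates correctly.

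Two cosmetic remarks. First, you call $Z^{(g)}$ the ``unique'' maximizer of \eqref{app_opt_lp2}; Theorem~\ref{thm:Edm} only asserts that it is \emph{a} maximizer (ties in the $\alpha_i$'s generally produce a face of optima), but this does not affect your argument since you only need one maximizer to compare against. Second, you should make explicit the small rewriting step that, at the moment $Z_{j(i)}$ is computed, the stored value is $Z_{j(i)}=0$, so the quantity $f(\mcl{S})-Z(\mcl{S})$ in the algorithm equals $f(\mcl{S})-Z(\mcl{S}\setminus\{j(i)\})$, which is the form you actually manipulate. Neither point is a gap.
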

The following statement directly follows from Theorem \ref{thm:dilworth} and Remark~\ref{rmk:edm_base}.
\begin{remark}\label{rmk:modedm_base}
When $f$ is intersecting submodular, a maximizer $\mbf{Z}$ of the optimization problem \eqref{app_opt_lp2} satisfies $\sum_{i=1}^m Z_i = g(\mcl{M})$,
where $g$ is the Dilworth truncation of $f$.
\end{remark}

What this algorithm essentially does is that at each iteration $i=1,2,\ldots,m$, it identifies a vector \\
$\left[
   \begin{array}{cccc}
     Z_{j(1)} & Z_{j(2)} & \ldots & Z_{j(i)} \\
   \end{array}
 \right]$ that lies on the boundary of the polyhedron $P(f,\leq)$. The polynomial complexity of the modified Edmond's algorithm
is due the fact that the function $f(\mcl{S})-Z(\mcl{S})$ is submodular since $\mcl{S}$ is not an empty set, and finding the minimum value of a submodular function  is known to polynomial (see \cite{orlin2009faster}).

\begin{example}
We illustrate the modified Edmond's algorithm for the function $f$ in Example \ref{exp_dilw}. Let us consider
maximization problem $\max_{\mbf{Z}} 5Z_1+Z_2$, s.t. $\mbf{Z}\in P(f,\leq)$. As mentioned above, at each iteration
of the algorithm, the optimal vector should lie on the boundary of the polyhedron $P(f,\leq)$. Hence, $Z_1=4$.
In the second iteration, in order to reach the boundary of $P(f,\leq)$, $Z_2$ can be either $f(\{1,2\})-Z_1=4$, or
$f(\{2\})=3$. Since the first choice results in the vector that does not belong to $P(f,\leq)$, the
solution is $Z_2=3$ (see Figure \ref{fig:modedm}).
\begin{figure}[h]
\begin{center}
\includegraphics[scale=0.55]{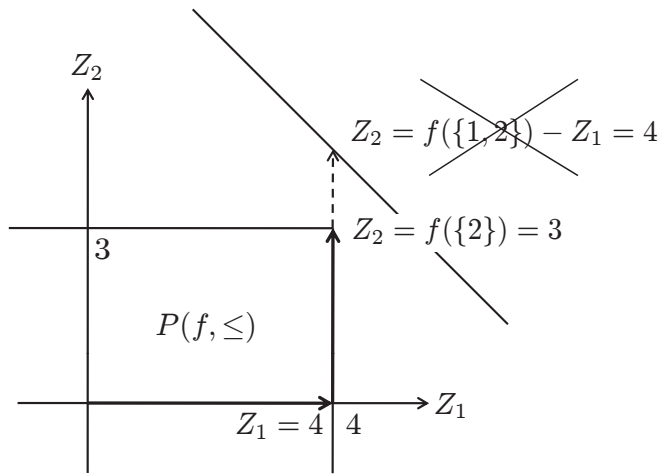}
\end{center}
\vspace{-0.2in}
\caption{Modified Edmond's algorithm applied to the maximization problem over the polyhedron $P(f,\leq)$,
         where $f(\emptyset)=0$, $f(\{1\})=4$, $f(\{2\})=3$, $f(\{1,2\})=8$.}\label{fig:modedm}
\end{figure}
\end{example}
\begin{theorem}[Complexity of the modified Edmond's algorithm \cite{NKI10}, \cite{F05}]\label{thm:ModEdm}
For an intersecting submodular function $f$, the optimization problem
\eqref{app_opt_lp2} can be solved in polynomial time using the modified version of Edmond's algorithm described in Lemma~\ref{lm:modedm}.
The complexity of this algorithm is $\mcl{O}(m \cdot SFM(m))$, where $SFM(m)$ is the complexity of minimizing
submodular function.
\end{theorem}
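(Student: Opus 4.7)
The plan is to show that each of the $m$ iterations in Algorithm \ref{alg:modedm} can be executed by a single call to a submodular function minimization oracle; the stated complexity then follows immediately. Correctness (that the output lies in $B(g,\leq)$ and maximizes $\sum_i \alpha_i Z_i$ over $P(f,\leq)$) is already established in Lemma~\ref{lm:modedm}, so I only need to justify the per-iteration cost.

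First I would count iterations: the outer loop runs exactly $m$ times, and all remaining operations inside the loop other than Step~4 (scanning $\alpha$'s, updating $\mbf{Z}$, building $\mcl{A}_i$) can be done in $\mcl{O}(m)$ time, which is dominated by $SFM(m)$. So the entire cost is driven by the $m$ evaluations of
\begin{align}
Z_{j(i)} = \min_{\mcl{S}} \bigl\{ f(\mcl{S}) - Z(\mcl{S}) \,:\, j(i) \in \mcl{S},\ \mcl{S} \subseteq \mcl{A}_i \bigr\}. \nonumber
\end{align}

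The key step, and the only non-routine part, is to argue that this constrained minimization is a genuine submodular function minimization. Define $h(\mcl{S}) \triangleq f(\mcl{S}) - Z(\mcl{S})$ on the feasible family $\mcl{F}_i \triangleq \{\mcl{S} : j(i)\in \mcl{S},\, \mcl{S}\subseteq \mcl{A}_i\}$. For any $\mcl{S},\mcl{T}\in \mcl{F}_i$ we have $j(i)\in \mcl{S}\cap \mcl{T}$, so the intersecting submodular inequality \eqref{eq:inter_sub} applies to $f$, while $Z(\cdot)$ is modular and therefore contributes equality; adding these gives $h(\mcl{S})+h(\mcl{T})\geq h(\mcl{S}\cup\mcl{T})+h(\mcl{S}\cap\mcl{T})$. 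To turn the constrained problem into an unconstrained one on a power set, I would contract $j(i)$: for $\mcl{T}\subseteq \mcl{A}_i\setminus\{j(i)\}$ set $\tilde h(\mcl{T})\triangleq h(\mcl{T}\cup\{j(i)\})$. The same calculation shows $\tilde h$ is submodular on $2^{\mcl{A}_i\setminus\{j(i)\}}$, and $\min_{\mcl{S}\in \mcl{F}_i} h(\mcl{S}) = \min_{\mcl{T}\subseteq \mcl{A}_i\setminus\{j(i)\}}\tilde h(\mcl{T})$. Hence Step~4 is a standard submodular minimization on a ground set of size at most $m$, solvable in $SFM(m)$ by, e.g., Orlin's algorithm \cite{orlin2009faster}.

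Combining these two observations, the total running time of Algorithm~\ref{alg:modedm} is $m\cdot SFM(m) + \mcl{O}(m^2) = \mcl{O}(m\cdot SFM(m))$, as claimed. The only delicate point I foresee is verifying that the contraction argument preserves submodularity under the intersecting submodular hypothesis, which is exactly why the condition $\mcl{S}\cap \mcl{T}\neq\emptyset$ in Definition~\ref{def:inter_submodular} is automatically satisfied on $\mcl{F}_i$; this is the reason the modified algorithm carries the constraint $j(i)\in \mcl{S}$ in the first place.
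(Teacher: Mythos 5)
Your proposal is correct and follows the same route the paper (very tersely) sketches: the paper's entire justification is the one sentence after Algorithm~\ref{alg:modedm} noting that ``$f(\mcl{S})-Z(\mcl{S})$ is submodular since $\mcl{S}$ is not an empty set,'' together with the $m$ iterations and the citation to SFM. Your write-up is simply more careful at the one point where the paper is loose: it is not that $\mcl{S}\neq\emptyset$ that matters, but that the constraint $j(i)\in\mcl{S}$ forces \emph{every pair} of candidate sets to intersect (at $j(i)$), so the intersecting-submodular inequality~\eqref{eq:inter_sub} applies throughout the feasible family; your contraction to $\tilde h$ on $2^{\mcl{A}_i\setminus\{j(i)\}}$ makes Step~4 literally an unconstrained SFM instance, which the paper leaves implicit.
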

\begin{remark}
The submodular function minimization routine can be done in polynomial time. The best known algorithm to our knowledge
is proposed by Orlin in \cite{orlin2009faster}, and has complexity $\mcl{O}(m^5\cdot \gamma+m^6)$,
where $\gamma$ is complexity of computing the submodular function.
\end{remark}

 \label{sec:comb}
\section{Communication for Omniscience Rates}  \label{sec:CO_rates}

In this section we propose an efficient algorithm for computing a rate tuple which belongs to $\mcl{R}(\und{\alpha})$,
\emph{i.e.}, an optimal rate tuple w.r.t. the optimization problem
\begin{align}
\min_{\mbf{R}} \sum_{i=1}^m R_i,~~~\text{s.t.}~~\mbf{R} \in \mcl{R}.
\end{align}
We start with the special case when $\und{\alpha}=\left[
                                                    \begin{array}{cccc}
                                                      1 & 1 & \ldots & 1 \\
                                                    \end{array}
                                                  \right]$, henceforth denoted as $\und{\alpha}=\mbf{1}$.
This instance represents a key building block for solving the problem for general cost vector $\und{\alpha}$.
We begin by observing that the rate region defined in \eqref{cut_set} can be represented as a polyhedron of some set
function, say $f^{\star}$, to be defined later.
In this section we solve LP$_1(\mbf{1})$ by considering the dual set function $f$ of $f^{\star}$, and solving the corresponding
dual optimization problem. We show that it is possible to construct a function $f^{\star}$ defining the rate region $\mcl{R}$ such that its dual function $f$ is \emph{intersecting submodular}.
Therefore,  the underlying optimization problem can be solved  in polynomial time using the modified Edmond's algorithm.
Therefore, the optimization problem LP$_1(\mbf{1})$ can be stated as follows
\begin{align}
\min_{\mbf{R}} \sum_{i=1}^m R_i,~~~\text{s.t.}~~\mbf{R}\in P(f^{\star},\geq), \label{lp1_polyh}
\end{align}
where $P(f^{\star})$ is a polyhedron of a set function $f^{\star}$ such that $P(f^{\star},\geq)=\mcl{R}$.
To that end, we can choose
\begin{align}
f^{\star}(\mcl{S})=H(X_{\mcl{S}}|X_{\mcl{S}^c}),~~\forall \mcl{S} \subset \mcl{M}. \label{f_star}
\end{align}
Notice that the function $f^{\star}$ is not completely defined in \eqref{f_star} because the value
of $f^{\star}(\mcl{M})$ is missing. Therefore, we need to assign $f^{\star}(\mcl{M})$ such
that $P(f^{\star},\geq)=\mcl{R}$ and $B(f^{\star},\geq)\neq \emptyset$. The second condition
ensures equivalence between the optimization problem \eqref{lp1_polyh} and the corresponding dual problem (see Lemma \ref{app_rmk1}).
It is not hard to see that taking $f^{\star}(\mcl{M})=R_{CO}(\mbf{1})$ satisfies all the conditions above. Thus, we have
\begin{align}
f^{\star}(\mcl{S})=
\begin{cases}
H(X_{\mcl{S}}|X_{\mcl{S}^c}) & \text{if}~~\mcl{S} \subset \mcl{M}, \\
R_{CO}(\mbf{1})              & \text{if}~~\mcl{S}=\mcl{M}.
\end{cases}
\end{align}
Of course $R_{CO}(\mbf{1})$ is not known a priori, but this issue will be addressed later.
According to Definition~\ref{def:dual}, the dual set function $f$ of $f^{\star}$ has the following form
\begin{align}
f(\mcl{S})=
\begin{cases}
R_{CO}(\mbf{1})-H(X_{\mcl{S}^c}|X_{\mcl{S}}) & \text{if}~~\emptyset \neq \mcl{S} \subseteq \mcl{M}, \label{fcn:f_rco} \\
0                                            & \text{if}~~\mcl{S}=\emptyset.
\end{cases}
\end{align}
Using the duality result in Lemma \ref{app_rmk1}, it follows that the optimization problem \eqref{lp1_polyh} is equivalent to
\begin{align}
\max_{\mbf{Z}} \sum_{i=1}^m Z_i,~~~\text{s.t.}~~~\mbf{Z}\in P(f,\leq). \label{problem2}
\end{align}
To avoid cumbersome expressions, hereafter we use $P(f)$ and $B(f)$ to denote $P(f,\leq)$ and $B(f,\leq)$, respectively.
Hence, the optimal value of the optimization problem \eqref{problem2} is $R_{CO}(\mbf{1})$. However, the
value of $R_{CO}(\mbf{1})$ is not known a priori. To that end, let us replace $R_{CO}(\mbf{1})$ in \eqref{fcn:f_rco}
with a variable $\beta$, and construct a two-argument function $f(\mcl{S},\beta)$ as follows.
\begin{align}
f(\mcl{S},\beta)\triangleq
\begin{cases}
\beta-H(X_{\mcl{S}^c}|X_{\mcl{S}}) & \text{if}~~\emptyset \neq \mcl{S} \subseteq \mcl{M}, \label{fcn:f}\\
0                                            & \text{if}~~\mcl{S}=\emptyset.
\end{cases}
\end{align}
\begin{lemma} \label{lm:f}
Function $f(\mcl{S},\beta)$ defined in \eqref{fcn:f} is intersecting submodular. When $\beta \geq H(X_{\mcl{M}})$, the function
$f(\mcl{S},\beta)$ is submodular.
\end{lemma}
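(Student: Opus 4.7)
The plan is to reduce both claims to standard information-theoretic inequalities for the joint entropy, namely submodularity and subadditivity of $\mathcal{S} \mapsto H(X_{\mathcal{S}})$. The key rewriting is to expand the conditional entropy as $H(X_{\mathcal{S}^c}\mid X_{\mathcal{S}}) = H(X_{\mathcal{M}}) - H(X_{\mathcal{S}})$, so that on any nonempty $\mathcal{S}$ we have $f(\mathcal{S},\beta) = \beta - H(X_{\mathcal{M}}) + H(X_{\mathcal{S}})$. This linearizes $f$ in the entropy and lets me compare the two sides of the submodularity inequality term by term.

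For intersecting submodularity, I would take $\mathcal{S},\mathcal{T}\subseteq\mathcal{M}$ with $\mathcal{S}\cap\mathcal{T}\neq\emptyset$. Then all four sets $\mathcal{S}, \mathcal{T}, \mathcal{S}\cup\mathcal{T}, \mathcal{S}\cap\mathcal{T}$ are nonempty, so using the rewriting above the inequality $f(\mathcal{S},\beta)+f(\mathcal{T},\beta)\geq f(\mathcal{S}\cup\mathcal{T},\beta)+f(\mathcal{S}\cap\mathcal{T},\beta)$ collapses, after cancellation of the $\beta$ and $H(X_{\mathcal{M}})$ terms, to
\begin{equation*}
H(X_{\mathcal{S}}) + H(X_{\mathcal{T}}) \;\geq\; H(X_{\mathcal{S}\cup\mathcal{T}}) + H(X_{\mathcal{S}\cap\mathcal{T}}),
\end{equation*}
which is the well-known submodularity of joint entropy. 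This handles all cases with a nonempty intersection, giving intersecting submodularity for any $\beta$.

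For full submodularity under $\beta \geq H(X_{\mathcal{M}})$, the only remaining cases are those where $\mathcal{S}\cap\mathcal{T}=\emptyset$. If one of $\mathcal{S},\mathcal{T}$ is empty, submodularity holds trivially because $f(\emptyset,\beta)=0$. Otherwise $\mathcal{S},\mathcal{T}$ are nonempty and disjoint, and substituting the rewritten $f$ (with $f(\emptyset,\beta)=0$ on the right-hand side) reduces the inequality to
\begin{equation*}
\beta \;\geq\; H(X_{\mathcal{M}}) + H(X_{\mathcal{S}\cup\mathcal{T}}) - H(X_{\mathcal{S}}) - H(X_{\mathcal{T}}).
\end{equation*}
By subadditivity of entropy, $H(X_{\mathcal{S}\cup\mathcal{T}})\leq H(X_{\mathcal{S}})+H(X_{\mathcal{T}})$, so the right-hand side is at most $H(X_{\mathcal{M}})$, and the hypothesis $\beta \geq H(X_{\mathcal{M}})$ closes the argument.

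There is no real obstacle here; the only subtle point is the bookkeeping around the empty set, which is why the function fails to be fully submodular for small $\beta$: the ``jump'' from $f(\emptyset,\beta)=0$ to $f(\{i\},\beta)=\beta-H(X_{\mathcal{M}\setminus\{i\}}\mid X_i)$ is what forces the extra condition $\beta\geq H(X_{\mathcal{M}})$ to appear precisely in the disjoint case.
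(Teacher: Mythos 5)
Your proof is correct and follows essentially the same route as the paper: rewrite $f(\mcl{S},\beta)=\beta-H(X_{\mcl{M}})+H(X_{\mcl{S}})$, invoke submodularity of joint entropy for nonempty intersections, and reduce the disjoint case to subadditivity (equivalently, nonnegativity of mutual information) combined with $\beta\geq H(X_{\mcl{M}})$. The only cosmetic difference is that you explicitly dispatch the case where one of the sets is empty, which the paper leaves implicit.
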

Proof of Lemma \ref{lm:f} is provided in Appendix \ref{app:lm:f}.
Considering the optimization problem
\begin{align}
\max_{\mbf{Z}} \sum_{i=1}^m Z_i,~~~\text{s.t.}~~~\mbf{Z}\in P(f,\beta), \label{mod_problem2}
\end{align}
as a function of $\beta$, the goal is to identify  its characteristics at the  point $\beta=R_{CO}(\mbf{1})$.
Hereafter, we refer to the optimization problem \eqref{mod_problem2} as LP$_2(\beta)$.
\begin{theorem} \label{thm:beta}
The optimal value $R_{CO}(\mbf{1})$ can be obtained as follows
\begin{align}
R_{CO}(\mbf{1}) = \min \beta~~\text{such that $\beta$ is the optimal value of LP$_2(\beta)$}. \label{opt:min_beta}
\end{align}
\end{theorem}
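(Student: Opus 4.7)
The plan is to split the desired equation into two inequalities, $\min \beta \leq R_{CO}(\mbf{1})$ and $\min \beta \geq R_{CO}(\mbf{1})$, each proved through the polyhedral duality of Lemmas~\ref{lm:app_opt1} and~\ref{app_rmk1}. The preliminary observation that underlies both directions is that a direct substitution of the definition~\eqref{fcn:f} into Definition~\ref{def:dual} yields $(f(\cdot,\beta))^\star(\mcl{S}) = H(X_{\mcl{S}} \mid X_{\mcl{S}^c})$ for every $\mcl{S} \subsetneq \mcl{M}$ and $(f(\cdot,\beta))^\star(\mcl{M}) = \beta$. In other words, the dual of $f(\cdot, \beta)$ agrees with the set function $f^\star$ introduced above~\eqref{fcn:f_rco}, provided the free value $f^\star(\mcl{M})$ is set to $\beta$.

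For the direction $\min \beta \leq R_{CO}(\mbf{1})$, I would fix $\beta = R_{CO}(\mbf{1})$ and show that the optimal value of LP$_2(\beta)$ equals $\beta$. Since $\mcl{R}$ is non-empty (for example, $R_i = H(X_i)$ trivially lies in it), LP$_1(\mbf{1})$ admits a minimizer $\mbf{R}^*$ with $\sum_i R_i^* = R_{CO}(\mbf{1}) = \beta = f^\star(\mcl{M})$, so $\mbf{R}^* \in B(f^\star, \geq)$. Since Lemma~\ref{lm:app_opt1} establishes $B(f^\star, \geq) = B(f, \beta)$ as sets, the base polyhedron $B(f, \beta)$ is non-empty, and Lemma~\ref{app_rmk1} then identifies the optimal values of LP$_2(\beta)$ and LP$_1(\mbf{1})$, both equal to $\beta$.

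For the direction $\min \beta \geq R_{CO}(\mbf{1})$, I would take any $\beta$ satisfying the right-hand-side condition and pick a maximizer $\mbf{Z}^*$ of LP$_2(\beta)$, so that $\sum_i Z_i^* = \beta = f(\mcl{M}, \beta)$. This places $\mbf{Z}^*$ in $B(f, \beta)$, which is therefore non-empty, and Lemma~\ref{lm:app_opt1} gives $\mbf{Z}^* \in B((f(\cdot,\beta))^\star, \geq) \subseteq P((f(\cdot,\beta))^\star, \geq)$. By the preliminary observation this spells out to the cut-set inequalities $Z^*(\mcl{S}) \geq H(X_{\mcl{S}} \mid X_{\mcl{S}^c})$ for every $\mcl{S} \subsetneq \mcl{M}$. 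Hence $\mbf{Z}^* \in \mcl{R}$, and the minimality of $R_{CO}(\mbf{1})$ over $\mcl{R}$ forces $\beta = \sum_i Z_i^* \geq R_{CO}(\mbf{1})$.

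The main obstacle I anticipate is purely bookkeeping: correctly identifying the dual of $f(\cdot, \beta)$ and verifying that the hypothesis $B(\cdot) \neq \emptyset$ required by Lemma~\ref{lm:app_opt1} is available in each direction (supplied by the feasibility of LP$_1(\mbf{1})$ in the first direction, and by the existence of the maximizer $\mbf{Z}^*$ in the second). No appeal to the partition characterization of $R_{CO}(\mbf{1})$ or to Dilworth truncation is needed; the argument is driven entirely by the polyhedral duality already developed in Section~\ref{sec:comb}.
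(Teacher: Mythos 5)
Your proposal is correct and follows essentially the same route the paper takes: the paper's by-contradiction argument for $\beta' < R_{CO}(\mbf{1})$ is precisely your ``$\min\beta \geq R_{CO}(\mbf{1})$'' direction (a maximizer $\mbf{Z}$ of LP$_2(\beta)$ with $Z(\mcl{M})=\beta$ is shown, by complementing, to satisfy the cut-set constraints and hence to lie in $\mcl{R}$), and the paper's unproved remark that $\beta = R_{CO}(\mbf{1})$ is feasible is exactly your ``$\min\beta\leq R_{CO}(\mbf{1})$'' direction via Lemmas~\ref{lm:app_opt1} and~\ref{app_rmk1}. The only difference is presentational: you package the constraint manipulation through the duality lemmas and spell out the feasibility direction that the paper only asserts, which is a slightly cleaner and more self-contained version of the same argument.
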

\begin{proof}
We prove this theorem by contradiction. First, notice that $\beta=R_{CO}(\mbf{1})$ is a feasible solution
for the optimization problem \eqref{opt:min_beta}.
Next, let us assume that for some $\beta'<R_{CO}(\mbf{1})$ there exists a vector $\mbf{Z}$ that is a maximizer
of the problem LP$_2(\beta')$ such that $Z(\mcl{M})=\beta'=f(\mcl{M},\beta')$.
Since $\mbf{Z} \in P(f,\beta')$ it must satisfy the following set of inequalities
\begin{align}
Z(\mcl{S})\leq \beta' - H(X_{\mcl{S}^c}|X_{\mcl{S}}),~~~\forall \emptyset \neq \mcl{S} \subseteq \mcl{M}. \label{thm:ineq1}
\end{align}
Since $\beta'=Z(\mcl{M})$, and $Z(\mcl{S}^c)=Z(\mcl{M})-Z(\mcl{S})$, we can write \eqref{thm:ineq1} as
\begin{align}
Z(\mcl{S}^c)\geq H(X_{\mcl{S}^c}|X_{\mcl{S}}),~~~\forall \emptyset \neq \mcl{S}\subset \mcl{M}.
\end{align}
Therefore, $\mbf{Z} \in \mcl{R}$ is a feasible rate tuple w.r.t. the optimization problem LP$_1(\mbf{1})$ and, hence,
it must hold that $\beta'\geq R_{CO}(\mbf{1})$.
This is in contradiction with our previous statement that $\beta'<R_{CO}(\mbf{1})$.
\end{proof}
Since $R_{CO}(\mbf{1})$ can be trivially upper bounded by $H(X_{\mcl{M}})$ and lower bounded by $0$, we can restrict
the search space in \eqref{opt:min_beta} to $0\leq \beta \leq H(X_{\mcl{M}})$.

Function $f(\mcl{S},\beta)$ is intersecting submodular for the case of interest when $0\leq \beta\leq H(X_{\mcl{M}})$.
As noted in Theorem \ref{thm:dilworth}, for the intersecting submodular function $f(\mcl{S},\beta)$, there exists a
submodular function, here denoted by Dilworth truncation $g(\mcl{S},\beta)$, such that $P(f,\beta)=P(g,\beta)$.
\begin{align}
g(\mcl{S},\beta)=\min_{\mcl{P}}\left\{\sum_{\mcl{V}\in \mcl{P}}f(\mcl{V},\beta) : \text{$\mcl{P}$ is a partition of $\mcl{S}$}\right\}.  \label{dilw1}
\end{align}
\begin{definition}
Let $\mcl{P}(\beta)$ denote an optimal partitioning of the set $\mcl{M}$ according to \eqref{dilw1} for the given $\beta$.
\end{definition}
From Remark \ref{rmk:modedm_base} it follows that $g(\mcl{M},\beta)$ is the optimal value of the optimization problem
LP$_2(\beta)$ for any given $\beta$. Hence, it can be obtained in polynomial time by applying the modified Edmond's algorithm to
the set function $f(\mcl{S},\beta)$. Moreover, the corresponding optimal partition $\mcl{P}(\beta)$ can be efficiently obtained by adding
two additional steps to  the modified Edmond's algorithm
as shown in \cite{NKI10} and \cite{F05} (see Algorithm~\ref{alg:partition} in  Appendix \ref{app:dilw}).

From Theorem~\ref{thm:beta}, it follows that the optimal omniscience rate $R_{CO}(\mbf{1})$ can be calculated as follows:
\begin{align}
R_{CO}(\mbf{1})=\min_{0\leq \beta \leq H(X_{\mcl{M}})} \beta,~~~\text{s.t.}~~g(\mcl{M},\beta)=\beta. \label{eq:rco}
\end{align}
Notice that $g(\mcl{M},\beta)=f(\mcl{M},\beta)=\beta$ whenever the optimal partitioning of the set $\mcl{M}$
according to \eqref{dilw1} is of cardinality $1$, \emph{i.e.}, $\mcl{P}(\beta)=\{\{\mcl{M}\}\}$.

In the further text we show how to
solve the optimization problem \eqref{eq:rco} with at most $m$ calls of the modified Edmond's algorithm, which makes the
complexity of the entire algorithm polynomial in $m$.
From \eqref{dilw1} it follows that for every $\beta$, the function $g(\mcl{M},\beta)$ can be represented as
\begin{align}
g(\mcl{M},\beta) = |\mcl{P}(\beta)|\beta - \sum_{\mcl{S}\in \mcl{P}(\beta)} H(X_{\mcl{S}^c}|X_{\mcl{S}}). \label{fcn:g_beta}
\end{align}
Therefore, $g(\mcl{M},\beta)$ is piecewise linear in $\beta$.
\begin{lemma} \label{lm:nondec}
Function $g(\mcl{M},\beta)$ has the following properties
\begin{enumerate}
\item It has at most $m$ linear segments.
\item It has non-increasing slope, \emph{i.e.}, $g(\mcl{M},\beta)$ is a concave function.
\item The last linear segment is of slope $1$.
\end{enumerate}
Moreover, $\beta=R_{CO}(\mbf{1})$ represents a breakpoint
in $g(\mcl{M},\beta)$ between the linear segment with slope $1$ and consecutive linear segment with the larger slope.
\end{lemma}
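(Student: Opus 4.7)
The plan is to exploit the explicit representation \eqref{fcn:g_beta}. For each fixed partition $\mcl{P}$ of $\mcl{M}$, the expression
\begin{align}
\ell_{\mcl{P}}(\beta)\ \triangleq\ |\mcl{P}|\beta - \sum_{\mcl{S}\in\mcl{P}} H(X_{\mcl{S}^c}|X_{\mcl{S}}) \nonumber
\end{align}
is affine in $\beta$ with slope $|\mcl{P}|\in\{1,2,\ldots,m\}$, and by \eqref{dilw1} we have $g(\mcl{M},\beta)=\min_{\mcl{P}}\ell_{\mcl{P}}(\beta)$. Thus $g(\mcl{M},\cdot)$ is the lower envelope of a finite collection of affine functions whose slopes take values in $\{1,\ldots,m\}$. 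Grouping lines of equal slope and retaining only the smallest intercept for each slope value yields at most $m$ candidate affine pieces, which proves item (1). Item (2) is immediate, since a pointwise minimum of affine functions is concave, so its slope is non-increasing in $\beta$.

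For item (3), I would single out the trivial partition $\mcl{P}=\{\mcl{M}\}$. By \eqref{fcn:f}, $\ell_{\{\mcl{M}\}}(\beta)=\beta$, a line of slope $1$. Since every other partition produces a line of slope at least $2$, and since $\ell_{\{\mcl{M}\}}(\beta)\le\ell_{\mcl{P}}(\beta)$ for all sufficiently large $\beta$, the rightmost active piece of the envelope must be this slope-$1$ line. A clean way to anchor this is to note that Lemma~\ref{lm:f} gives $f(\cdot,\beta)$ submodular whenever $\beta\ge H(X_{\mcl{M}})$; submodularity of $f(\cdot,\beta)$ implies (by applying \eqref{eq:submodular} iteratively to disjoint unions) that $\sum_{\mcl{V}\in\mcl{P}} f(\mcl{V},\beta)\ge f(\mcl{M},\beta)=\beta$ for every partition $\mcl{P}$, so the optimum in \eqref{dilw1} is attained by $\{\mcl{M}\}$ and $g(\mcl{M},\beta)=\beta$ on $[H(X_{\mcl{M}}),\infty)$.

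For the final claim, combine Theorem~\ref{thm:beta} (equivalently \eqref{eq:rco}) with concavity. The relation $\ell_{\{\mcl{M}\}}(\beta)=\beta$ gives the universal bound $g(\mcl{M},\beta)\le\beta$. At $\beta=R_{CO}(\mbf{1})$ equality holds by definition, and at $\beta=H(X_{\mcl{M}})$ equality holds by the submodular argument above. Concavity on the interval $[R_{CO}(\mbf{1}),H(X_{\mcl{M}})]$ together with the endpoint values $R_{CO}(\mbf{1})$ and $H(X_{\mcl{M}})$ forces $g(\mcl{M},\beta)\ge\beta$ on that interval, and the reverse inequality makes this an equality. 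Hence the slope-$1$ segment covers exactly $[R_{CO}(\mbf{1}),\infty)$. To identify $R_{CO}(\mbf{1})$ as a genuine breakpoint, observe that if the piece immediately to the left of $R_{CO}(\mbf{1})$ also had slope $1$, the active affine line there would necessarily coincide with $y=\beta$ (since the envelope is continuous and equals $\beta$ at $R_{CO}(\mbf{1})$), contradicting the minimality of $R_{CO}(\mbf{1})$ in \eqref{eq:rco}. Thus the piece to the left has strictly larger slope (hence at least $2$), completing the proof.

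The step I expect to require the most care is the clean verification that the slope-$1$ piece persists all the way from $R_{CO}(\mbf{1})$ to $+\infty$; the concavity sandwich combined with the submodular-regime identification of $g(\mcl{M},\beta)=\beta$ for $\beta\ge H(X_{\mcl{M}})$ is what makes this work without needing to enumerate partitions explicitly.
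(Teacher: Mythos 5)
Your proof is correct and follows essentially the same route as the paper's: items (1) and (2) come from viewing $g(\mcl{M},\cdot)$ as a lower envelope of affine functions grouped by the cardinality of the partition (the paper makes this explicit by defining $g(\mcl{M},\beta,i)$ for each fixed $|\mcl{P}|=i$), and item (3) comes from Lemma~\ref{lm:f}, which makes $f(\cdot,\beta)$ submodular on $\beta\ge H(X_{\mcl{M}})$ so that the Dilworth truncation is attained by the trivial partition and $g(\mcl{M},\beta)=\beta$ there. Where you go beyond the paper is the final ``Moreover'' clause: the paper's proof stops once it has exhibited slope $1$ at $\beta=H(X_{\mcl{M}})$ and leaves the identification of $R_{CO}(\mbf{1})$ as the breakpoint implicit, whereas your concavity sandwich (using the universal bound $g(\mcl{M},\beta)\le\beta$ and the two equality points $R_{CO}(\mbf{1})$ and $H(X_{\mcl{M}})$) explicitly pins the slope-$1$ segment to exactly $[R_{CO}(\mbf{1}),\infty)$, and your continuity argument rules out a slope-$1$ piece immediately to the left. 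That last step is a genuine tightening of the paper's exposition; the only minor caveat is the degenerate case $R_{CO}(\mbf{1})=0$, where there is no segment to the left and the ``breakpoint'' statement should be read as vacuous.
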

The proof of Lemma \ref{lm:nondec} is provided in Appendix \ref{app:lm:nondec}.
From \eqref{fcn:g_beta} it follows that the  slope of the function $g(\mcl{M},\beta)$ is equal to the cardinality of the optimal partition $\mcl{P}(\beta)$. Since there are at most $m$ linear segments in $g(\mcl{M},\beta)$, we can solve for the breakpoint of interest
according to Lemma \ref{lm:nondec} in polynomial time by performing a binary search. We explain this procedure
on a simple case described in Figure  \ref{fig:rco}.
\begin{figure}[h]
\begin{center}
\includegraphics[scale=0.6]{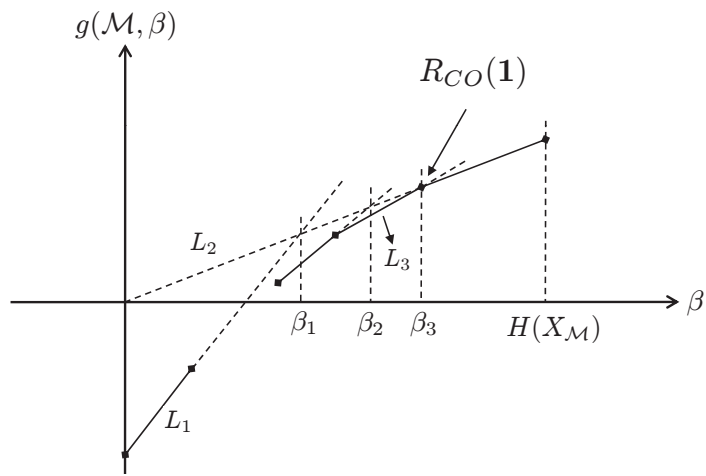}
\end{center}
\vspace{-0.2in}
\caption{Optimal $R_{CO}(\mbf{1})$ can be obtained by intersecting linear segments. First, we intersect the line $L_1$ which corresponds to $\beta=0$, with
         the $45$-degree line $L_2$. The intersecting point $\beta_1$ belongs to the linear segment with slope greater
         than $1$. Then, intersecting the segment $L_3$ to which $\beta_1$ belongs to with the $45$-degree line $L_2$, we obtain
         $\beta_2$, and finally $\beta_3$ after one more intersection. Since the linear segment at $\beta_3$ has
         slope $1$, we conclude that $\beta_3=R_{CO}(\mbf{1})$.} \label{fig:rco}
\end{figure}
From Lemma \ref{lm:nondec} we have that $\beta=R_{CO}(\mbf{1})$ is a breakpoint of $g(\mcl{M},\beta)$ between the linear segment
with slope $1$ and consecutive linear segment with the larger slope. Moreover, for every $\beta$ one can obtain a value of
$g(\mcl{M},\beta)$ and the corresponding optimal partition $\mcl{P}(\beta)$ w.r.t. \eqref{dilw1}
in polynomial time using Algorithm \ref{alg:partition} in Appendix \ref{app:dilw}.
Due to concavity of $g(\mcl{M},\beta)$, the following algorithm will converge to the breakpoint $\beta=R_{CO}(\mbf{1})$
in at most $m$ iterations.

Since $R_{CO}(\mbf{1})\geq 0$, we start by, first, intersecting the line $L_1$ which belongs to the linear segment when $\beta=0$ and the $45$-degree line $L_2$ which
corresponds to the last (rightmost) linear segment. Slope of the line $L_1$ as well as its value can be obtained in polynomial time
by applying Algorithm \ref{alg:partition} for $\beta=0$. Since the function $g(\mcl{M},\beta)$ is piecewise linear and concave, the point of
intersection $\beta_1$ must belong to the linear segment with slope smaller than $|\mcl{P}(0)|$, \emph{i.e.}, $|\mcl{P}(\beta_1)|<|\mcl{P}(0)|$.
$\beta_1$ can be obtained by equating $\beta$ with $\sum_{\mcl{S} \in \mcl{P}(0)} \beta - H(X_{\mcl{S}^c}|X_{\mcl{S}})$. Hence,
\begin{align}
\beta_1=\frac{\sum_{\mcl{S} \in \mcl{P}(0)}  H(X_{\mcl{S}^c}|X_{\mcl{S}})}{|\mcl{P}(0)|-1}. \label{eq:beta1}
\end{align}
Next, by applying Algorithm \ref{alg:partition} for $\beta=\beta_1$, we get $(g(\mcl{M},\beta_1),\mcl{P}(\beta_1))$.
Since $|\mcl{P}(\beta_1)|>1$ (see Figure \ref{fig:rco}), we have not reached the breakpoint of interest yet, because
$R_{CO}(\mbf{1})$ belongs to the linear segment of slope $1$.
Thus, we proceed by intersecting the line $L_3$ which belongs to the linear segment when $\beta=\beta_1$ with the
the $45$-degree line $L_2$. Like in the previous case, we obtain $\beta_2=\frac{\sum_{\mcl{S} \in \mcl{P}(\beta_1)}  H(X_{\mcl{S}^c}|X_{\mcl{S}})}{|\mcl{P}(\beta_1)|-1}$.
Since $|\mcl{P}(\beta_2)|>1$, we need to perform one more intersection to obtain $\beta_3$ for which $|\mcl{P}(\beta_3)|=1$. Hence, $\beta_3=R_{CO}(\mbf{1})$.
For an arbitrary $g(\mcl{M},\beta)$, the binary search algorithm can be constructed as follows.

\begin{algorithm}
\caption{Achieving a rate tuple from the region $\mcl{R}(\mbf{1})$}
\label{alg:bin_search}
\begin{algorithmic}[1]
\STATE Initialize $\beta=0$.
\WHILE {$|\mcl{P}(\beta)|>1$}
\STATE $\beta=\frac{\sum_{\mcl{S} \in \mcl{P}(\beta)}  H(X_{\mcl{S}^c}|X_{\mcl{S}})}{|\mcl{P}(\beta)|-1}$,
       where $\mcl{P}(\beta)$ is obtained from Algorithm \ref{alg:partition}.
\ENDWHILE
\STATE $\beta=R_{CO}(\mbf{1})$.
\end{algorithmic}
\end{algorithm}
It is not hard to see that Algorithm \ref{alg:bin_search} executes at most $m$ iterations, since with each iteration
the intersection point moves to the right to some other linear segment until it hits $R_{CO}(\mbf{1})$ (see Figure \ref{fig:rco}).

Therefore, Algorithm~\ref{alg:bin_search} calls Algorithm~\ref{alg:partition} at most $m$ times.
Since the complexity of Algorithm \ref{alg:partition} is $\mcl{O}(m\cdot SFM(m))$ (see Appendix~\ref{app:dilw}),
the total complexity of obtaining a rate tuple that belongs to $\mcl{R}(\mbf{1})$ through Algorithm~\ref{alg:bin_search} is $\mcl{O}(m^2\cdot SFM(m))$.

\section{Achieving a rate tuple that belongs to  $\mcl{R}(\und{\alpha})$} \label{sec:alpha}

In this section we investigate the problem of computing a rate tuple that belongs to $\mcl{R}(\und{\alpha})$, where
$0\leq \alpha_i<\infty$, $i=1,2,\ldots,m$. We propose an algorithm of polynomial complexity that is based on the
results we derived for the $\mcl{R}(\mbf{1})$ case.

Let us start with restating the optimization problem LP$_1(\und{\alpha})$ in the following way.
\begin{align}
\min_{\beta} \min_{\mbf{R}} \sum_{i=1}^m \alpha_i R_i~~~\text{s.t.}~~R(\mcl{M})=\beta,~~R(\mcl{S})\geq H(X_{\mcl{S}}|X_{\mcl{S}^c}),
~~\forall \mcl{S}\subset \mcl{M} \label{eq:lp3}
\end{align}
where $\beta \geq R_{CO}(\mbf{1})$.
Hereafter we denote optimization problem \eqref{eq:lp3} by LP$_3(\alpha)$.
This interpretation of the problem LP$_1(\und{\alpha})$ corresponds to finding its optimal value by searching over all
achievable sum rates $R(\mcl{M})$. Let us focus on the second term in optimization \eqref{eq:lp3}.
\begin{align}
\min_{\mbf{R}} \sum_{i=1}^m \alpha_i R_i~~~\text{s.t.}~~R(\mcl{M})=\beta,~~R(\mcl{S})\geq H(X_{\mcl{S}}|X_{\mcl{S}^c}),
~~\forall \mcl{S}\subset \mcl{M}. \label{eq:modlp3}
\end{align}
Observe that the rate region in \eqref{eq:modlp3} constitutes a base polyhedron $B(f^{\star},\beta,\geq)$, where
\begin{align}
f^{\star}(\mcl{S},\beta)=
\begin{cases}
H(X_{\mcl{S}}|X_{\mcl{S}^c})  &  \text{if}~\mcl{S}\subset \mcl{M}, \label{fcn:f_star} \\
\beta                         &  \text{if}~\mcl{S}=\mcl{M}.
\end{cases}
\end{align}
Since $\beta\geq R_{CO}(\mbf{1})$ we have that $B(f^{\star},\beta,\geq)\neq \emptyset$. From Lemma \ref{lm:app_opt1}
it follows that $B(f^{\star},\beta,\geq)=B(f,\beta)$, where $f(\mcl{S},\beta)$, defined in \eqref{fcn:f}, is a dual set function
of $f^{\star}(\mcl{S},\beta)$. Hence, the optimization problem \eqref{eq:modlp3} is equivalent to
\begin{align}
\min_{\mbf{R}} \sum_{i=1}^m \alpha_i R_i~~~\text{s.t.}~~\mbf{R}\in B(f,\beta). \label{lp3:mod}
\end{align}
In Corollary \ref{app_cor1} we implied that for any fixed $\beta\geq R_{CO}(\mbf{1})$ the optimization problem \eqref{lp3:mod} can be solved using Edmond's algorithm,
with $j(1),j(2),\ldots,j(m)$ being the ordering of $\mcl{M}$ such that $\alpha_{j(1)}\leq \alpha_{j(2)}\leq \cdots \leq \alpha_{j(m)}$.
However, since the function $f(\mcl{S},\beta)$ is intersecting submodular, it is necessary to apply the modified
version of Edmond's algorithm provided in Lemma \ref{lm:modedm} to obtain an optimal rate tuple w.r.t. \eqref{lp3:mod}.

Let $h(\beta)$ denote the optimal value of the optimization problem defined in \eqref{lp3:mod}
\begin{align}
h(\beta) = \min_{\mbf{R}} \sum_{i=1}^m \alpha_i R_i~~~\text{s.t.}~~\mbf{R}\in B(f,\beta). \label{fcn:h}
\end{align}
To that end, we can state problem LP$_3(\und{\alpha})$ as
\begin{align}
\min_{\beta} h(\beta),~~~\text{s.t.}~~\beta \geq R_{CO}(\mbf{1}).
\end{align}
With every $\beta\geq R_{CO}(\mbf{1})$ we associate an optimal rate vector $\mbf{R}$ w.r.t. optimization problem \eqref{fcn:h}.
Next, we show some basic properties of the function $h(\beta)$.

\begin{lemma}\label{lm:h_prop}
Function $h(\beta)$ defined in \eqref{fcn:h} is continuous and convex when $\beta \geq R_{CO}(\mbf{1})$.
\end{lemma}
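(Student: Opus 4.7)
The plan is to exploit the LP structure of $h(\beta)$: it is the optimal value of a linear program in which $\beta$ enters only through the single equality $R(\mcl{M})=\beta$, while the cut-set inequalities $R(\mcl{S})\geq H(X_{\mcl{S}}|X_{\mcl{S}^c})$ for $\mcl{S}\subsetneq \mcl{M}$ are unaffected by $\beta$. I would first verify that the LP has a finite optimum throughout $[R_{CO}(\mbf{1}),\infty)$. Nonemptiness of the feasible set at $\beta=R_{CO}(\mbf{1})$ follows from Theorem~\ref{thm:beta}, and for $\beta>R_{CO}(\mbf{1})$ one can take any $\mbf{R}\in \mcl{R}(\mbf{1})$ and increase a single coordinate by $\beta-R_{CO}(\mbf{1})$, which preserves every cut-set inequality. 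Boundedness follows from the coordinate-wise bounds $H(X_i|X_{\mcl{M}\setminus\{i\}})\le R_i\le \beta-H(X_{\mcl{M}\setminus\{i\}}|X_i)$ implied by the cut constraints. Hence $h(\beta)$ is finite throughout the range of interest and strong LP duality is available.

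Convexity I would establish directly on the primal. Given $\beta_1,\beta_2\geq R_{CO}(\mbf{1})$ with respective optimizers $\mbf{R}^{(1)},\mbf{R}^{(2)}$ and any $\lambda\in[0,1]$, the convex combination $\mbf{R}=\lambda \mbf{R}^{(1)}+(1-\lambda)\mbf{R}^{(2)}$ satisfies $R(\mcl{M})=\lambda \beta_1+(1-\lambda)\beta_2$ and each cut-set inequality by linearity, so $\mbf{R}$ is feasible at the convex-combination parameter. Evaluating the linear objective at $\mbf{R}$ then yields $h(\lambda \beta_1+(1-\lambda)\beta_2)\leq \lambda\, h(\beta_1)+(1-\lambda)\, h(\beta_2)$, which is convexity.

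For continuity I would pass to the LP dual. Because $\beta$ appears only as the coefficient of the dual variable associated with the equality $R(\mcl{M})=\beta$, the dual objective is affine in $\beta$ and the dual feasible region is a fixed polyhedron $D$ independent of $\beta$. Since the primal is bounded, the dual optimum is attained at one of the finitely many vertices of $D$, so $h(\beta)$ can be written as the pointwise maximum of finitely many affine-in-$\beta$ functions. This representation immediately makes $h$ continuous and piecewise-linear on $[R_{CO}(\mbf{1}),\infty)$. The main subtlety I anticipate is right-continuity at the left endpoint $\beta=R_{CO}(\mbf{1})$: convexity alone does not force continuity at a boundary of the effective domain, so the argument must leverage the polyhedral structure rather than mere convexity. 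The LP-dual viewpoint handles this cleanly, which is why verifying primal boundedness and finiteness up front is the critical setup step.
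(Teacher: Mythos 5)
Your convexity argument coincides exactly with the paper's: both take optimizers $\mbf{R}^{(1)},\mbf{R}^{(2)}$ at $\beta_1,\beta_2$, observe that $\lambda\mbf{R}^{(1)}+(1-\lambda)\mbf{R}^{(2)}$ is feasible at $\lambda\beta_1+(1-\lambda)\beta_2$ because the cut constraints are linear, and evaluate the objective. For continuity, however, you take a genuinely different route. The paper constructs, for $\beta_2=\beta_1+\Delta\beta$, a feasible point at $\beta_2$ by adding $\Delta\beta$ to a single coordinate of an optimizer at $\beta_1$, obtaining the one-sided Lipschitz estimate $h(\beta_2)\leq h(\beta_1)+\alpha_1\Delta\beta$, and reads off continuity from this together with $\alpha_1<\infty$; it does not pass to the dual, nor does it explicitly address whether the estimate controls the left endpoint, which is the subtlety you flag. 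Your argument instead writes $h$ via LP duality as a pointwise maximum of finitely many affine-in-$\beta$ functions (the dual feasible polyhedron $D$ being independent of $\beta$ and pointed, so attainment is at a vertex), which gives continuity and piecewise-linearity simultaneously and handles the boundary $\beta=R_{CO}(\mbf{1})$ cleanly. This is arguably the stronger argument: it subsumes the piecewise-linearity statement that the paper establishes separately in Lemma~\ref{lm:derivative}, and it is robust at the endpoint, whereas the paper's primal perturbation plus ``convex $\Rightarrow$ continuous'' is implicitly an interior statement. Your preliminary check of primal finiteness and boundedness over $[R_{CO}(\mbf{1}),\infty)$, which the paper leaves implicit, is also correct and is indeed needed to justify strong duality and vertex attainment.
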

Proof of Lemma~\ref{lm:h_prop} is provided in Appendix~\ref{app:lmh}.

\subsection{Gradient Descent Method} \label{sec:gdm}

From Lemma~\ref{lm:h_prop} it immediately follows that we can apply a gradient
descent algorithm to minimize the function $h(\beta)$. However, in order to do that, at every point $\beta$, we need
to know the value of $h(\beta)$ as well as its derivative. As mentioned above, an optimal rate tuple
that corresponds to the function $h(\beta)$ can be obtained by applying the modified Edmond's algorithm
to the problem \eqref{lp3:mod}.
From Lemma \ref{lm:modedm} it follows that the optimal rate vector with respect to the optimization
problem \eqref{lp3:mod}, has the following form.
\begin{align}
R_i = b_i\cdot \beta + c_i,~~\forall i\in \mcl{M}, \label{eq:represent}
\end{align}
where $b_i\in \mathbb{Z}$, and $c_i$ is a constant which corresponds to a summation of some conditional entropy terms.
Moreover, it follows that the coefficients $(b_i,c_i)$, $i=1,2,\ldots,m$, depend only on the value of $\beta$ (they do not
depend on the weight vector $\und{\alpha}$).
\begin{lemma} \label{lm:derivative}
Function $h(\beta)$ is piecewise linear in $\beta$. For a fixed $\beta \geq R_{CO}(\mbf{1})$
the values of  $h(\beta)$ and $\frac{d h(\beta)}{d \beta}$ can be obtained in $\mcl{O}(m \cdot SFM(m))$
time by applying the modified Edmond's algorithm to the ordering of $\mcl{M}$ specified in Corollary \ref{app_cor1}.
Derivative of $h(\beta)$ can be calculated by expressing the optimal rates $R_i$, $i \in \mcl{M}$, as $R_i = b_i \cdot \beta +c_i$
in each iteration of the modified Edmond's algorithm. Then,
\begin{align}
\frac{d h(\beta)}{d \beta}=\sum_{i=1}^m \alpha_i \cdot b_i.
\end{align}
\end{lemma}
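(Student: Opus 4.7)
My plan is to proceed by induction on the iterations of the modified Edmond's algorithm (Algorithm~\ref{alg:modedm}) applied to the intersecting submodular function $f(\cdot,\beta)$ defined in \eqref{fcn:f}, showing that each coordinate $R_{j(i)}$ produced by the algorithm admits the claimed affine representation $R_{j(i)}=b_{j(i)}\beta+c_{j(i)}$ with $b_{j(i)}\in\mathbb{Z}$. Piecewise linearity of $h(\beta)$ and the closed form $\sum_i\alpha_i b_i$ for the derivative then follow by summation, and the stated complexity is a direct consequence of Theorem~\ref{thm:ModEdm} once one observes that reading off $(b_{j(i)},c_{j(i)})$ along the run incurs no asymptotic overhead.

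First, I would invoke Corollary~\ref{app_cor1}, extended via Lemma~\ref{lm:modedm} to the intersecting submodular setting: Algorithm~\ref{alg:modedm} executed with the ordering $\alpha_{j(1)}\leq\alpha_{j(2)}\leq\cdots\leq\alpha_{j(m)}$ produces a minimizer of \eqref{lp3:mod}, whose cost equals $h(\beta)$. For the inductive step, substituting $f(\mcl{S},\beta)=\beta-H(X_{\mcl{S}^c}|X_{\mcl{S}})$ and the hypothesis $R_{j(k)}=b_{j(k)}\beta+c_{j(k)}$ for $k<i$ into Step~4 of Algorithm~\ref{alg:modedm} yields
\begin{align*}
R_{j(i)}=\min_{\substack{\mcl{S}\subseteq \mcl{A}_i\\ j(i)\in\mcl{S}}}\left\{\Big(1-\sum_{\substack{k<i\\ j(k)\in\mcl{S}}} b_{j(k)}\Big)\beta \;-\; H(X_{\mcl{S}^c}|X_{\mcl{S}}) \;-\; \sum_{\substack{k<i\\ j(k)\in\mcl{S}}} c_{j(k)}\right\}.
\end{align*}
For each candidate $\mcl{S}$ the bracketed expression is affine in $\beta$ with integer slope, so $R_{j(i)}$ is the pointwise minimum of finitely many affine functions. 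At any given $\beta$ a minimizer $\mcl{S}^*$ determines an integer $b_{j(i)}$ and a real $c_{j(i)}$, establishing the representation. The base case $i=1$ is trivial since the only admissible set is $\{j(1)\}$, giving $b_{j(1)}=1$ and $c_{j(1)}=-H(X_{\{j(1)\}^c}|X_{j(1)})$.

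Summing over $i$ produces $h(\beta)=\sum_i\alpha_i R_i=(\sum_i\alpha_i b_i)\beta+\sum_i\alpha_i c_i$, which is affine on every sub-interval of $\beta$ throughout which the minimizer $\mcl{S}^*$ at each of the $m$ iterations remains fixed. Since there are only finitely many candidate subsets across the $m$ iterations, there are only finitely many breakpoints at which these coefficients change, yielding piecewise linearity and the derivative $\sum_i\alpha_i b_i$ on the interior of each linear piece. Consistency with the convexity of $h$ shown in Lemma~\ref{lm:h_prop} provides a useful sanity check: the local slopes $\sum_i\alpha_i b_i$ must form a non-decreasing sequence in $\beta$.

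The main obstacle, I expect, is the treatment of breakpoints in $\beta$ at which the minimizer $\mcl{S}^*$ at some iteration is non-unique: the value $R_{j(i)}$ remains continuous but the coefficients $b_{j(i)},c_{j(i)}$ are ambiguous there, so the derivative formula must be read as one-sided. I would resolve this by fixing a canonical tie-breaking rule, for example always choosing, among optimal $\mcl{S}^*$, the one maximizing the slope $1-\sum_{k<i,\,j(k)\in\mcl{S}} b_{j(k)}$ (this yields the right derivative) or, symmetrically, the one minimizing it (this yields the left derivative). With such a rule the pair $(b_{j(i)},c_{j(i)})$ is well-defined for every $\beta\geq R_{CO}(\mbf{1})$ and the identity $dh(\beta)/d\beta=\sum_i\alpha_i b_i$ holds as a one-sided derivative everywhere, which is all that the gradient-descent procedure of Section~\ref{sec:gdm} needs.
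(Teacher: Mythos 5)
Your proof is correct and fills in exactly the gap the paper leaves: the paper does not provide an explicit proof of this lemma, only the preceding remark that the affine form $R_i=b_i\beta+c_i$ ``follows from Lemma~\ref{lm:modedm}'', and your induction on the iterations of the modified Edmond's algorithm is the natural formalization of that remark, with the base case and the integer-slope claim both checking out. The one place your write-up adds genuine content beyond the paper's implicit sketch is the treatment of breakpoints, where the minimizer $\mcl{S}^*$ at some iteration is non-unique and $h$ is only one-sided differentiable; the canonical tie-breaking rule you introduce to make $\sum_i\alpha_ib_i$ a well-defined one-sided derivative is a subtlety the paper elides but that the gradient-descent procedure of Section~\ref{sec:gdm} actually needs.
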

To make the gradient descent algorithm more efficient, it is useful to make a search space as tight as possible.
So far, we showed that the minimizer of the problem LP$_3(\und{\alpha})$ belongs to the region $[R_{CO}(\mbf{1}),\infty)$.
Combining the results of Lemma \ref{lm:h_prop} and Lemma \ref{lm:derivative}, we have the following bound.
\begin{lemma}
Let $\beta^{\star}$ be the minimizer of the optimization problem LP$_3(\und{\alpha})$. Then,
\begin{align}
R_{CO}(\mbf{1})\leq \beta^{\star}\leq H(X_{\mcl{M}}).
\end{align}
\end{lemma}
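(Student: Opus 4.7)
My plan is to prove the two bounds separately. The lower bound follows from the definition of $R_{CO}(\mbf{1})$ and the structure of LP$_3(\und{\alpha})$; the upper bound requires computing the slope of $h(\beta)$ in the regime $\beta\geq H(X_{\mcl{M}})$, where submodularity of $f(\mcl{S},\beta)$ kicks in (Lemma~\ref{lm:f}) and the modified Edmond's algorithm takes a particularly clean form.

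\textbf{Lower bound.} By definition, $R_{CO}(\mbf{1})=\min_{\mbf{R}\in\mcl{R}}\sum_i R_i$, so every feasible rate vector of LP$_1(\und{\alpha})$ has sum-rate at least $R_{CO}(\mbf{1})$. Since $\beta$ in LP$_3(\und{\alpha})$ equals $R(\mcl{M})$ at any feasible rate vector, the outer minimization ranges over $\beta\geq R_{CO}(\mbf{1})$, and hence $\beta^{\star}\geq R_{CO}(\mbf{1})$ is automatic.

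\textbf{Upper bound.} My strategy is to show that the (right-)derivative of $h$ is non-negative for all $\beta\geq H(X_{\mcl{M}})$; combined with convexity of $h$ (Lemma~\ref{lm:h_prop}), this forces a minimizer to exist at some $\beta^{\star}\leq H(X_{\mcl{M}})$. For $\beta\geq H(X_{\mcl{M}})$, Lemma~\ref{lm:f} says that $f(\cdot,\beta)$ is fully submodular, so the modified Edmond's algorithm used inside Lemma~\ref{lm:derivative} collapses to standard Edmond's over the base polyhedron $B(f,\beta)$ with the ordering $\alpha_{j(1)}\leq\alpha_{j(2)}\leq\cdots\leq\alpha_{j(m)}$ prescribed by Corollary~\ref{app_cor1}. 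A direct unrolling then gives
\[ R_{j(1)}=f(\{j(1)\},\beta)=\beta-H(X_{\mcl{M}\setminus\{j(1)\}}\mid X_{j(1)}), \]
and, for $i\geq 2$,
\[ R_{j(i)}=f(\mcl{A}_i,\beta)-f(\mcl{A}_{i-1},\beta)=H(X_{\mcl{A}_{i-1}^c}\mid X_{\mcl{A}_{i-1}})-H(X_{\mcl{A}_{i}^c}\mid X_{\mcl{A}_{i}}). \]
Thus in the notation of Lemma~\ref{lm:derivative}, $b_{j(1)}=1$ and $b_{j(i)}=0$ for all $i\geq 2$, so $dh/d\beta=\alpha_{j(1)}\geq 0$ throughout $[H(X_{\mcl{M}}),\infty)$, as required.

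\textbf{Edge case and main obstacle.} The only subtlety is that when $\alpha_{j(1)}=0$ the slope vanishes, so $h$ is constant on an unbounded interval and its set of minimizers is not a singleton. In that situation the statement should be read as: \emph{some} minimizer $\beta^{\star}$ lies in $[R_{CO}(\mbf{1}),H(X_{\mcl{M}})]$, and choosing $\beta^{\star}=H(X_{\mcl{M}})$ is always valid. I expect this small bookkeeping point to be the main obstacle; the substantive content is the slope computation above, which is short once one notices that submodularity of $f(\cdot,\beta)$ for $\beta\ge H(X_{\mcl{M}})$ lets only the first greedy step depend on $\beta$.
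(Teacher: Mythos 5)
Your proof is correct and follows essentially the same route as the paper: both use Lemma~\ref{lm:f} to get full submodularity of $f(\cdot,\beta)$ at $\beta\geq H(X_{\mcl{M}})$, unroll the greedy algorithm to see that only $R_{j(1)}$ depends on $\beta$ (with coefficient $1$), conclude that the slope of $h$ on that range is $\alpha_{j(1)}\geq 0$, and invoke convexity of $h$ from Lemma~\ref{lm:h_prop} to place a minimizer at or below $H(X_{\mcl{M}})$. Your explicit derivative bookkeeping via $b_{j(1)}=1$, $b_{j(i)}=0$ for $i\geq 2$, and the remark on the $\alpha_{j(1)}=0$ degenerate case, are slightly more detailed than the paper's but the substance is identical.
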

\begin{proof}
Note that the function $f(\mcl{S},\beta)$  is submodular when $\beta=H(X_{\mcl{M}})$ (see Lemma \ref{lm:f}).
Optimization problem \eqref{eq:modlp3} for $\beta=H(X_{\mcl{M}})$, can be solved by applying Edmond's algorithm (see Theorem \ref{thm:Edm})
to the optimization problem \eqref{lp3:mod}.
It is easy to verify that the optimal rates have the following form:
\begin{align}
R_{j(1)}&=\beta + c_{j(1)}, \nonumber \\
R_{j(i)}&=c_{j(i)},~~i\in \{2,3,\ldots,m\}. \nonumber
\end{align}
Hence,
\begin{align}
h(\beta=H(X_{\mcl{M}})) = \alpha_{j(1)}\beta + \sum_{i=1}^m \alpha_{j(i)} c_{j(i)}. \nonumber
\end{align}
Since $\alpha_{j(1)}\geq 0$, and function $h(\beta)$ is convex, it immediately follows
that $\beta^{\star}\leq H(X_{\mcl{M}})$.
\end{proof}
Since the function $h(\beta)$ is continuous and differentiable, we can find its minimum, and therefore solve the optimization problem LP$_3(\und{\alpha})$,
by applying a gradient descent algorithm. However,
in general case, we can only reach the optimal point up to some precession $\varepsilon$.
In order to be at most $\varepsilon$ away from the optimal solution, the gradient descent method executes
approximately  $\mcl{O}(\log \frac{1}{\varepsilon})$ iterations \cite{boyd2004convex}. Therefore, the total complexity of
obtaining a rate tuple with a sum rate that is at most $\varepsilon$ away from the optimal one   is $\mcl{O}(m^2 \cdot SFM(m)+ \log \frac{H(X_{\mcl{M}})}{\varepsilon} \cdot m \cdot SFM(m))$, where the first term corresponds to the complexity of finding $R_{CO}(\mbf{1})$.

Before we go any further, let us briefly analyze a solution to the optimization problem LP$_1(\und{\alpha})$.
We can think of it as a minimal value $C$ for which the plane $C-\sum_{i=1}^m \alpha_i R_i$ intersects
the rate region $\mcl{R}$ defined in \eqref{cut_set}. It is not hard to conclude that the point
of intersection is one of the ``vertices'' of the region $\mcl{R}$, \emph{i.e.}, it is completely
defined by the collection of sets $\{\mcl{S}_1,\mcl{S}_2,\ldots,\mcl{S}_m\}$ such that
\begin{align}
R(\mcl{S}_i)&=H(X_{\mcl{S}_i}|X_{\mcl{S}_i^c}),~~~i\in \{1,2,\ldots,m\}. \nonumber
\end{align}
The following theorem will be very useful in Section \ref{sec:fls} when we explore the finite linear source model.
It represents a key building block for bounding the total number of breakpoints in $h(\beta)$.
\begin{theorem} \label{thm:bpts}
For every breakpoint of the function $h(\beta)$, the corresponding rate vector $\mbf{R}$ that minimizes \eqref{fcn:h}
is a vertex of the rate region $\mcl{R}$.
\end{theorem}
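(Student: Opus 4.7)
The plan is to show that on every linear piece of $h(\beta)$ the optimal rate vector traces out an edge (one-dimensional face) of $\mcl{R}$, and that two distinct such edges meet at each breakpoint, thereby forcing the rate vector there to be a vertex.

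I would first unpack an arbitrary linear piece via the modified Edmond's algorithm of Lemma~\ref{lm:modedm}. At each iteration $i=1,\dots,m$ the algorithm fixes a tight set $\mcl{T}_i$ with $j(i)\in\mcl{T}_i\subseteq\mcl{A}_i$, so that $R(\mcl{T}_i)=f(\mcl{T}_i,\beta)$; the nested structure $\mcl{T}_i\subseteq\{j(1),\dots,j(i)\}$ makes the indicator vectors $\{\mbf{1}_{\mcl{T}_i}\}_{i=1}^m$ linearly independent. The key dictionary is that for any $\mcl{T}_i\subsetneq \mcl{M}$ the tight constraint $R(\mcl{T}_i)=\beta-H(X_{\mcl{T}_i^c}|X_{\mcl{T}_i})$, combined with $R(\mcl{M})=\beta$, is exactly the tight cut constraint $R(\mcl{T}_i^c)=H(X_{\mcl{T}_i^c}|X_{\mcl{T}_i})$ of $\mcl{R}$, whereas the tight constraint at $\mcl{T}_i=\mcl{M}$ is merely the sum $R(\mcl{M})=\beta$, which is \emph{not} a defining inequality of $\mcl{R}$.

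Next, I would argue that on each linear piece exactly one $\mcl{T}_i$ equals $\mcl{M}$ while the other $m-1$ are proper subsets, yielding $m-1$ linearly independent tight cut constraints of $\mcl{R}$. The form $R_i(\beta)=b_i\beta+c_i$ from Lemma~\ref{lm:derivative}, together with $R(\mcl{M})=\beta$, forces $\sum_i b_i=1\neq 0$, so the rate vector genuinely moves with $\beta$; hence it cannot already satisfy $m$ independent tight cut constraints of $\mcl{R}$ (which would pin it down as a vertex). Since only $\mcl{T}_m$ is permitted to equal $\mcl{M}$ (because $\mcl{A}_i=\mcl{M}$ iff $i=m$), we conclude $\mcl{T}_m=\mcl{M}$. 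Linear independence of the remaining $m-1$ cuts of $\mcl{R}$ follows from $\mbf{1}_{\mcl{T}^c}=\mbf{1}-\mbf{1}_{\mcl{T}}$ and the previously established independence of $\{\mbf{1}_{\mcl{T}_i}\}_{i=1}^m$, so these cuts carve out an edge of $\mcl{R}$ along which $\mbf{R}(\beta)$ moves as $\beta$ varies.

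Finally, at a breakpoint $\beta^\star$ the slope $\sum_i \alpha_i b_i$ of $h$ changes, and since $\alpha$ is fixed this forces the direction vector $(b_1,\dots,b_m)$ to differ between the two sides; the two edges of $\mcl{R}$ traced on either side of $\beta^\star$ are therefore distinct, and by continuity $\mbf{R}(\beta^\star)$ lies in their intersection. Two distinct edges of a polyhedron can meet only at a common vertex, since the union of the two collections of $m-1$ edge-defining tight cuts has trivial null space; equivalently, at $\mbf{R}(\beta^\star)$ there are at least $m$ linearly independent tight cut constraints of $\mcl{R}$, certifying that it is a vertex. The main technical nuisance will be the careful bookkeeping that translates tight inequalities of $P(f,\beta)$ back into cut constraints of $\mcl{R}$, and verifying that a change of slope of $h$ genuinely changes the edge direction rather than merely re-parameterising the same edge.
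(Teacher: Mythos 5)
Your approach is genuinely different from the paper's, and conceptually appealing. The paper argues indirectly: it observes that the minimizer of $h$ (for any cost vector $\und{\alpha}$) yields a vertex of $\mcl{R}$, then shows that by adding a common constant $\Delta\alpha$ to every component of $\und{\alpha}$ (which leaves the Edmond's ordering, hence the piecewise structure of $\mbf{R}(\beta)$, unchanged while shifting every slope $\sum_i\alpha_i b_i^{(k)}$ by $\Delta\alpha$, using $\sum_j b_j^{(k)}=1$) one can make any given breakpoint become the minimizer. You instead argue directly in the geometry of $\mcl{R}$: on each side of a breakpoint the Edmond's rate vector moves along an edge of $\mcl{R}$, a slope change of $h$ forces the two edge directions $\mbf{b}$ to differ, and two distinct edges of a polyhedron can only intersect in a vertex. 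Both proofs ultimately invoke the same combinatorial facts (triangular tight‑set structure, $\sum_i b_i=1$), but yours makes the ``vertex-hood'' of the breakpoint visible without the detour through a modified cost vector, at the price of needing more careful polyhedral bookkeeping.

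There is one genuine gap. You infer $\mcl{T}_m=\mcl{M}$ from ``the rate vector moves, hence cannot satisfy $m$ independent tight cut constraints, and only $\mcl{T}_m$ can be $\mcl{M}$.'' But if all $\mcl{T}_i\subsetneq\mcl{M}$, the resulting $m$ cut constraints $R(\mcl{T}_i^c)=H(X_{\mcl{T}_i^c}\mid X_{\mcl{T}_i})$ need not be linearly independent: writing $\mbf{1}=\sum_j d_j\mbf{1}_{\mcl{T}_j}$ (possible since the $\mbf{1}_{\mcl{T}_i}$ span $\mathbb{R}^m$), the vectors $\mbf{1}-\mbf{1}_{\mcl{T}_i}$ are dependent precisely when $\sum_j d_j=1$, which you have not ruled out. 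So a moving $\mbf{R}(\beta)$ does not by itself force $\mcl{T}_m=\mcl{M}$. The clean fix is to bypass the algorithm's choice of $\mcl{T}_m$ entirely: for $\beta\ge R_{CO}(\mbf{1})$ one has $R(\mcl{M})=g(\mcl{M},\beta)=\beta=f(\mcl{M},\beta)$, so $\mcl{M}$ is \emph{always} a tight set; take $\{\mcl{T}_1,\dots,\mcl{T}_{m-1},\mcl{M}\}$ as the tight family, whose incidence matrix is block-triangular (column $j(m)$ vanishes in the first $m-1$ rows) and hence invertible. With that substitution, the $m-1$ derived cuts are independent and the rest of your argument—identifying an edge on each side, observing the direction change at a breakpoint, and concluding vertex-hood from the intersection of two distinct edges—goes through.
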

\begin{proof}
Due to the equivalence between the problems LP$_1(\und{\alpha})$ and LP$_3(\und{\alpha})$ it follows that for every $\und{\alpha}$, the rate tuple
$\mbf{R}$ which corresponds to the minimizer of the function $h(\beta)$, is a vertex of the rate region $\mcl{R}$.
For a given cost vector $\und{\alpha}$, we prove this theorem by modifying $\und{\alpha}$ such that each breakpoint in $h(\beta)$
can become the minimizer of the function $h$ that corresponds to the modified vector $\und{\alpha}$.

To that end, let us consider an example of $h(\beta)$ shown in Figure \ref{fig:bpts}.
Each linear segment of $h(\beta)$ is described by a pair of vector $(\mbf{b}^{(i)},\mbf{c}^{(i)})$, $i=1,2,3,4$, as in \eqref{eq:represent}.
Function $h(\beta)$ is minimized when $\beta=\beta_3$.
\begin{figure}[h]
\begin{center}
\includegraphics[scale=0.6]{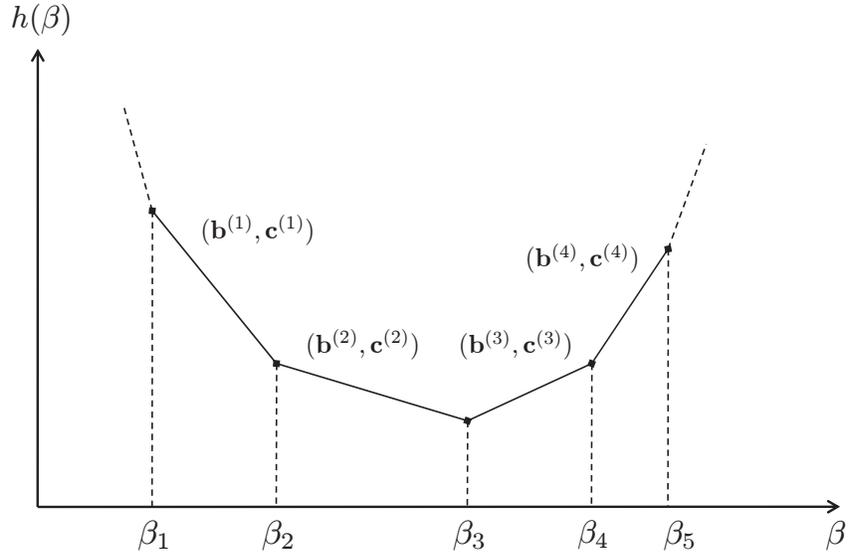}
\end{center}
\vspace{-0.2in}
\caption{Function $h(\beta)$ is a piecewise linear in $\beta$. For the purpose of proving Theorem \ref{thm:bpts},
         we consider $4$ linear segments, and show that each breakpoint can become the minimizer
         of a different optimization problem.}\label{fig:bpts}
\end{figure}

First, we show how to modify $\und{\alpha}$ so that the breakpoint $\beta_2$ becomes the minimizer of LP$_3(\und{\alpha})$.
From \eqref{eq:represent}, we have that the slopes of the segments $[\beta_1,\beta_2]$
and $[\beta_2,\beta_3]$ are such that $\sum_{i=1}^m \alpha_i b^{(1)}_i <0$, $\sum_{i=1}^m \alpha_i b^{(2)}_i <0$.
Since $h(\beta)$ is convex, it also holds that
\begin{align}
\sum_{i=1}^m \alpha_i b^{(1)}_i <\sum_{i=1}^m \alpha_i b^{(2)}_i. \label{eq:order}
\end{align}
Observe that for every $\beta\geq R_{CO}(\mbf{1})$, the rate tuple that corresponds to $h(\beta)$ is
such that $R(\mcl{M})=\beta$. Hence, for each linear segment it holds that $\sum_{j=1}^m b^{(i)}_j=1$,
$i=1,2,3,4$. Let
\begin{align}
\alpha'_i = \alpha_i + \Delta \alpha,~~i=1,2,\ldots,m, \label{eq:new_alpha}
\end{align}
where $\Delta \alpha>0$ is a constant. For the weight vector $\und{\alpha}'$ constructed in \eqref{eq:new_alpha},
the segments $[\beta_1,\beta_2]$ and $[\beta_2,\beta_3]$ have slopes
\begin{align}
\sum_{i=1}^m b^{(j)}_i (\alpha_i+\Delta \alpha) =  \sum_{i=1}^m b^{(j)}_i \alpha_i+\Delta \alpha,~~j=1,2. \nonumber
\end{align}
Therefore, we can pick $\Delta \alpha$ such that the linear segment $[\beta_1,\beta_2]$ has negative slope, while
the linear segment $[\beta_2,\beta_3]$ has positive slope. One possible choice is
\begin{align}
\Delta \alpha = - \sum_{i=1}^m \alpha_i b^{(2)}_i + \epsilon,
\end{align}
where $\epsilon$ is a small positive constant. Note that due to \eqref{eq:order} the linear segment
$[\beta_1,\beta_2]$ still has negative slope.
Similarly, we can move a minimizer of LP$_3(\und{\alpha})$ from $\beta_3$ to $\beta_4$, by modifying $\und{\alpha}$ as follows
\begin{align}
\alpha'_i = \alpha_i - \Delta \alpha,~~i=1,2,\ldots,m,
\end{align}
where $\Delta \alpha =  \sum_{i=1}^m \alpha_i b^{(3)}_i + \epsilon$. In this case, linear segments $[\beta_3,\beta_4]$
and $[\beta_4,\beta_5]$ have the slopes $\sum_{i=1}^m \alpha'_i b^{(3)}_i <0$ and $\sum_{i=1}^m \alpha'_i b^{(4)}_i >0$,
which makes $\beta=\beta_4$ the minimizer of LP$_3(\und{\alpha}')$.
Therefore, we showed how to modify the cost vector $\und{\alpha}$, so that the minimizer of $h(\beta)$ ``jumps''
to the consecutive breakpoints of $h(\beta)$. Repeating this procedure multiple times, one can modify $\und{\alpha}$ so that
any breakpoint becomes the minimizer of $h(\beta)$.
\end{proof}

\section{Data Exchange Problem with Linear Correlations} \label{sec:fls}

In this section we propose a polynomial time algorithm for achieving a rate tuple that belongs to the region $\mcl{R}(\und{\alpha})$
in the data exchange problem.
In Section \ref{sec:model} we defined a linear model where each user $i \in \mcl{M}$ observes
a collection of the linear equations in $\mathbb{F}_{q^n}$,
\begin{align}
X_i = \mathbf{A}_i \mbf{W}, \ i \in \mcl{M}, \label{mod:eq1}
\end{align}
where $\mathbf{A}_i \in \mathbb{F}_q^{\ell_i \times N}$ is a fixed matrix and $\mbf{W} \in \mathbb{F}_{q^n}^N$ is a vector of data packets.
Since all the algebraic operations are performed over the base field $\mathbb{F}_q$, the linear model \eqref{mod:eq1} is equivalent to the scenario
where each user observes $n$ memoryless instances of the finite linear process \eqref{mod:eq1} where $\mbf{W}$ is a uniform vector
over $\mathbb{F}_q^N$. Hereafter, we will use the entropy of the observations and the rank of the observation matrix interchangeably.

\begin{theorem}\label{thm:netcode}
For the linear source model, any rate tuple $\mbf{R}$ that belongs to the rate region $\mcl{R}_{de}$, defined in \eqref{cut_set_datexc},
can be achieved via linear network coding, \emph{i.e.}, in order to achieve omniscience
it is sufficient for each user $i \in \mcl{M}$ to transmit $R_i$ properly chosen linear equations of the data packets he observes.
\end{theorem}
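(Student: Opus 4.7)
The plan is to reduce the achievability of $\mcl{R}_{de}$ under the finite linear source model to a multi-source multicast linear network coding instance and then invoke the classical linear multicast theorem. Since each observation $\mbf{X}_i = \mbf{A}_i \mbf{W}$ is already a deterministic linear function of the common random vector $\mbf{W}$, restricting transmissions to linear combinations of observed packets is the natural operation; the point is to show that no rate is lost by doing so.

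First, I would construct an auxiliary directed network. Introduce a virtual super-source $s$ holding the underlying data vector $\mbf{W}$, connect $s$ to each user $i$ by a noiseless link that delivers the side information $\mbf{X}_i = \mbf{A}_i \mbf{W}$, and give each user $i$ a broadcast edge of capacity $R_i$ symbols over $\mathbb{F}_{q^n}$ that is observed by every other user. Designate each user $j \in \mcl{M}$ as a sink whose task is to recover the joint observation $\mbf{X}_{\mcl{M}}$; since $\mbf{A}_\mcl{M}$ may have a nontrivial kernel, this is equivalent to determining the projection of $\mbf{W}$ onto the row span of $\mbf{A}_\mcl{M}$.

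Second, I would verify that the min-cut conditions of this auxiliary multicast network coincide with the constraints defining $\mcl{R}_{de}$. Fix a sink $j$ and a subset $\mcl{S}\subseteq \mcl{M}\setminus \{j\}$. The only edges crossing the cut separating the users in $\mcl{S}$ from the sink-side that carry information not already present in $\mbf{X}_{\mcl{S}^c}$ are the $R(\mcl{S})$ broadcast symbols from users in $\mcl{S}$. Conditional on $\mbf{X}_{\mcl{S}^c}$, sink $j$ needs at least $\text{rank}(\mbf{A}_\mcl{S}\mid \mbf{A}_{\mcl{S}^c})$ linearly independent equations on $\mbf{X}_\mcl{S}$ to reconstruct it, which is precisely the bound in \eqref{cut_set_datexc}. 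Because every $\mcl{S}\subset \mcl{M}$ excludes at least one element, ranging $j$ over $\mcl{M}$ recovers every cut-set constraint in $\mcl{R}_{de}$.

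Third, I would invoke the linear multicast theorem of Koetter-Medard (or, equivalently, the random linear network coding argument of Ho et al.): whenever every sink's min-cut matches the target multicast rate, a linear network code over a sufficiently large field achieves simultaneous decoding at all sinks. Translated back, each user $i$ transmits $\mbf{B}_i \mbf{X}_i$ for some matrix $\mbf{B}_i \in \mathbb{F}_{q^n}^{R_i \times \ell_i}$, and each sink $j$ recovers $\mbf{X}_\mcl{M}$ by inverting the linear system built from $\mbf{A}_j$ together with $\{\mbf{B}_i \mbf{A}_i\}_{i\neq j}$. The existence of feasible $\mbf{B}_i$'s follows from a Schwartz-Zippel argument applied to the determinants of the $m$ decoding matrices, combined with a union bound; taking the extension degree $n$ sufficiently large drives the failure probability to zero. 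The main technical obstacle is the cut-set translation, since each user's $R_i$ symbols form a broadcast rather than $m-1$ independent unicast streams, so one must construct the auxiliary graph carefully and argue that the $\mcl{R}_{de}$ inequalities really do characterize the simultaneously achievable multicast rates. Once this correspondence is established, the remainder reduces to standard linear network coding machinery and can in fact be strengthened to a polynomial-time construction along the lines of the Jaggi-Sanders algorithm.
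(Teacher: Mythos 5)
Your high-level reduction to a multicast problem with a super-source $s$ and broadcast edges is the same one the paper uses, and you correctly point to the Ho et al.\ result on linearly correlated sources as the right network-coding ingredient. However, your step 2 only establishes the \emph{necessity} of the cut-set constraints: you show that for any cut in the auxiliary network, the crossing capacity must be at least $\mathrm{rank}(\mbf{A}_{\mcl{S}}\mid\mbf{A}_{\mcl{S}^c})$. What Theorem~\ref{thm:netcode} asks for is the converse---\emph{achievability}---and that direction is not supplied by the argument as written. The linear multicast theorem in its usual min-cut/max-flow form is stated for independent (or mutually orthogonal) sources; when the sources $\mbf{X}_i=\mbf{A}_i\mbf{W}$ are linearly correlated, the achievability condition of Ho et al.\ is not a min-cut bound but a system of simultaneous rank constraints: for each sink $j$, there must exist row selections $\mcl{S}^{(j)}_i\subseteq\{1,\dots,\ell_i\}$ with $|\mcl{S}^{(j)}_i|=R_i$ such that $\mathrm{rank}(\mbf{A}_j,\{\mbf{A}_i(\mcl{S}^{(j)}_i,\star)\}_{i\neq j})=N$. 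Your proposal asserts that the cut-set constraints imply these rank conditions, but this is exactly the nontrivial content you identify as ``the main technical obstacle'' and then leave unresolved.

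The paper closes this gap with a constructive matroid-type argument: a greedy row-selection pass over the users (Algorithm~\ref{alg:greedy}) produces sets $\mcl{S}^{(j)}_i$ attaining the maximum-rank property for every ordering, and summing the resulting rank increments yields, for each ordering $j(1),\dots,j(m)$ and each suffix, a tight chain of inequalities $\sum_{i\ge t}R_{j(i)}\ge\mathrm{rank}(\mbf{A}_{j(t)},\dots,\mbf{A}_{j(m)}\mid\mbf{A}_{j(1)},\dots,\mbf{A}_{j(t-1)})$, which is then shown to be equivalent to the cut-set region $\mcl{R}_{de}$. Without this (or some other proof that the cut-set polytope sits inside the Ho et al.\ rank-feasible region), the reduction to the multicast theorem is incomplete. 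A secondary, more cosmetic point: to enforce that all sinks observe the \emph{same} $R_i$ broadcast symbols from user $i$, the auxiliary graph needs an explicit dummy relay node per user (one edge $s_i\to t_i$ of capacity $R_i$, then fan-out edges $t_i\to r_j$), as in the paper's Figure~\ref{fig:multicast}; describing the edge as ``observed by every other user'' does not by itself encode that constraint in the graph model that the multicast theorem operates on.
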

Proof of Theorem~\ref{thm:netcode} is provided in Appendix~\ref{app:netcode}. This result suggests that in an optimal
communication scheme, each user transmits some integer number of symbols in $\mathbb{F}_q$.
Hence, a rate tuple that belongs to $\mcl{R}(\und{\alpha})$ in the data exchange problem has to be some fractional
number with the denominator $n$.
To that end, we introduce a fractional rate constraint to the optimization problem LP$_1(\und{\alpha})$
in order to obtain the optimal solution for the data exchange problem.
\begin{align}
\min_{\mbf{R}} \sum_{i=1}^m \alpha_i R_i,~~~\text{s.t.}~~R(\mcl{S})\geq H(X_{\mcl{S}}|X_{\mcl{S}^c}),~~\forall \mcl{S} \subset \mcl{M}, \label{int_lp1}
\end{align}
where $n \cdot R_i \in \mathbb{Z}$, $\forall i\in \mcl{M}$.
Optimization problem \eqref{int_lp1} is an integer linear program, henceforth denoted by ILP$_n(\und{\alpha})$.
We use $\mcl{R}_n(\und{\alpha})$ to denote the rate region of all minimizers of the above ILP, and $R_{CO,n}(\und{\alpha})$ to denote the minimal cost.

Notice that there is a certain gap between the ``information-theoretic'' optimal solution to the problem LP$_1(\und{\alpha})$,
and the ``data exchange'' optimal solution to the problem ILP$_n(\und{\alpha})$. The reason is that the former solution
assumes that the observation length tends to infinity, while in the data exchange setting we are dealing with the
finite block lengths.

In this section we show how to efficiently solve ILP$_n(\und{\alpha})$ by applying the optimization techniques
we derived so far. Then, we propose a polynomial time code construction based on the matrix completion method
over finite fields borrowed from the network coding literature \cite{H05}.

To gain more insight into the coding scheme, let us start with the problem of finding a rate tuple that
belongs to the region $\mcl{R}_n(\mbf{1})$.

\subsection{Achieving a rate tuple from $\mcl{R}_n(\mbf{1})$} \label{sec:1co}

Let us consider the optimization problem ILP$_n(\mbf{1})$.
Observe that by applying the modified Edmond's algorithm for any $\beta\geq R_{CO}(\mbf{1})$, we obtain a feasible rate tuple
that corresponds to the rate region $\mcl{R}_{de}$ defined in \eqref{cut_set_datexc}. Moreover, by setting $\beta$
to be a fractional number with the denominator $n$ in the problem LP$_1(\beta)$, we also get all the optimal rates to be
fractional numbers with the denominator $n$. Hence, an optimal rate tuple with respect to the optimization problem
ILP$_n(\mbf{1})$ can be obtained by applying the modified Edmond's algorithm for
$\beta = \frac{\lceil n \cdot R_{CO}(\mbf{1}) \rceil}{n} = R_{CO,n}(\mbf{1})$. The next natural question is
how far we are from the information-theoretic optimal solution, \emph{i.e.}, when $n \rightarrow \infty$.

\begin{claim}
The optimal sum rate w.r.t. ILP$_n(\mbf{1})$ is at most $\frac{1}{n}$ symbols in $\mathbb{F}_q$ away from $R_{CO}(\mbf{1})$.
\begin{align}
R_{CO,n}(\mbf{1})-R_{CO}(\mbf{1})\leq \frac{1}{n}.
\end{align}
\end{claim}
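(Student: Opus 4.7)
The plan is to exhibit a feasible rate tuple for ILP$_n(\mbf{1})$ whose sum rate is at most $R_{CO}(\mbf{1}) + \tfrac{1}{n}$, which immediately yields the claim. The natural choice is to round $R_{CO}(\mbf{1})$ up to the nearest multiple of $\tfrac{1}{n}$, i.e.\ to take
\[
\beta^{\star} = \frac{\lceil n\, R_{CO}(\mbf{1}) \rceil}{n}, \qquad \text{so that } R_{CO}(\mbf{1}) \leq \beta^{\star} \leq R_{CO}(\mbf{1}) + \tfrac{1}{n},
\]
and run the modified Edmond's algorithm of Lemma~\ref{lm:modedm} on the intersecting submodular function $f(\cdot,\beta^{\star})$ defined in \eqref{fcn:f}.

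Next I would verify the three properties needed from the output rate tuple $\mbf{R}$. First, feasibility: by construction $\mbf{R}\in P(f,\beta^{\star})$, and since $\beta^{\star} \geq R_{CO}(\mbf{1})$ the equality $g(\mcl{M},\beta^{\star}) = \beta^{\star}$ holds (the algorithm operates on the slope-$1$ segment of $g(\mcl{M},\beta)$ identified in Lemma~\ref{lm:nondec}), so in fact $\mbf{R}\in B(f,\beta^{\star})$. By Lemma~\ref{lm:app_opt1} this base polyhedron equals $B(f^{\star},\geq)$, so $\mbf{R}$ satisfies the cut-set constraints defining $\mcl{R}_{de}$. Second, the sum rate: $\sum_i R_i = \beta^{\star} \leq R_{CO}(\mbf{1}) + \tfrac{1}{n}$. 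Third, integrality in units of $\mathbb{F}_q$-symbols: in the finite linear source model $H(X_{\mcl{S}^c}\mid X_{\mcl{S}}) = \text{rank}(\mbf{A}_{\mcl{S}^c}\mid \mbf{A}_{\mcl{S}})$ is an integer (in units of $\mathbb{F}_{q^n}$-symbols), so $f(\mcl{S},\beta^{\star})$ takes values in $\tfrac{1}{n}\mathbb{Z}$. The modified Edmond's algorithm initializes $\mbf{Z}=\mbf{0}$ and updates each coordinate by a submodular-function minimization whose value lies in $\tfrac{1}{n}\mathbb{Z}$, so by induction every coordinate of the output is a multiple of $\tfrac{1}{n}$, i.e.\ $n R_i\in\mathbb{Z}$.

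Putting the three pieces together produces a tuple that is feasible for ILP$_n(\mbf{1})$ with sum cost at most $R_{CO}(\mbf{1})+\tfrac{1}{n}$, hence
\[
R_{CO,n}(\mbf{1}) \leq \beta^{\star} \leq R_{CO}(\mbf{1}) + \tfrac{1}{n},
\]
which is the claim.

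I do not expect any genuine obstacle; the only step that deserves a line of justification is the integrality invariant in the modified Edmond's algorithm, since the algorithm's update rule involves a minimization over subsets and one must observe that submodular-function minimization over a rational-valued $f$ with denominator $n$ returns a value with denominator $n$. This is immediate from the fact that the minimizer is attained at some subset $\mcl{S}$ and every candidate value $f(\mcl{S},\beta^{\star}) - Z(\mcl{S})$ lies in $\tfrac{1}{n}\mathbb{Z}$ by the induction hypothesis on the previously assigned coordinates.
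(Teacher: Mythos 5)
Your argument is correct and matches the paper's own (very brief) justification: both take $\beta^\star=\lceil n\,R_{CO}(\mbf{1})\rceil/n$, run the modified Edmond's algorithm at this $\beta^\star$ to produce a feasible tuple in $B(f,\beta^\star)\subseteq\mcl{R}_{de}$ with sum rate $\beta^\star\le R_{CO}(\mbf{1})+1/n$, and observe that the algorithm preserves $\tfrac{1}{n}\mathbb{Z}$-integrality. You merely spell out the feasibility (via Remark~\ref{rmk:modedm_base} and Lemmas~\ref{lm:nondec}, \ref{lm:app_opt1}) and the integrality invariant in slightly more detail than the paper does.
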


\begin{example} \label{example1}
{
Consider an example where $3$ users observe the packets of length $n=2$ over the field $\mathbb{F}_q$.
\begin{align}
\mbf{X}_1 = [\begin{array}{cc}
\mbf{a} & \mbf{b}
\end{array}], \nonumber \\
\mbf{X}_2 = [\begin{array}{cc}
\mbf{a} & \mbf{c}
\end{array}], \label{eq:example} \\
\mbf{X}_3 = [\begin{array}{cc}
\mbf{b} & \mbf{c}
\end{array}], \nonumber
\end{align}
where $\mbf{W}=\left[
                 \begin{array}{c}
                   \mbf{a} \\
                   \mbf{b} \\
                   \mbf{c} \\
                 \end{array}
               \right]$ is a data packet vector in $\mathbb{F}_{q^2}^3$ such that
$\mbf{a}=\left[
                 \begin{array}{cc}
                   a_1 & a_2
                 \end{array}
               \right]$,
$\mbf{b}=\left[
                 \begin{array}{cc}
                   b_1 & b_2
                 \end{array}
               \right]$,
$\mbf{c}=\left[
                 \begin{array}{cc}
                   c_1 & c_2
                 \end{array}
               \right]$.}

As pointed out above, we can this of this model as $n=2$ repetitions of the finite linear process.
Solving the problem ILP$_n(\mbf{1})$ for this example, we obtain $R_1=R_2=R_3=\frac{1}{2}$. Moreover, we
also obtain the same rate allocation for the LP$_1(\mbf{1})$, which suggests that in this case
there is no gap in optimality between the finite and infinite observation length.

In Theorem~\ref{thm:netcode} we showed that the network coding solution can achieve any rate tuple
that belongs to $\mcl{R}_{de}$, and hence, it also achieves any rate tuple from $\mcl{R}_n(\mbf{1})$.
It is not hard to see that one possible solution for this example is: user $1$ transmits $a_1+b_2$,
user $2$ transmits $c_1+a_2$, and user $3$ transmits $b_1+c_2$.
\end{example}

\subsection{Code Construction} \label{sec:code}

 The next question that arises from this analysis is how to design the actual transmissions of each user?
Starting from an optimal (integer) rate allocation, we construct the corresponding multicast network (see Figure~\ref{fig:multicast}).
Then, using polynomial time algorithms for the multicast code construction \cite{Jaggi05}, \cite{H05},
we can solve for the actual transmissions of each user.
We illustrate conversion of the data exchange problem to the multicast problem by considering the source model in Example~\ref{example1}.
Then, the extension to an arbitrary linear source model is straightforward.

In this construction, notice that there are $4$ different types of nodes.
Conversion of our problem into the multicast problem assumes the existence of the super-node, here denoted by $S$, that possess all
the packets. In the original problem, each user in the system plays the role of a transmitter and a receiver.
To distinguish between these two states, we denote
$s_1$, $s_2$ and $s_3$ to be the ``sending'' nodes, and $r_1$, $r_2$ and $r_3$ to be the ``receiving'' nodes
which corresponds to the users $1$, $2$ and $3$ in the original system, respectively.

Node $S$, therefore, feeds its information to the nodes $s_1$, $s_2$ and $s_3$.
Unlike the multicast problem, where any linear combination
of the packets can be transmitted from node $S$ to $s_1$, $s_2$ and $s_3$, here the transmitted packets
correspond to the observations of the users $1$, $2$ and $3$, respectively. The
second layer of the network is designed based on the optimal rates $R_1$, $R_2$ and $R_3$.
Since $n=2$, each user gets to transmit $1$ symbol in $\mathbb{F}_q$.
It is clear that all the receiving users are getting two different types of information:
\begin{enumerate}
\item The side information that each user already has. In the multicast network this information is transmitted
      directly from node $s_i$ to node $r_i$, $i=1,2,3$.
\item The information that each node $r_i$ receives from the other nodes $s_j$, $j\neq i$.
\end{enumerate}
To model the second type of information, let us consider the nodes $r_2$ and $r_3$.

Due to the broadcast nature of the channel, both $r_2$ and $r_3$ are receiving the same
symbol in $\mathbb{F}_{q}$ from node $s_1$. Thus, it is necessary to introduce a dummy
node $t_1$ to model this constraint. The capacities of the links $s_1-t_1$, $t_1-r_2$ and $t_1-r_3$
are all equal to $1$ symbol in $\mathbb{F}_{q}$. Note that this constraint ensures that the nodes
$r_2$ and $r_3$ are obtaining the same $1$ symbol from $s_1$. The remaining edges are designed in a similar way.

\begin{figure}
\begin{center}
\psset{unit=0.40mm}
\begin{pspicture}(0,0)(150,160)

\cnodeput[linestyle=solid,linecolor=black,fillstyle=solid](75,150){S}{$S$}

\cnodeput[linestyle=solid,linecolor=black,fillstyle=solid](5,90){A}{$s_1$}

\cnodeput[linestyle=solid,linecolor=black,fillstyle=solid](30,60){A2}{$t_1$}
\ncline[linestyle=solid]{-}{A}{A2}\Aput[0.9pt]{$1$}

\cnodeput[linestyle=solid,linecolor=black,fillstyle=solid](75,90){B}{$s_2$}

\cnodeput[linestyle=solid,linecolor=black,fillstyle=solid](90,60){B2}{$t_2$}
\ncline[linestyle=solid]{-}{B}{B2}\Aput[0.9pt]{$1$}

\cnodeput[linestyle=solid,linecolor=black,fillstyle=solid](145,90){C}{$s_3$}

\cnodeput[linestyle=solid,linecolor=black,fillstyle=solid](125,60){C2}{$t_3$}
\ncline[linestyle=solid]{-}{C}{C2}\Bput[0.9pt]{$1$}


\cnodeput[linestyle=solid,linecolor=black,fillstyle=solid](5,5){A1}{$r_1$}

\cnodeput[linestyle=solid,linecolor=black,fillstyle=solid](75,5){B1}{$r_2$}

\cnodeput[linestyle=solid,linecolor=black,fillstyle=solid](145,5){C1}{$r_3$}

\ncline{<-}{A}{S}\aput[0.8pt]{:U}{$a_1b_1a_2b_2$}
\ncline{<-}{B}{S}\aput[0.8pt]{:U}{$a_1c_1a_2c_2$}
\ncline{->}{S}{C}\aput[0.8pt]{:U}{$b_1c_1b_2c_2$}

\ncline{->}{A}{A1}\Bput[1pt]{$4$}
\ncline[linestyle=solid]{->}{A2}{B1}\bput[0.8pt](0.8){$1$}
\ncline[linestyle=solid]{->}{A2}{C1}\bput[0.8pt](0.8){$1$}

\ncline[linestyle=solid]{->}{B2}{A1}\bput[0.8pt](0.8){$1$}
\ncline{->}{B}{B1}\bput[1pt](0.2){$4$}
\ncline[linestyle=solid]{->}{B2}{C1}\aput[0.8pt](0.7){$1$}

\ncline[linestyle=solid]{->}{C2}{A1}\aput[0.8pt](0.8){$1$}
\ncline[linestyle=solid]{->}{C2}{B1}\aput[0.8pt](0.8){$1$}
\ncline{->}{C}{C1}\Aput[1pt]{$4$}

\end{pspicture}
\end{center}
\caption{Multicast network constructed from the source model and the
         optimal rate tuple $R_1=R_2=R_3=\frac{1}{2}$ that belongs to $\mcl{R}_2(\mbf{1})$. Each user receives side
         information from ``itself'' (through the links $s_i-r_i$, $i=1,2,3$) and the other users
         (through the links $t_i-r_j$, $i,j\in \{1,2,3\}$, $i\neq j$).} \label{fig:multicast}
\end{figure}
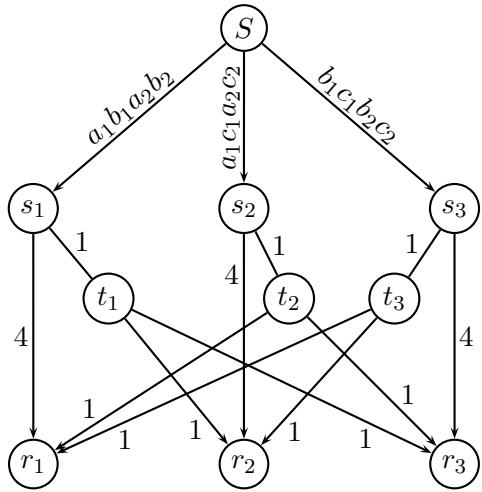

Now, when we have a well-defined network, it is only left to figure out transmissions on all the edges.
If we want to apply Jaggi's algorithm \cite{Jaggi05}, the first step
is to determine disjoint paths from the super-node $S$ to each receiver $r_1-r_3$ using the Ford-Fulkerson
algorithm \cite{B98}. While the solution to this
problem is easy in the case when each user observes only a subset of the packets (like in this example),
it is not trivial to find disjoint paths which connect linearly independent sources to the receivers $r_i$
(see Figure~\ref{fig:multicast}).  For that reason we apply Harvey's algorithm \cite{H05} which is based
on matrix representation of the transmissions in the network \cite{KM03}, \cite{ho2006random},
and simultaneous matrix completion problem over finite fields.

In \cite{KM03}, the authors derived
the transfer matrix $\mbf{M}(r_i)$ from the super-node $S$ to any receiver $r_i$, $i=1,2,\ldots,m$. It is a $N \times N$
matrix with the input vector $\mbf{W}$, and the output vector corresponding to the observations at the receiver $r_i$.
\begin{align}
\mbf{M}(r_i) = \mbf{A} (\mbf{I} - \mbf{\Gamma})^{-1} \mbf{B}(r_i),~~~i=1,2,\ldots,m,
\end{align}
where matrix $\mbf{A}$ is a source matrix, $\mbf{\Gamma}$ is adjacency matrix of the multicast network,
and $\mbf{B}(r_i)$ is an output matrix. For more details on how these matrices are constructed,
we refer the interested reader to the reference \cite{KM03}. Here, we just make a comment on the source matrix
$\mbf{A}$. In general, it is a $N \times \ell$ matrix, where $\ell$ is the total number of edges in the network.
Input to the matrix $\mbf{A}$ is the vector of independent packets $\mbf{W}$. For the source model in Figure~\ref{fig:multicast},
non-zero entries in the matrix $\mbf{A}$ correspond to the edges $S-s_1$, $S-s_2$ and $S-s_3$. Since, transmissions
on those edges are already assigned by the underlying source model, in general we have
\begin{align}
\mbf{A}=
\left[
  \begin{array}{ccccccc}
    \mbf{A}_1^T & \mbf{A}_2^T & \cdots & \mbf{A}_m^T & \mbf{0} & \cdots & \mbf{0}
  \end{array}
\right],
\end{align}
where $\mbf{A}_i$ corresponds to the observation matrix defined in \eqref{mod:eq1}.

Essentially, a multicast problem has a network coding solution if and only if each matrix $\mbf{M}(r_i)$
is non-singular. In \cite{H05}, the author showed that for the \emph{expanded transfer matrix} defined as
\begin{align}
\mbf{E}(r_i)=
\left[
  \begin{array}{cc}
    \mbf{A} & \mbf{0} \\
    \mbf{I}-\mbf{\Gamma} & \mbf{B}(r_i) \\
  \end{array}
\right],~~~i=1,2,\ldots,m,
\end{align}
it holds that $\det (\mbf{M}(r_i))=\pm \det (\mbf{E}(r_i))$.

It should be noted that some of the entries in matrices $\mbf{\Gamma}$ and $\mbf{B}(r_i)$, $i=1,2,\ldots,m$, are
unknowns. To obtain the actual transmissions on all the edges it is necessary to replace those unknown entries
with elements over $\mathbb{F}_q$ such that all matrices $\mbf{E}(r_i)$, $i=1,2,\ldots,m$, have full rank.
This is known as a simultaneous matrix completion problem and it is solved in \cite{H05} in polynomial time.

\begin{lemma}[Harvey, \cite{H05}] \label{lm:exits_sol}
Polynomial time solution for the simultaneous matrix completion problem exists if and only if $|\mathbb{F}_q|> m$.
The complexity of the proposed algorithm applied to the data exchange problem is
$\mcl{O}(m^4\cdot N^3 \cdot n^3 \cdot \log(m\cdot N \cdot n))$.
\end{lemma}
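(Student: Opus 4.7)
The plan is to derive this lemma as a direct instantiation of Harvey's simultaneous matrix completion theorem \cite{H05} applied to the multicast reduction introduced in Section~\ref{sec:code}. First I would set up the correspondence: the data exchange problem, after the $n$-fold repetition of the finite linear source, reduces to choosing unknown entries of $\mbf{\Gamma}$ and of the output matrices $\mbf{B}(r_i)$ so that each of the $m$ expanded transfer matrices $\mbf{E}(r_i)$, one per receiver $r_i$, is nonsingular. This is exactly the simultaneous matrix completion primitive studied in \cite{H05}, with the number of matrices equal to the number of receivers $m$.

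For the existence part, I would invoke Harvey's theorem that a deterministic polynomial-time simultaneous completion of $k$ symbolic square matrices to full rank over $\mathbb{F}_q$ exists whenever $|\mathbb{F}_q| > k$, and that this threshold is tight. Setting $k = m$ gives the condition $|\mathbb{F}_q| > m$. Necessity would be shown by exhibiting a small adversarial instance (for example, $m$ receivers each needing to decode a common symbol from a distinct nonzero coefficient assigned to a shared edge) where any valid completion forces $m$ pairwise distinct nonzero elements, which is impossible when $|\mathbb{F}_q| \le m$.

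For the complexity bound, I would substitute the parameters of the constructed multicast network into Harvey's general algorithmic complexity. The reduction of Section~\ref{sec:code} introduces $O(m^2)$ edges (each sender feeds a dummy node that broadcasts to the other $m-1$ receivers), and after the $n$-fold expansion, each matrix $\mbf{E}(r_i)$ has dimension $s = \mcl{O}((N + m^2)\cdot n)$. Harvey's algorithm completes $m$ such $s \times s$ matrices in $\mcl{O}(s^{3} \cdot m \cdot \log|\mathbb{F}_q|)$ field operations with $|\mathbb{F}_q| = \mcl{O}(m)$; substituting and keeping the dominant term in $s$ produces $\mcl{O}(m^{4}\cdot N^{3} \cdot n^{3} \cdot \log(m\cdot N\cdot n))$.

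The main obstacle is the bookkeeping of the dimension $s$ after the $n$-fold expansion of the finite linear source, in particular verifying that the $O(m^2)$ edges introduced by the reduction are absorbed into the $N^{3}\cdot n^{3}$ factor and do not produce a larger exponent in $m$ than $m^{4}$; the existence half of the statement is essentially a direct quotation of \cite{H05}.
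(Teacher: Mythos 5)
The paper does not prove this lemma; it is a direct citation of Harvey's simultaneous matrix completion theorem \cite{H05}, specialized to the multicast reduction built in Section~\ref{sec:code}. Your overall strategy---instantiating Harvey's general result with $m$ target matrices (one per receiving node) and then bookkeeping the dimensions---is therefore the right one, and the field-size threshold $|\mathbb{F}_q| > m$ is indeed a verbatim import of Harvey's bound with $k=m$.

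The place where your proposal goes wrong is the dimension estimate, and the error is more than the bookkeeping nuisance you flag. The size of each expanded transfer matrix $\mbf{E}(r_i)$ is governed by the number of \emph{unit-capacity} edges in the network, i.e., the sum of all edge capacities in $\mathbb{F}_q$-symbols, not by the number of \emph{logical} edges times $n$. You write $s = \mcl{O}\bigl((N + m^2)\cdot n\bigr)$, apparently counting the $\mcl{O}(m^2)$ broadcast edges $t_i\to r_j$ at unit size each. But each such edge carries $R_i\cdot n$ symbols, each $s_i\to r_i$ side-information edge carries $\ell_i\cdot n \le N n$ symbols, and the key point is that the broadcast rates satisfy $\sum_i R_i \le H(X_{\mcl{M}})\le N$. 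Summing capacities across the $m(m-1)$ broadcast edges thus gives $(m-1)n\sum_i R_i = \mcl{O}(m N n)$, and the side-information and source edges contribute another $\mcl{O}(mNn)$. Hence $\ell = \mcl{O}(mNn)$ and each $\mbf{E}(r_i)$ is $\mcl{O}(mNn)\times\mcl{O}(mNn)$. Plugging $s=\mcl{O}(mNn)$ with $m$ target matrices into Harvey's bound produces exactly $\mcl{O}(m^4 N^3 n^3 \log(mNn))$. Your estimate $s=\mcl{O}((N+m^2)n)$, by contrast, does \emph{not} produce the stated complexity: for $m\gg N$ it gives $\mcl{O}(m^7 n^3 \log m)$, which overshoots by $m^3$. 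The ``only if'' direction of the lemma (tightness of the field-size bound) you handle correctly at the level of a sketch, though the lemma as stated is itself somewhat loosely phrased---what Harvey shows is that $|\mathbb{F}_q|>m$ guarantees a simultaneous completion exists and can be found in polynomial time, and that the threshold cannot be improved in general.
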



The complexity of the code construction can be further reduced when for the $(R_1,R_2,\ldots,R_m) \in \mcl{R}_n(\mbf{1})$
it holds that the greatest common divisor $gcd(nR_1,nR_2,\ldots,nR_m)>1$.
In this case, for every $\tilde{n}=\frac{n}{gcd(nR_1,nR_2,\ldots,nR_m)}$ generations of the finite linear process,
we still have that each user transmits some integer number of symbols in $\mathbb{F}_q$. Hence, it is enough
to construct a coding scheme for $\tilde{n}$ observations of the linear process, and then just to apply such scheme
$\frac{n}{\tilde{n}}$ times to solve the data exchange problem. From Lemma~\ref{lm:exits_sol} the complexity of such scheme
is $\mcl{O}(m^4\cdot N^3 \cdot \tilde{n}^3 \cdot \log(m\cdot N \cdot \tilde{n}))$.

\subsection{Asymptotic optimality of $R_{CO,n}(\mbf{1})$}

In this section we consider under which conditions there is no gap between the solution
of the problem ILP$_n(\mbf{1})$, when $n$ is finite, and the solution of LP$_1(\mbf{1})$
(asymptotic solution $n \rightarrow \infty$). To that end, let us consider the following Lemma.
\begin{lemma} \label{lm:rco}
Optimal $R_{CO}(\mbf{1})$ rate of the problem LP$_1(\mbf{1})$ can be expressed as
\begin{align}
R_{CO}(\mbf{1}) = H(X_{\mcl{M}})-\min_{\mcl{P}}\left\{\frac{\sum_{\mcl{S}\in \mcl{P}}H(X_{\mcl{S}})-H(X_{\mcl{M}})}{|\mcl{P}|-1}\right\},~~\text{$\mcl{P}$ is a partition of $\mcl{M}$ s.t. $|\mcl{P}|\geq 2$}. \label{lm1:rco}
\end{align}
\end{lemma}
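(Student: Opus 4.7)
The plan is to start from the characterization of $R_{CO}(\mbf{1})$ already established in Section~\ref{sec:CO_rates}, namely
\[
R_{CO}(\mbf{1}) = \min\bigl\{\beta \geq 0 : g(\mcl{M},\beta) = \beta\bigr\},
\]
where, by Theorem~\ref{thm:dilworth} applied to the intersecting submodular function $f(\mcl{S},\beta)$,
\[
g(\mcl{M},\beta) \;=\; \min_{\mcl{P}}\sum_{\mcl{S}\in\mcl{P}}\bigl(\beta - H(X_{\mcl{S}^c}|X_{\mcl{S}})\bigr) \;=\; \min_{\mcl{P}}\left\{|\mcl{P}|\beta - \sum_{\mcl{S}\in\mcl{P}} H(X_{\mcl{S}^c}|X_{\mcl{S}})\right\}.
\]
The trivial partition $\mcl{P}=\{\mcl{M}\}$ contributes the value $\beta$ to this minimum (using $H(X_{\emptyset}|X_{\mcl{M}})=0$), so $g(\mcl{M},\beta)\leq \beta$ for every $\beta$, with equality iff the trivial partition attains the minimum.

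Next, I would re-express the equality $g(\mcl{M},\beta)=\beta$ as the family of inequalities that every \emph{non-trivial} partition of $\mcl{M}$ must satisfy: for each $\mcl{P}$ with $|\mcl{P}|\geq 2$,
\[
|\mcl{P}|\beta - \sum_{\mcl{S}\in\mcl{P}} H(X_{\mcl{S}^c}|X_{\mcl{S}}) \;\geq\; \beta \quad\Longleftrightarrow\quad \beta \;\geq\; \frac{\sum_{\mcl{S}\in\mcl{P}} H(X_{\mcl{S}^c}|X_{\mcl{S}})}{|\mcl{P}|-1}.
\]
Hence $R_{CO}(\mbf{1})$, the smallest such $\beta$, equals the maximum of the right-hand side over all non-trivial partitions:
\[
R_{CO}(\mbf{1}) \;=\; \max_{\mcl{P}\,:\,|\mcl{P}|\geq 2} \frac{\sum_{\mcl{S}\in\mcl{P}} H(X_{\mcl{S}^c}|X_{\mcl{S}})}{|\mcl{P}|-1}.
\]
Tightness of this $\max$ (rather than merely $\sup$) is guaranteed by Lemma~\ref{lm:nondec}: the slope-$1$ segment of the concave piecewise-linear function $g(\mcl{M},\beta)$ begins exactly at $\beta=R_{CO}(\mbf{1})$, so some non-trivial partition from a neighbouring segment attains equality there.

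Finally, I would substitute the identity $H(X_{\mcl{S}^c}|X_{\mcl{S}}) = H(X_{\mcl{M}}) - H(X_{\mcl{S}})$ into the numerator and split the resulting fraction:
\[
\frac{\sum_{\mcl{S}\in\mcl{P}}\bigl(H(X_{\mcl{M}})-H(X_{\mcl{S}})\bigr)}{|\mcl{P}|-1} \;=\; H(X_{\mcl{M}}) \;-\; \frac{\sum_{\mcl{S}\in\mcl{P}} H(X_{\mcl{S}}) - H(X_{\mcl{M}})}{|\mcl{P}|-1},
\]
using the identity $|\mcl{P}| = (|\mcl{P}|-1)+1$ on the $H(X_{\mcl{M}})$ terms. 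Taking the maximum on the left corresponds to taking the minimum on the right, yielding exactly the claimed formula \eqref{lm1:rco}.

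The main obstacle is the second step: rigorously upgrading the inequality $R_{CO}(\mbf{1}) \geq \sup_{\mcl{P}}(\cdot)$ to equality with the $\max$ attained. I would handle this by appealing directly to Lemma~\ref{lm:nondec}, which asserts that $R_{CO}(\mbf{1})$ is a breakpoint between a slope-$>1$ segment and the slope-$1$ segment; the partition $\mcl{P}(R_{CO}(\mbf{1}))$ associated with the left segment (which has $|\mcl{P}|\geq 2$) attains the supremum. All other steps are routine algebraic manipulations.
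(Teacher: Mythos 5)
Your proof is correct and follows essentially the same approach as the paper: both start from the characterization $R_{CO}(\mbf{1})=\min\{\beta: g(\mcl{M},\beta)=\beta\}$ and the Dilworth-truncation formula for $g(\mcl{M},\beta)$, with your version spelling out the two-sided argument more explicitly than the paper's terser derivation at the breakpoint. Your stated concern about the supremum being attained is automatically resolved since the set of partitions of the finite set $\mcl{M}$ is finite, so the appeal to Lemma~\ref{lm:nondec} for attainment is unnecessary (though harmless).
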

Proof of Lemma \ref{lm:rco} is provided in Appendix \ref{app:lm:rco}. It is based on a geometry of the function $g(\mcl{M},\beta)$.
Minimization \eqref{lm1:rco} was also shown in \cite{CZ10} by considering an LP dual of the optimization problem LP$_1(\mbf{1})$.

From Lemma \ref{lm:rco}, $R_{CO}(\mbf{1})$ can be expressed as a rational number. Moreover, the denominator
of $R_{CO}(\mbf{1})$ can be some integer number between $1$ and $m-1$ depending on the cardinality of
the optimal partition according to \eqref{lm1:rco}. From Lemma \ref{lm:modedm} it immediately follows
that all $(R_1,R_2,\ldots,R_m) \in \mcl{R}_n(\mbf{1})$ are also rational numbers with the denominator $n$.

To that end, if $n$ is divisible by $|\mcl{P}(R_{CO}(\mbf{1}))-1|$ for $\mcl{P}(R_{CO}(\mbf{1}))|\geq 2$, then
\begin{align}
R_{CO,n}(\mbf{1})=R_{CO}(\mbf{1}).
\end{align}

\subsection{Achieving a rate tuple from $\mcl{R}_n(\und{\alpha})$} \label{sec:rco_alpha}

In Section~\ref{sec:code} we argued that once we obtain the optimal fractional rates (which denote how many symbols in $\mathbb{F}_q$
each user transmits), the construction of the corresponding
multicast network is straightforward, and hence, the coding scheme can be obtained in polynomial time by
using the algorithm proposed in \cite{H05}.
Here, we describe an algorithm that finds an optimal solution to the optimization problem ILP$_n(\und{\alpha})$.


In Section~\ref{sec:alpha} we proposed the gradient descent algorithm to achieve an approximate solution to the problem LP$_1(\und{\alpha})$.
Setting the precision parameter $\varepsilon=\frac{1}{n}$ it is guaranteed that the distance between the sum rate
which corresponds to the rate tuple from  $\mcl{R}(\und{\alpha})$ and the sum rate obtained through the gradient descent algorithm,
is at most $\frac{1}{n}$, \emph{i.e.}, $|\beta_{gd}-\beta^{\star}|\leq \frac{1}{n}$.
Therefore, we have
\begin{align}
|n \beta_{gd}-n \beta^{\star}|\leq 1. \label{eq:beta_bound1}
\end{align}
From \eqref{eq:beta_bound1} we conclude that
\begin{align}
\left| \frac{\lfloor n \beta_{gd}\rfloor}{n}-\beta^{\star} \right|\leq \frac{1}{n}~~\text{or}~~
\left| \frac{\lceil n \beta_{gd}\rceil}{n}-\beta^{\star} \right|\leq \frac{1}{n}. \label{eq:beta_bound2}
\end{align}
From \eqref{eq:beta_bound2} it follows that we can achieve a rate tuple from $\mcl{R}_n(\und{\alpha})$
which sum rate is at most $\frac{1}{n}$ away from $\beta^{\star}$ by choosing $\beta=\frac{\lfloor n \beta_{gd}\rfloor}{n}$ or
$\beta=\frac{\lceil n \beta_{gd}\rceil}{n}$. Let us denote by $\beta_{(n)}$ the optimal sum rate w.r.t. ILP$_n(\und{\alpha})$.
To decide which one of the proposed $\beta$'s is equal to $\beta_{(n)}$, we just need to compare the values of the function $h$ at these points.
\begin{align}
\beta_{(n)} = \arg\min \left\{h\left( \frac{\lfloor n\beta_{gd} \rfloor}{n} \right), h\left( \frac{\lceil n\beta_{gd} \rceil}{n} \right)  \right\}. \label{eq:beta_bound3}
\end{align}
Then, it follows that
\begin{align}
|R_{CO,n}(\und{\alpha})-R_{CO}(\und{\alpha})|\leq \frac{\max_i \alpha_i}{n}.
\end{align}
Complexity of the proposed algorithm is $\mcl{O}(m^2\cdot SFM(m) + \log(n\cdot N)\cdot m \cdot SFM(m))$.
After obtaining an optimal communication rates w.r.t. ILP$_n(\und{\alpha})$, it is only left to
apply the code construction algorithm proposed in Subsection~\ref{sec:1co} (see Lemma~\ref{lm:exits_sol}).

\subsection{Asymptotic optimality of $R_{CO,n}(\und{\alpha})$}

In this section we explore under which conditions the optimal solutions of the problems
ILP$_n(\und{\alpha})$ and LP$_1(\und{\alpha})$ are the same.

In order to obtain the asymptotically optimal rates w.r.t. LP$_1(\und{\alpha})$,
it is necessary to bound from bellow the length of each linear segment in $h(\beta)$. Then, by
choosing the appropriate step size in the gradient descent algorithm, we can achieve the goal.

\begin{theorem}\label{thm:bound}
An optimal asymptotic solution to the problem LP$_1(\und{\alpha})$ in the finite linear source model
can be obtained in polynomial time by using a gradient descent method with the precision parameter $\varepsilon=m^{-m/2}$.
Complexity of the proposed algorithm is $\mcl{O}((m\cdot \log m+ \log N) \cdot m \cdot SFM(m))$.
\end{theorem}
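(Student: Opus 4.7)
The plan is to combine the gradient descent of Section~\ref{sec:gdm} with a rounding step that exploits the integrality of the constraint values in the finite linear source model. By Theorem~\ref{thm:bpts}, every breakpoint of $h(\beta)$ is realized at a vertex of the rate region $\mcl{R}_{de}$. Such a vertex is the unique solution of a square system $A\mbf{R}=\mbf{b}$ in which $A\in\{0,1\}^{m\times m}$ collects the indicator rows of the tight cuts $\mcl{S}_1,\dots,\mcl{S}_m$ and $\mbf{b}\in\mathbb{Z}^m$ is the vector of right-hand sides $\text{rank}(\mbf{A}_{\mcl{S}_i}|\mbf{A}_{\mcl{S}_i^c})$, all integers in this model. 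Hadamard's inequality yields $|\det A|\le m^{m/2}$, so by Cramer's rule $\beta^{\star}=R(\mcl{M})$ at the optimum is a rational with denominator at most $m^{m/2}$.

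The key technical step is to show that any two distinct breakpoints of $h$ are separated by $\Omega(m^{-m/2})$, so that the gradient descent output $\beta_{gd}$ with precision $\varepsilon=m^{-m/2}$ uniquely identifies the linear segment of $h$ containing $\beta^{\star}$. A direct application of the denominator bound only yields a separation of $1/m^m$. Obtaining the tighter $m^{-m/2}$ bound requires using the fact that adjacent breakpoints of the convex piecewise linear $h$ correspond to two vertices differing in exactly one tight cut, so the underlying 0-1 matrices $A_1$ and $A_2$ differ only in one row. A perturbation analysis of $\det(A_2)-\det(A_1)$ combined with the Hadamard bound on the adjoint $\text{adj}(A_1)$ gives the claimed $\Omega(m^{-m/2})$ separation along the edge direction $A_1^{-1}(\mbf{b}_2-\mbf{b}_1)$. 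This is the main obstacle of the proof.

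Once $\beta_{gd}$ pins down the correct linear segment, invoke Lemma~\ref{lm:derivative} to express the optimal rate on this segment as $R_i=b_i\beta+c_i$ with $b_i\in\mathbb{Z}$ and $c_i$ a sum of conditional ranks. The exact $\beta^{\star}$ is then the endpoint of the segment where the slope $\sum_i \alpha_i b_i$ first becomes nonnegative (or any interior point if the segment already has zero slope), from which $\mbf{R}^{\star}$ is read off and verified to lie in $\mcl{R}_{de}$. For the complexity, the search window is $[R_{CO}(\mbf{1}),H(X_{\mcl{M}})]$ of length $O(N)$, so achieving $\varepsilon=m^{-m/2}$ costs $O(\log N+m\log m)$ iterations of gradient descent, each evaluating $h$ and $dh/d\beta$ in $O(m\cdot SFM(m))$ time by Lemma~\ref{lm:derivative}. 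Combined with the one-time $O(m^2\cdot SFM(m))$ cost of Algorithm~\ref{alg:bin_search} to initialize the search interval, the total is $O((m\log m+\log N)\cdot m\cdot SFM(m))$, as claimed.
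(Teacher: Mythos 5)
Your plan follows the paper's proof essentially verbatim: invoke Theorem~\ref{thm:bpts} to view each breakpoint of $h(\beta)$ as a vertex of $\mathcal{R}_{de}$, express that vertex as the unique solution of a $\{0,1\}$ linear system with integer right-hand sides, bound $|\det\mathbf{\Lambda}|\le m^{m/2}$ by Hadamard's inequality, and conclude that gradient descent with precision $\varepsilon=m^{-m/2}$ pins down the segment containing $\beta^\star$. Where you add value is that you explicitly flag the weakest link: the paper silently passes from ``each breakpoint is a rational with denominator $\le m^{m/2}$'' to ``adjacent breakpoints are separated by $\ge m^{-m/2}$,'' whereas the elementary argument (two distinct breakpoints may have different denominators) only delivers separation $\ge m^{-m}$. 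You are right to be suspicious; the paper does not justify the stronger bound.

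That said, your proposed repair is only a sketch. The assertion that ``adjacent breakpoints differ in exactly one tight cut, so $A_1$ and $A_2$ differ in one row, and a perturbation analysis of $\det(A_2)-\det(A_1)$ together with the Hadamard bound on $\text{adj}(A_1)$ gives $\Omega(m^{-m/2})$'' is not derived, and it is far from obvious that a rank-one row swap improves the lower bound on $|\mathbf{1}^{T}(A_2^{-1}\mathbf{b}_2-A_1^{-1}\mathbf{b}_1)|$ from $m^{-m}$ to $m^{-m/2}$. Fortunately, the theorem's actual content is insensitive to this: using the safe separation $m^{-m}$ and setting $\varepsilon=m^{-m}$ still yields $\mathcal{O}(\log(1/\varepsilon))=\mathcal{O}(m\log m)$ gradient-descent iterations, which matches the claimed complexity $\mathcal{O}((m\log m+\log N)\cdot m\cdot SFM(m))$. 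The cleanest fix, then, is not to chase the sharper separation but simply to replace $m^{-m/2}$ by $m^{-m}$ in the statement and proof; the complexity bound survives unchanged.
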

\begin{proof}
In Theorem \ref{thm:bpts} we showed that each breakpoint in $h(\beta)$ corresponds to a vertex of the rate region $\mcl{R}$
defined in \eqref{cut_set}. In other words, for some breakpoint $\beta_j$, the optimal rate tuple is uniquely defined
by the following system of equations
\begin{align}
R(\mcl{S}_i) = H(X_{\mcl{S}_i}|X_{\mcl{S}^c_i}),~~i=1,2,\ldots,m, \label{system_eq}
\end{align}
where $\mcl{S}_i \subset \mcl{M}$. Moreover, it holds that $R(\mcl{M})=\beta_r$.
System of linear equations \eqref{system_eq} can be expressed in a matrix form as follows.
\begin{align}
\mbf{\Lambda}\cdot \mbf{R} =
\left[
  \begin{array}{cccc}
    H(X_{\mcl{S}_1}|X_{\mcl{S}^c_1}) & H(X_{\mcl{S}_2}|X_{\mcl{S}^c_2}) & \ldots & H(X_{\mcl{S}_m}|X_{\mcl{S}^c_m}) \\
  \end{array}
\right]^T, \label{eq:matrix}
\end{align}
where
\begin{align}
\mbf{\Lambda}_{i,j}=
\begin{cases}
1  &  \text{if}~j\in \mcl{S}_i, \\
0  &  \text{otherwise}. \\
\end{cases}
\end{align}
In order to obtain the optimal rate tuple which corresponds to the breakpoint $\beta_r$, we can simply invert the matrix $\mbf{\Lambda}$.
Notice that the right hand side of \eqref{eq:matrix} consists of the conditional entropy (rank) expressions, which are, in the
case of the linear source model, integers. Therefore, all optimal rates $\mbf{R}$ which correspond to the breakpoints of $h(\beta)$
are fractional numbers with the denominator equal to the $\det(\mbf{\Lambda})$. This comes from the fact that
\begin{align}
\mbf{\Lambda}^{-1}=\frac{1}{\det(\mbf{\Lambda})}\cdot \text{adj}(\mbf{\Lambda}),
\end{align}
where adj$(\mbf{\Lambda})$ is the adjugate  of $\mbf{\Lambda}$. From \cite{hadamard} it follows that
\begin{align}
|\det(\mbf{\Lambda})|\leq m^{m/2}. \label{det:bound}
\end{align}
Therefore, all the breakpoints of $h(\beta)$ are at the distance of at least $m^{-m/2}$ from each other. Hence,
by setting the precision parameter in a gradient descent algorithm to $\varepsilon=m^{-m/2}$, we can make sure
that the minimizer of $h(\beta)$ is the end point of the linear segment to which approximate solution belongs to.
\end{proof}

In the further text, we explain how to find the minimum of $h(\beta)$ by applying a simple binary search algorithm
on top of the gradient descent algorithm proposed in Theorem \ref{thm:bound}. Let us consider the scenario in Figure \ref{fig:balg}.
Applying the gradient descent algorithm, with the precision parameter $\varepsilon=m^{-m/2}$ we can reach a point $\beta_{gd}$ that is $\varepsilon$
close to the minimizer $\beta^{\star}$ of $h(\beta)$, \emph{i.e.}, $|\beta_{gd}-\beta^{\star}|\leq m^{-m/2}$. Applying the modified Edmond's algorithm for $\beta=\beta_{gd}$,
we obtain parameters $(\mbf{b}^{(gd)},\mbf{c}^{(gd)})$ (see \eqref{eq:represent}) which correspond to the linear segment to which $\beta_{gd}$ belongs to.
In order to obtain $\beta^{\star}$ we simply need to jump to the consecutive linear segment. To that end, let $\beta_1=\beta_{gd}-m^{-m/2}$ belongs to the
linear segment $(\mbf{b}^{(1)},\mbf{c}^{(1)})$. Then, $\beta^{\star}$ can be obtained by intersecting these two linear segments.
\begin{align}
\beta^{\star} = \frac{\sum_{i=1}^m c^{(1)}_i \alpha_i-\sum_{i=1}^m c^{(gd)}_i \alpha_i}{\sum_{i=1}^m b^{(gd)}_i \alpha_i - \sum_{i=1}^m b^{(1)}_i \alpha_i}.
\end{align}
\begin{figure}[h]
\begin{center}
\includegraphics[scale=0.6]{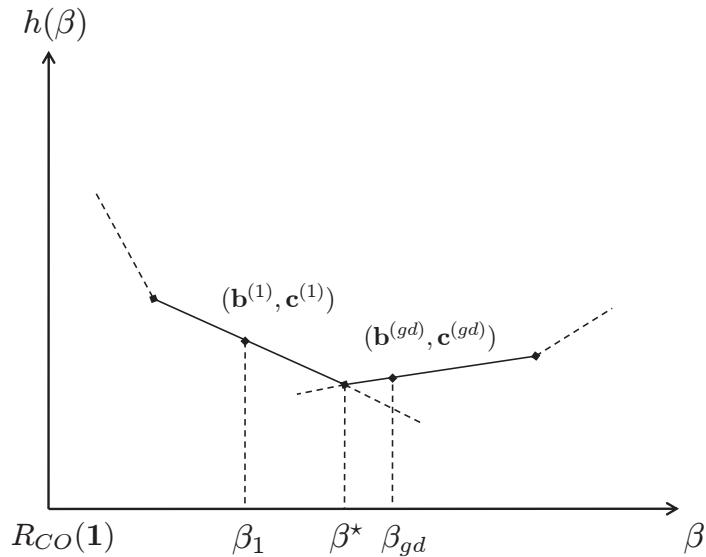}
\end{center}
\vspace{-0.2in}
\caption{Line intersection procedure applied on top of the gradient descent algorithm to obtain the minimum $h(\beta^{\star})$.}\label{fig:balg}
\end{figure}

Therefore, if the data packet length $n$ is divisible by the denominator of $\beta^{\star}$, then $R_{CO,n}(\und{\alpha})=R_{CO}(\und{\alpha})$.

\subsection{Indivisible Packets}

Let us now consider the scenario when the data packets cannot be split.
To obtain an optimal communication rates, we can directly apply the results form the Sections \ref{sec:1co} and \ref{sec:rco_alpha}.
We can think of this problem as having one packet over very large base field $\mathbb{F}_{q^n}$.

Hence, for the case when $\und{\alpha}=\mbf{1}$, it holds that
\begin{align}
R_{CO,1}(\mbf{1})=\lceil R_{CO}(\mbf{1}) \rceil~~\text{symbols in $\mathbb{F}_{q^n}$.} \nonumber
\end{align}
Similarly, we can obtain the sum rate which corresponds to the optimal $R_{CO,1}(\und{\alpha})$ as follows.
\begin{align}
\beta_{(1)} = \arg\min \left\{h\left( \lfloor \beta_{gd} \rfloor \right), h\left( \lceil \beta_{gd} \rceil \right)  \right\}~~\text{symbols in $\mathbb{F}_{q^n}$.} \nonumber
\end{align}
However, in the actual coding scheme, all the algebraic operations are performed over the original base field $\mathbb{F}_q$.

\section{Conclusion} \label{sec:conclusion}

In this work we addressed the problem of the data exchange, where each user in the system possess some partial knowledge
(side information) about the file that is of common interest. The goal is for each user to gain access to the entire file while
minimizing the (possibly weighted) amount of bits that these users need to exchange over a noiseless public channel.
For the general case when the side information is in form of the i.i.d. realizations of some discrete memoryless process, 
we provide a polynomial time algorithm that finds an optimal rate allocation w.r.t. communication cost. 
Our solution is based on some combinatorial optimization techniques such as optimizations over submodular polyhedrons, 
Dilworth truncation of intersecting submodular functions, Edmond's greedy algorithm, etc.  
For the case when the side information is in form of the linearly coded packets, besides an optimal rate allocation in polynomial time, 
we provide efficient methods for constructing linear network codes that can achieve omniscience among the users at the optimal rates
with finite block lengths and zero-error.

\appendices
\section{Proof of Lemma \ref{lm:app_opt1}}
\label{app:lm:app_opt1}
Base polyhedron $B(f,\leq)$ is defined by the following system of inequalities
\begin{align}
Z(\mcl{S})&\leq f(\mcl{S}),~~~\mcl{S} \subset \mcl{M}, \\
Z(\mcl{M})&=f(\mcl{M}).
\end{align}
This is equivalent to the following
\begin{align}
Z(\mcl{S}^c)&\geq f^{\star}(\mcl{S}^c)(=f(\mcl{M})-f(\mcl{S})), \\
Z(\mcl{M})&=f^{\star}(\mcl{M})(=f(\mcl{M})),
\end{align}
where the last equality holds because $f(\emptyset)=0$.
For the second part, we have
\begin{align}
(f^{\star})^{\star}(\mcl{S})&=f^{\star}(\mcl{M})-f^{\star}(\mcl{S}^c) \nonumber \\
                            &=f(\mcl{M})-(f(\mcl{M})-f(\mcl{S}))=f(\mcl{S}). \nonumber
\end{align}
\section{Proof of Lemma \ref{lm:f}}
\label{app:lm:f}
Using the properties of conditional entropy, we can write $f(\mcl{S},\beta)=\beta-H(X_{\mcl{M}})+H(X_{\mcl{S}})$.
When $\mcl{S}\cap \mcl{T}\neq \emptyset$, then the following inequality holds due to submodularity of entropy
\begin{align}
f(\mcl{S},\beta)+f(\mcl{T},\beta) &= H(X_{\mcl{S}})+H(X_{\mcl{T}})-2(H(X_{\mcl{M}})-\beta) \nonumber  \\
&~~\geq H(X_{\mcl{S}\cup \mcl{T}})+H(X_{\mcl{S}\cap \mcl{T}})-2(H(X_{\mcl{M}})-\beta) = f(\mcl{S}\cup \mcl{T},\beta) + f(\mcl{S}\cap \mcl{T},\beta). \label{submod}
\end{align}
Inequality \eqref{submod} holds whenever $\mcl{S}\cap \mcl{T}\neq \emptyset$. To show that the function $f$ is submodular
when $\beta\geq H(X_{\mcl{M}})$ it is only left to consider the case $\mcl{S}\cap \mcl{T} = \emptyset$. Since $f(\emptyset,\beta)=0$, we have
\begin{align}
f(\mcl{S},\beta)+f(\mcl{T},\beta)&=H(X_{\mcl{S}})+H(X_{\mcl{T}})-2(H(X_{\mcl{M}})-\beta) \nonumber  \\
&~~\geq H(X_{\mcl{S}},X_{\mcl{T}})-(H(X_{\mcl{M}})-\beta)= f(\mcl{S}\cup \mcl{T},\beta). \label{submod1}
\end{align}
Inequality in \eqref{submod1} follows from the fact that
\begin{align}
H(X_{\mcl{S}})+H(X_{\mcl{T}})-H(X_{\mcl{S},\mcl{T}})=I(X_{\mcl{S}};X_{\mcl{T}})\geq 0 \geq \beta - H(X_{\mcl{M}}).
\end{align}
This completes the proof.

\section{Proof of Lemma \ref{lm:nondec}} \label{app:lm:nondec}

Let us define function $g(\mcl{M},\beta,i)$, $i=1,2,\ldots,m$ as follows
\begin{align}
g(\mcl{M},\beta,i) = \min_{\mcl{P}} \left\{ \sum_{\mcl{S}\in \mcl{P}} \beta-H(X_{\mcl{S}^c}|X_{\mcl{S}}),~~\text{s.t.}~~|\mcl{P}|=i :
\text{$\mcl{P}$ is a partition of $\mcl{M}$}\right\}. \label{lm:eq_part}
\end{align}
Function $g(\mcl{M},\beta,i)$ is linear in $\beta$ for any fixed $i=1,2,\ldots,m$. Then, the Dilworth truncation $g(\mcl{M},\beta)$ can be written as
\begin{align}
g(\mcl{M},\beta)=\min_{i=1,2,\ldots,m} g(\mcl{M},\beta,i). \label{lm:eq_part1}
\end{align}
Note that the minimization \eqref{lm:eq_part} does not depend on $\beta$ since it can be written as
\begin{align}
g(\mcl{M},\beta,i) = i(\beta-H(X_{\mcl{M}})) + \min_{\mcl{P}} \left\{ \sum_{\mcl{S}\in \mcl{P}} H(X_{\mcl{S}}),~~\text{s.t.}~~|\mcl{P}|=i :
\text{$\mcl{P}$ is a partition of $\mcl{M}$}\right\}. \label{lm:eq_part_mod}
\end{align}
\begin{figure}[h]
\begin{center}
\includegraphics[scale=0.6]{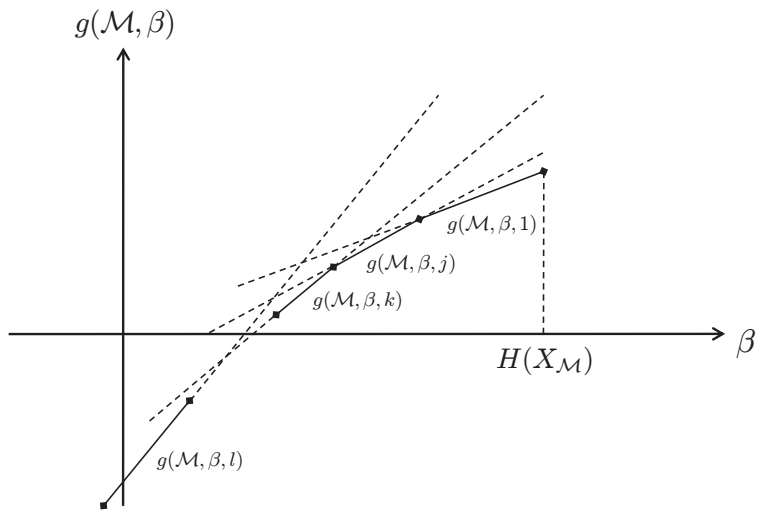}
\end{center}
\vspace{-0.2in}
\caption{Function $g(\mcl{M},\beta)$ is piecewise linear in $\beta$. It can be obtained by minimization over $m$ linear
         functions $g(\mcl{M},\beta,i)$, $i=1,2,\ldots,m$. $g(\mcl{M},\beta)$ has non-increasing slope, \emph{i.e.},
         $1\leq j\leq k \leq \cdots \leq l\leq m$.}\label{fig:g}
\end{figure}
Therefore, $g(\mcl{M},\beta)$ can be solved for any given $\beta$ by minimizing over all $m$ lines $g(\mcl{M},\beta,i)$, $i=1,2,\ldots,m$.
Hence, $g(\mcl{M},\beta)$ has at most $m$ linear segments. Moreover, due to minimization \eqref{lm:eq_part1}, $g(\mcl{M},\beta)$
has non-increasing slope (see Figure \ref{fig:g}).

To verify that the last linear segment in $g(\mcl{M},\beta)$ is of slope $1$, it is sufficient to find a point $\beta$
for which the function $g(\mcl{M},\beta)$ has slope $1$. To that end, let us consider $\beta=H(X_{\mcl{M}})$. From Lemma \ref{lm:f} it
follows that $f(\mcl{S},\beta=H(X_{\mcl{M}}))$ is submodular function, and hence,
$g(\mcl{M},\beta=H(X_{\mcl{M}})) = f(\mcl{M},\beta=H(X_{\mcl{M}})) = \beta$, where the last equality follows from \eqref{fcn:f}.
Therefore, the slope of $g(\mcl{M},\beta)$ at $\beta=H(X_{\mcl{M}})$ is $1$, which completes the proof.

\section{Optimal Partitioning w.r.t. Dilworth Truncation}\label{app:dilw}

In \cite{NKI10} it was shown how to obtain an optimal partition $\mcl{P}(\beta)$ of the set $\mcl{M}$ w.r.t. \eqref{dilw1} from the modified Edmond's algorithm.
Here we provide intuition behind these results.
From Remark~\ref{rmk:modedm_base} it follows that $g(\mcl{M},\beta)$ is the optimal value of the optimization problem LP$_2(\beta)$.
As we pointed out in Section~\ref{sec:comb}, in each iteration $i$ of the modified  Edmond's algorithm,
we obtain a set $\mcl{S}_i$
for which the inequality constraint in $P(f,\beta)$ holds with equality. In the next claim we state a result that is crucial
for obtaining an optimal partition of $\mcl{M}$ with respect to Dilworth truncation of $f(\mcl{M},\beta)$.
\begin{claim}\label{cl2}
For an optimal solution $\mbf{Z}$ of the problem LP$_2(\beta)$, if $Z(\mcl{S}_1)=f(\mcl{S}_1)$, and
$Z(\mcl{S}_1)=f(\mcl{S}_2)$ then $Z(\mcl{S}_1 \cup \mcl{S}_2)=f(\mcl{S}_1 \cup \mcl{S}_2)$.
\end{claim}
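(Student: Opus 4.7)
My plan is to prove Claim 2 by a standard \emph{uncrossing} argument, which is the natural tool whenever one wants to combine two tight sets with respect to an (intersecting) submodular function. Note that the claim should be read under the assumption $\mathcal{S}_1 \cap \mathcal{S}_2 \neq \emptyset$; indeed, the example in Section~\ref{sec:comb} with $f(\{1\}) = 4,\ f(\{2\}) = 3,\ f(\{1,2\}) = 8$ shows that for the maximizer $\mathbf{Z} = (4,3)$ both singletons are tight but $Z(\mathcal{M}) < f(\mathcal{M})$, so disjointness is exactly the case intersecting submodularity fails to control.

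The proof rests on four simple ingredients: (i) modularity of the set function $Z$, which gives the identity $Z(\mathcal{S}_1) + Z(\mathcal{S}_2) = Z(\mathcal{S}_1 \cup \mathcal{S}_2) + Z(\mathcal{S}_1 \cap \mathcal{S}_2)$; (ii) the intersecting submodular inequality from Definition~\ref{def:inter_submodular}, valid because $\mathcal{S}_1 \cap \mathcal{S}_2 \neq \emptyset$, namely $f(\mathcal{S}_1) + f(\mathcal{S}_2) \geq f(\mathcal{S}_1 \cup \mathcal{S}_2) + f(\mathcal{S}_1 \cap \mathcal{S}_2)$; (iii) the polyhedral constraints $Z(\mathcal{S}_1 \cup \mathcal{S}_2) \leq f(\mathcal{S}_1 \cup \mathcal{S}_2)$ and $Z(\mathcal{S}_1 \cap \mathcal{S}_2) \leq f(\mathcal{S}_1 \cap \mathcal{S}_2)$, which hold since $\mathbf{Z} \in P(f,\beta)$; and (iv) the hypothesis $Z(\mathcal{S}_1) = f(\mathcal{S}_1)$ and $Z(\mathcal{S}_2) = f(\mathcal{S}_2)$.

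Chaining these together yields
\begin{align}
Z(\mathcal{S}_1 \cup \mathcal{S}_2) + Z(\mathcal{S}_1 \cap \mathcal{S}_2)
&= Z(\mathcal{S}_1) + Z(\mathcal{S}_2)
= f(\mathcal{S}_1) + f(\mathcal{S}_2) \nonumber \\
&\geq f(\mathcal{S}_1 \cup \mathcal{S}_2) + f(\mathcal{S}_1 \cap \mathcal{S}_2)
\geq Z(\mathcal{S}_1 \cup \mathcal{S}_2) + Z(\mathcal{S}_1 \cap \mathcal{S}_2). \nonumber
\end{align}
Hence equality must hold throughout, and the coordinatewise comparison forces $Z(\mathcal{S}_1 \cup \mathcal{S}_2) = f(\mathcal{S}_1 \cup \mathcal{S}_2)$ (and simultaneously $Z(\mathcal{S}_1 \cap \mathcal{S}_2) = f(\mathcal{S}_1 \cap \mathcal{S}_2)$, which is a useful by-product).

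There is no genuine obstacle here since optimality of $\mathbf{Z}$ is not actually used in the uncrossing step (only feasibility $\mathbf{Z} \in P(f,\beta)$ matters); optimality is what guarantees the existence of enough tight sets upstream in the modified Edmond's algorithm, but the combining step itself is purely polyhedral. The only subtlety worth flagging is the hidden nonempty-intersection hypothesis, without which intersecting submodularity gives no information and, as the example above illustrates, the conclusion can genuinely fail.
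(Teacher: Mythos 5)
Your uncrossing argument is correct and essentially the same as the paper's proof in Appendix~\ref{app:dilw}: the paper substitutes the explicit formula $f(\mcl{S},\beta)=\beta-H(X_{\mcl{M}})+H(X_{\mcl{S}})$ and invokes submodularity of entropy, whereas you invoke the intersecting-submodular inequality of Lemma~\ref{lm:f} directly, which is the same fact one abstraction level up. Your caveat that $\mcl{S}_1\cap\mcl{S}_2\neq\emptyset$ is genuinely required is a sound observation: the paper's own proof silently uses $f(\mcl{S}_1\cap\mcl{S}_2,\beta)=\beta-H(X_{\mcl{M}})+H(X_{\mcl{S}_1\cap\mcl{S}_2})$, which only holds for a nonempty intersection, and Algorithm~\ref{alg:partition} only ever applies the claim to intersecting sets.
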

\begin{proof}
For an optimal rate vector $\mbf{Z}$ of the problem LP$_2(\beta)$ we have
\begin{align}
Z(\mcl{S}_i) &= \beta-H(X_{\mcl{S}_i^c}|X_{\mcl{S}_i}) = \beta-H(X_{\mcl{M}})+H(X_{\mcl{S}_i}), \label{app:eq1} \\
Z(\mcl{S}_j) &= \beta-H(X_{\mcl{S}_j^c}|X_{\mcl{S}_j}) = \beta-H(X_{\mcl{M}})+H(X_{\mcl{S}_j}). \label{app:eq2}
\end{align}
Since LP$_2(\beta)$ represents optimization over the polyhedron $P(f,\mcl{\beta})$ it holds that
\begin{align}
Z(\mcl{S}_i \cup \mcl{S}_j) & \leq \beta - H(X_{\mcl{M}}) + H(X_{\mcl{S}_i},X_{\mcl{S}_j}), \label{app:eq3} \\
Z(\mcl{S}_i \cap \mcl{S}_j) & \leq \beta - H(X_{\mcl{M}}) + H(X_{\mcl{S}_i \cap \mcl{S}_j}). \label{app:eq4}
\end{align}
From \eqref{app:eq1} and \eqref{app:eq2} it follows that
\begin{align}
Z(\mcl{S}_i \cup \mcl{S}_j) &= Z(\mcl{S}_i)+Z(\mcl{S}_j)-Z(\mcl{S}_i \cap \mcl{S}_j) \nonumber \\
                            &= \beta-H(X_{\mcl{M}})+H(X_{\mcl{S}_i})+\beta-H(X_{\mcl{M}})+H(X_{\mcl{S}_j}) - Z(\mcl{S}_i \cap \mcl{S}_j) \nonumber \\
                            &\geq \beta-H(X_{\mcl{M}})+H(X_{\mcl{S}_i})+H(X_{\mcl{S}_j})-H(X_{\mcl{S}_i \cap \mcl{S}_j}), \label{app:eq5}
\end{align}
where the last step in \eqref{app:eq5} follows from \eqref{app:eq4}. Due to submodluarity of entropy it directly follows from \eqref{app:eq5} that
\begin{align}
Z(\mcl{S}_i \cup \mcl{S}_j) \geq \beta - H(X_{\mcl{M}}) + H(X_{\mcl{S}_i},X_{\mcl{S}_j}). \label{app:eq6}
\end{align}
Comparing \eqref{app:eq3} and \eqref{app:eq6} it must hold that
\begin{align}
Z(\mcl{S}_i \cup \mcl{S}_j) = \beta - H(X_{\mcl{M}}) + H(X_{\mcl{S}_i},X_{\mcl{S}_j}).
\end{align}
\end{proof}

Results of Claim \ref{cl2} represent a key building block for obtaining an optimal partition $\mcl{P}(\beta)$
for some fixed $\beta$ (see Algorithm \ref{alg:partition}).
From Remark~\ref{rmk:modedm_base} it follows that for the maximizer rate vector $\mbf{Z}$ of the problem LP$_2(\beta)$ it holds that
\begin{align}
Z(\mcl{S})=f(\mcl{S},\beta),~~\forall \mcl{S}\in \mcl{P}(\beta). \label{eq:tight}
\end{align}
From Claim \ref{cl2} and \eqref{eq:tight}
it follows that for the sets $S_i$ and $S_j$, which are the minimizer sets in iterations $i$ and $j$ of the modified Edmond's algorithm,
if $\mcl{S}_i \cap \mcl{S}_j \neq \emptyset$, then $\mcl{S}_i \cup \mcl{S}_j$ is a subset of the one of the
partition sets in $\mcl{P}(\beta)$. Therefore, in each iteration of the modified Edmond's algorithm, whenever the minimizer set
intersects some of the previously obtained sets, they must all belong to the same partition set (see steps $4$ and $5$ in Algorithm \ref{alg:partition}).
\begin{algorithm}
\caption{Optimal Partition \cite{NKI10}}
\label{alg:partition}
\begin{algorithmic}[1]
\STATE Let $j(1),j(2),\ldots,j(m)$ be any ordering of $\{1,2,\ldots,m\}$, and $\mcl{A}_i=\{j(1),j(2),\ldots,j(i)\}$.
\STATE Initialize $\mcl{P}^{0}=0$.
\FOR {$i=1$ to $m$}
\STATE Let $\mcl{S}_i$ be the minimizer of
       \begin{align}
       Z_{j(i)}&=\min \{f(\mcl{S},\beta)-Z(\mcl{S}) : j(i)\in \mcl{S},~~\mcl{S}\subseteq \mcl{A}_i \}. \nonumber
       \end{align}
\STATE $\mcl{T}_i = \mcl{S}_i \cup [\cup\{\mcl{V} : \mcl{V}\in \mcl{P}^{i-1},~\mcl{S}_i \cap \mcl{V} \neq \emptyset\}]$
\STATE $\mcl{P}^{i} = \{\mcl{T}_i\} \cup \{\mcl{V}: \mcl{V}\in \mcl{P}^{i-1},~\mcl{S}_i \cap \mcl{V} = \emptyset\}$
\ENDFOR
\STATE $\mcl{P}(\beta) = \mcl{P}^{m}$.
\end{algorithmic}
\end{algorithm}
Algorithm \ref{alg:partition} compared to the modified Edmond's Algorithm, has two additional steps in each iteration (step $5$ and step $6$).
Thus, the order of complexity of both algorithms is the same  and it is $\mcl{O}(m \cdot SFM(m))$.
The complete explanation of the Algorithm \ref{alg:partition} can be found in \cite{NKI10}. 
\section{Proof of Lemma~\ref{lm:h_prop}} \label{app:lmh}

Function $h(\beta)$ is given by
\begin{align}
h(\beta) = \min_{\mbf{R}} \sum_{i=1}^m \alpha_i R_i~~~\text{s.t.}~~R(\mcl{M})=\beta,~~R(\mcl{S})\geq H(X_{\mcl{S}}|X_{\mcl{S}^c}),
~~\forall \mcl{S}\subset \mcl{M}. \label{eq:mlp3}
\end{align}

\subsection*{Continuity of $h(\beta)$}

Let the rate tuple $(R^{(1)}_1,R^{(1)}_2,\ldots,R^{(1)}_m)$ corresponds to the minimizer $\beta_1$ of the function
$h(\beta)$, \emph{i.e.}, $\sum_{i=1}^m R^{(1)}_i=\beta_1$.
Then, for a point $\beta_2=\beta_1+\Delta \beta$ let us construct the rate tuple
\begin{align}
R^{(2)}_i=
\begin{cases}
R^{(1)}_i+\Delta \beta   &   \text{if $\beta=1$}, \\
R^{(1)}_i                &   \text{if $\beta \neq 1$}. \\
\end{cases}
\end{align}
Then $(R^{(2)}_1,R^{(2)}_2,\ldots,R^{(2)}_m)$ is a feasible rate tuple for the optimization problem \eqref{eq:mlp3}
when $\sum_{i=1}^m R^{(2)}_i=\beta_2$.
Moreover, it holds that $h(\beta_2)-h(\beta_1)\leq \alpha_1 \Delta \beta$. Hence,
\begin{align}
|\beta_2-\beta_1|\leq \Delta \beta \Rightarrow |h(\beta_2)-h(\beta_1)|\leq \alpha_1 \Delta \beta,
\end{align}
Since $\alpha_1<\infty$ by the model assumption, it immediately follows that the function $h(\beta)$ is continuous.

\subsection*{Convexity of $h(\beta)$}

Consider two points $\beta_1$ and $\beta_2$ such that $\beta_i \geq R_{CO}(\mbf{1})$, $i=1,2$. We want
to show that for any $\lambda\in [0,1]$ it holds that
$h(\lambda\beta_1+(1-\lambda)\beta_2)\leq \lambda h(\beta_1)+(1-\lambda) h(\beta_2)$. To that end, let
$\mbf{R}^{(1)}$ and $\mbf{R}^{(2)}$ be the optimal rate tuples w.r.t. $h(\beta_1)$ and
$h(\beta_2)$, respectively. Now, we show that $\mbf{R}=\lambda \mbf{R}^{(1)}+(1-\lambda)\mbf{R}^{(2)}$
is feasible rate tuple for the problem \eqref{eq:mlp3} when $\beta = \lambda\beta_1+(1-\lambda) \beta_2$.

Since $R^{(1)}(\mcl{M})=\beta_1$ and $R^{(2)}(\mcl{M})=\beta_2$, it follows that
\begin{align}
R(\mcl{M})=\lambda R^{(1)}(\mcl{M})+(1-\lambda) R^{(2)}(\mcl{M})=\lambda\beta_1+(1-\lambda) \beta_2. \label{lmc:1}
\end{align}
Since $R^{(1)}(\mcl{S})\geq H(X_{\mcl{S}}|X_{\mcl{S}^c})$, $R^{(2)}(\mcl{S})\geq H(X_{\mcl{S}}|X_{\mcl{S}^c})$,
$\forall \mcl{S} \subset \mcl{M}$, we have
\begin{align}
R(\mcl{S})=\lambda R^{(1)}(\mcl{S}) + (1-\lambda) R^{(2)}(\mcl{S}) \geq H(X_{\mcl{S}}|X_{\mcl{S}^c}). \label{lmc:2}
\end{align}
From \eqref{lmc:1} and \eqref{lmc:2} it follows that $\mbf{R}$ is a feasible rate tuple
w.r.t. optimization problem \eqref{eq:mlp3}. Therefore,
$\sum_{i=1}^m \alpha_i R_i \geq h(\lambda\beta_1+(1-\lambda)\beta_2)$. Hence,
\begin{align}
h(\lambda\beta_1+(1-\lambda)\beta_2)\leq \lambda h(\beta_1)+(1-\lambda) h(\beta_2),
\end{align}
which completes the proof.

\section{Proof of Lemma \ref{lm:rco}}
\label{app:lm:rco}
For $\beta=R_{CO}(\mbf{1})$ it holds that $|\mcl{P}(\beta)|=1$. Since $\beta=R_{CO}(\mbf{1})$ is also a breakpoint in $g(\mcl{M},\beta)$
(see Lemma \ref{lm:nondec}),
we have that $|\mcl{P}(\beta)|\geq 2$. In other words, optimal partition of the set $\mcl{M}$ is not unique. From \eqref{dilw1}
and \eqref{eq:rco} we can write expression for $R_{CO}(\mbf{1})$ as follows
\begin{align}
R_{CO}(\mbf{1})= |\mcl{P}(R_{CO}(\mbf{1}))| R_{CO}(\mbf{1})-\sum_{\mcl{S}\in \mcl{P}(R_{CO}(\mbf{1}))} H(X_{\mcl{S}^c}|X_{\mcl{S}}). \label{lm:eq}
\end{align}
Rearranging terms in \eqref{lm:eq} we get
\begin{align}
(|\mcl{P}(R_{CO}(\mbf{1}))|-1)R_{CO}(\mbf{1})=\sum_{\mcl{S}\in \mcl{P}(R_{CO}(\mbf{1}))} H(X_{\mcl{S}})
                                               -|\mcl{P}(R_{CO}(\mbf{1}))| H(X_{\mcl{M}}).
\end{align}
Dividing both sides of equality by $(|\mcl{P}(R_{CO}(\mbf{1}))|-1)$ we obtain
\begin{align}
R_{CO}(\mbf{1}) = H(X_{\mcl{M}})
                 -\frac{\sum_{\mcl{S}\in\mcl{P}(R_{CO}(\mbf{1}))}H(X_{\mcl{S}})-H(X_{\mcl{M}})}{|\mcl{P}(R_{CO}(\mbf{1}))|-1}.
\end{align}
This completes the proof of \eqref{lm1:rco} since $|\mcl{P}(R_{CO}(\mbf{1}))|\geq 2$.
\section{Proof of Theorem \ref{thm:netcode}} \label{app:netcode}

We prove this theorem by showing that for any rate tuple $\mbf{R}$ that belongs to the rate region $\mcl{R}_{de}$, defined in \eqref{cut_set_datexc},
there exists a network coding solution to the data exchange problem.

In the data exchange problem, each of the $m$ users get to observe some collection of linear combinations of the data packets
$w_1,w_2,\ldots,w_N$.
\begin{align}
X_i = \mbf{A}_i \cdot \mbf{W},~~\forall i \in \mcl{M},
\end{align}
where $\mbf{A}_i \in \mathbb{F}_p^{\ell_i \times N}$, and $\mbf{W}=\left[
                 \begin{array}{cccc}
                   w_1 & w_2 & \ldots & w_N \\
                 \end{array}
               \right]^T \in \mathbb{F}_{p^k}^N$.
\begin{figure}[h]
\begin{center}
\includegraphics[scale=0.52]{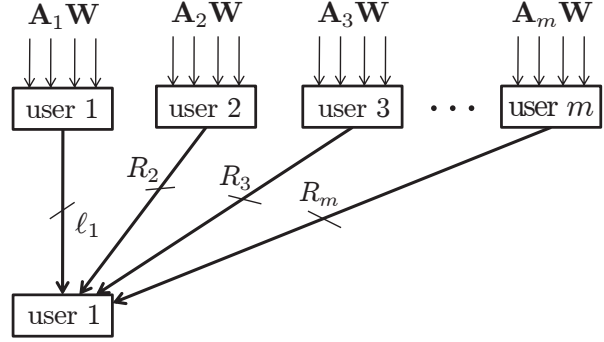}
\end{center}
\vspace{-0.2in}
\caption{Data exchange problem can be interpreted as a multicast problem. Considering user $1$ as a receiver, it obtains
the side information from himself through the link of capacity $\ell_1$, and it receives transmissions from the other
users through the links of capacities $R_i$, $i=2,3,\ldots,m$.} \label{fig:netcode}
\end{figure}

Since each user is interested in recovering all the data packets $\mbf{W}$, one can convert the data exchange problem
into a multicast network problem. For instance, considering the user $1$ as a receiver (see Figure~\ref{fig:netcode}),
it obtains the side information from himself (thus the link of capacity $\ell_1$ from user $1$ to user $1$),
and it receives transmissions from the other users through the links of capacities $R_i$, $i=2,3,\ldots,m$.
But, in order to set up the problem this way it is necessary to know
how many symbols in $\mathbb{F}_{q^n}$ each user broadcasts, \emph{i.e.}, we need to know the capacities $R_i$ of the links.

In \cite{ho2006random}, the authors proved necessary and sufficient conditions for the existence
of the network coding solution when the sources are linearly correlated. In the following Lemma we state their result
adapted to the data exchange problem with linearly coded packets.
Let us denote by $\mbf{A}_j(\mcl{S}_i,\star)$ a sub-matrix of $\mbf{A}_j$ with rows indexed by the elements of the set $\mcl{S}_i$.
\begin{lemma}\label{netcode:lm1}
In the data exchange problem with linearly coded packets, a rate tuple $(R_1,R_2,\ldots,R_m)$ can be achieved by network coding if and only if
\begin{align}
&{\text {rank}}(\mbf{A}_1, \mbf{A}_2(\mcl{S}^{(1)}_2,\star), \ldots, \mbf{A}_m(\mcl{S}^{(1)}_m,\star))=N, \label{netcode:eq1} \\
&{\text {rank}}(\mbf{A}_1(\mcl{S}^{(2)}_1,\star), \mbf{A}_2, \ldots, \mbf{A}_m(\mcl{S}^{(2)}_m,\star))=N, \label{netcode:eq2} \\
&~~~\vdots~~~~~~~~~~~~~~~~~~~~~~~~~~~~~~~~~~~~~~~~~~~~~\vdots  \nonumber \\
&{\text {rank}}(\mbf{A}_1(\mcl{S}^{(m)}_1,\star), \ldots, \mbf{A}_{m-1}(\mcl{S}^{(m)}_{m-1},\star),\mbf{A}_m)=N, \label{netcode:eq3}
\end{align}
such that $|\mcl{S}^{(j)}_i|=R_i$, $\forall j \in \{1,2,\ldots,m\}$, $\forall i \in \{1,2,\ldots,m\} \setminus \{j\}$,
where  $\mcl{S}^{(j)}_i\subseteq \{1,2,\ldots, \ell_i\}$.
\end{lemma}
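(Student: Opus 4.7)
The plan is to recast the data exchange problem as a multiple-source multicast instance with linearly correlated sources and then read the stated rank conditions as the per-receiver cut-set conditions of that instance. Concretely, take the network of Figure~\ref{fig:netcode}: for each user $i$ introduce a transmitter node that is fed the rows of $\mbf{A}_i$ as source symbols, and for each user $j$ introduce a receiver node with an incoming self-loop of capacity $\ell_j$ (its own side information) together with incoming edges of capacity $R_i$ from every other transmitter $i\neq j$. A rate tuple $(R_1,\ldots,R_m)$ is achievable by linear network coding in the data exchange problem precisely when this multicast instance admits a linear solution, which is the setting of \cite{ho2006random}.

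For the ``only if'' direction, suppose a network coding solution exists. User $i$'s transmission is an $\mathbb{F}_q$-linear function of its observation, hence of the form $\mbf{T}_i = \mbf{B}_i \mbf{A}_i \mbf{W}$ for some $\mbf{B}_i \in \mathbb{F}_q^{R_i\times \ell_i}$. User $j$ recovers $\mbf{W}$ from $\mbf{A}_j$ together with $\{\mbf{T}_i : i\neq j\}$ iff the stacked matrix has rank $N$. Because every row of $\mbf{T}_i$ lies in the row span of $\mbf{A}_i$, I can select a subset $\mcl{S}^{(j)}_i \subseteq \{1,\ldots,\ell_i\}$ of rows of $\mbf{A}_i$ by first taking a basis of the row space of $\mbf{T}_i$ and then padding with arbitrary remaining rows of $\mbf{A}_i$ to reach cardinality exactly $R_i$; the resulting row set contains the span of $\mbf{T}_i$, so stacking it with $\mbf{A}_j$ still yields a rank-$N$ matrix, which is exactly \eqref{netcode:eq1}--\eqref{netcode:eq3}.

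For the ``if'' direction, suppose subsets $\mcl{S}^{(j)}_i$ satisfying the rank conditions are given. Treat the free entries of the matrices $\mbf{B}_1,\ldots,\mbf{B}_m$ as indeterminates over $\mathbb{F}_q$. For each receiver $j$, the event ``user $j$ decodes'' is equivalent to the nonvanishing of the determinant of an appropriate $N\times N$ submatrix of the global transfer matrix; the existence of $\mcl{S}^{(j)}_i$ certifies that this determinant, viewed as a polynomial in the indeterminates, is \emph{not} identically zero (one recovers it by specializing $\mbf{B}_i$ so that it picks exactly the rows indexed by $\mcl{S}^{(j)}_i$). The product of these $m$ polynomials is then a nonzero polynomial of bounded degree, and by a Schwartz--Zippel style argument (as invoked in \cite{ho2006random} and made algorithmic via simultaneous matrix completion in \cite{H05}) there exists an assignment over a sufficiently large base field that makes every determinant nonzero simultaneously. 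Such an assignment furnishes an explicit network coding scheme achieving $(R_1,\ldots,R_m)$.

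The principal difficulty lies in the ``if'' direction: the rank conditions only guarantee a valid row selection \emph{per} receiver, whereas a single code must serve all $m$ receivers at once. The multicast correlated-sources framework of \cite{ho2006random}, together with the matrix-completion algorithmic variant in \cite{H05}, is precisely the tool that resolves this simultaneity by collapsing the problem into a single polynomial non-identity; invoking it is the cleanest way to close the argument without re-deriving the algebraic network coding machinery from scratch.
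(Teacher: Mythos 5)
The paper does not actually prove this lemma: it is stated as an adaptation of a theorem from \cite{ho2006random} and invoked as a black box, so there is no in-paper proof to compare against. Judging your argument on its own terms, the ``if'' direction is sound: specializing $\mbf{B}_i$ to the $0$--$1$ row-selection matrix indexed by $\mcl{S}^{(j)}_i$ witnesses that the $j$-th transfer determinant is a nonzero polynomial in the free entries, the product over all $j$ is then a nonzero polynomial over an integral domain, and a Schwartz--Zippel/matrix-completion argument over a field of size exceeding $m$ yields a simultaneous root-avoiding assignment, which is exactly the route the paper itself takes via \cite{H05}.

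The ``only if'' direction, however, has a genuine gap. A basis of the row space of $\mbf{T}_i=\mbf{B}_i\mbf{A}_i$ consists of rows of $\mbf{T}_i$, i.e.\ of \emph{linear combinations} of rows of $\mbf{A}_i$, not of rows of $\mbf{A}_i$ themselves, so it does not give a subset $\mcl{S}^{(j)}_i\subseteq\{1,\ldots,\ell_i\}$. Worse, there may be \emph{no} set of $R_i$ rows of $\mbf{A}_i$ whose span contains the row space of $\mbf{T}_i$: with $\mbf{A}_i=\mbf{I}_2$, $R_i=1$, $\mbf{B}_i=(1,\,1)$, the single row $\mbf{T}_i=(1,\,1)$ is not in the span of either row of $\mbf{A}_i$ alone, so your sentence ``the resulting row set contains the span of $\mbf{T}_i$'' is false and the rank conclusion does not follow. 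The correct repair is a Steinitz exchange rather than a containment argument: build a basis $V$ of $\mathbb{F}_q^N$ from rows of $\mbf{A}_j$ and rows of the various $\mbf{T}_i$; for any $v\in V$ coming from $\mbf{T}_i$, $v$ lies in the row span of $\mbf{A}_i$ while $v\notin\mathrm{span}(V\setminus\{v\})$, so some actual row $a$ of $\mbf{A}_i$ also lies outside $\mathrm{span}(V\setminus\{v\})$ and may be swapped in for $v$. Iterating replaces every $\mbf{T}_i$-row with a distinct $\mbf{A}_i$-row while $V$ remains a basis, the number of rows drawn from $\mbf{A}_i$ never exceeds $R_i$, and padding to size exactly $R_i$ yields the required $\mcl{S}^{(j)}_i$. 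As written, your shortcut skips this exchange step and the ``only if'' claim is not established.
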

Each equation in \eqref{netcode:eq1}-\eqref{netcode:eq3} corresponds to the selection of $N$ disjoint paths from
the users $1$ through $m$, to one of the receiving users (see Figure~\ref{fig:netcode} where user $1$ is the receiving user).
Hence, for a rate tuple $(R_1,R_2,\ldots,R_m)$ that satisfies the conditions in Lemma~\ref{netcode:lm1}
there exists a network coding solution to the data exchange problem.
Now, let us consider the equations \eqref{netcode:eq1} through \eqref{netcode:eq3}. The idea is to identify
the set of all achievable solutions for each receiver, \emph{i.e.}, the goal is to find the collection of sets $\{\mcl{S}_i^{(j)}\}_{i=1,i\neq j}^m$
for each $j \in \{1,2,\ldots,m\}$ which satisfy the conditions of the $j^{th}$ row in \eqref{netcode:eq1}-\eqref{netcode:eq3}.
To that end let us consider Algorithm~\ref{alg:greedy} (see \cite{schrijver2003combinatorial}).

\begin{algorithm}
\caption{Greedy Algorithm}
\label{alg:greedy}
\begin{algorithmic}[1]
\STATE Initialize $j(1) \in \{1,2,\ldots,m\}$, $\mbf{S}=\mbf{A}_{j(1)}$.
\STATE Let $j(2),j(3),\ldots,j(m)$ be any ordering of $\{1,2,\ldots,m\}\setminus \{j(1)\}$.
\FOR {$i=2$ to $m$}
\STATE Initialize $\mcl{S}^{(j(1))}_{j(i)}=\emptyset$.
\FOR {$k=1$ to $\ell_{j(i)}$}
\IF{
$\text{rank}(\mbf{S},\mbf{A}_{j(i)}(k,\star)) = \text{rank}\{\mbf{S}\} + \text{rank}\{\mbf{A}_{j(i)}(k,\star)\}$
}
\STATE
$\mbf{S}=\left[\begin{array}{c}
                            \mbf{S} \\
                            \mbf{A}_{j(i)}(k,\star) \\
                          \end{array}
                        \right],~~~
\mcl{S}^{(j(1))}_{j(i)}=\mcl{S}^{(j(1))}_{j(i)} \cup \{k\}$.
\ENDIF
\ENDFOR
\ENDFOR
\end{algorithmic}
\end{algorithm}
\newpage
It is not hard to conclude that Algorithm~\ref{alg:greedy} satisfies the maximum rank property, \emph{i.e.},
for every $j(1)\in\{1,2,\ldots,m\}$ it holds that
\begin{align}
&\text{rank}(\mbf{A}_{j(1)},\mbf{A}_{j(2)}(\mcl{S}^{(j(1))}_{j(2)},\star),\ldots,\mbf{A}_{j(i)}(\mcl{S}^{(j(1))}_{j(i)},\star))
\nonumber \\
&= \text{rank}(\mbf{A}_{j(1)},\mbf{A}_{j(2)},\ldots,\mbf{A}_{j(i)}),~~i=2,3,\ldots,m
\end{align}
Therefore, for one particular ordering $j(1),j(2),\ldots,j(m)$ of $1,2,\ldots,m$, we have that
\begin{align}
R_{j(i)} = \text{rank}(\mbf{A}_{j(1)},\mbf{A}_{j(2)},\ldots,\mbf{A}_{j(i)})-\text{rank}(\mbf{A}_{j(1)},\mbf{A}_{j(2)},\ldots,\mbf{A}_{j(i-1)}),
           ~~i=2,3,\ldots,m. \label{netcode:eq4}
\end{align}
From \eqref{netcode:eq4} it follows that
\begin{align}
\sum_{i=t}^m R_{j(i)} &= \text{rank}(\mbf{A}_{j(1)},\mbf{A}_{j(2)},\ldots,\mbf{A}_{j(m)})
                        -\text{rank}(\mbf{A}_{j(1)},\mbf{A}_{j(2)},\ldots,\mbf{A}_{j(t-1)}) \nonumber \\
&= \text{rank}(\mbf{A}_{j(t)},\mbf{A}_{j(t+1)},\ldots,\mbf{A}_{j(m)}|\mbf{A}_{j(1)},\mbf{A}_{j(2)},\ldots,\mbf{A}_{j(t-1)}),
~~t=2,3,\ldots,m.
\end{align}
Since the feasibility condition has to be satisfied for any ordering, we conclude that if
for every ordering \\ $j(1),j(2),\ldots, j(m)$ of $1,2,\ldots,m$
\begin{align}
\sum_{i=t}^m R_{j(i)}
\geq \text{rank}(\mbf{A}_{j(t)},\mbf{A}_{j(t+1)},\ldots,\mbf{A}_{j(m)}|\mbf{A}_{j(1)},\mbf{A}_{j(2)},\ldots,\mbf{A}_{j(t-1)}),
~~t=2,3,\ldots,m, \label{netcode:cond}
\end{align}
then $(R_1,R_2,\ldots,R_m)$ can be achieved by network coding.
It is not hard to see that the rate region in \eqref{netcode:cond} is equivalent to
\begin{align}
\sum_{i \in \mcl{S}} R_i \geq \text{rank}(\mbf{A}_{\mcl{S}}|\mbf{A}_{\mcl{S}^c}),~~\forall \mcl{S}\subset \{1,2,\ldots,m\}. \label{netcode:cut}
\end{align}
Thus, we showed that the cut-set bounds \eqref{netcode:cut} for the data exchange problem with linearly coded packets can be achieved via network coding.


\end{document}